\documentclass[letterpaper, 11pt]{article}


\usepackage{CJKutf8}

\usepackage[whole]{bxcjkjatype}



%
%
\usepackage{bookmark}
\usepackage{type1cm}
\usepackage{enumerate}
\usepackage{algorithm}
\usepackage{algorithmicx}
\usepackage[noend]{algpseudocode}
\usepackage[sort]{natbib}


\usepackage{url}
\usepackage{hyperref}
\usepackage{threeparttable}
\usepackage[margin=1in]{geometry}
\usepackage{amsmath,amssymb,amsfonts,setspace} 
\usepackage{amsthm} 
\usepackage{latexsym}
\usepackage{comment}
\usepackage{color}
\usepackage{xcolor}
\usepackage{bm}
\usepackage[normalem]{ulem}
\usepackage{mathtools}
\usepackage{stmaryrd} 

\usepackage[geometry]{ifsym}

\setcitestyle{numbers,square,comma}

\hypersetup{
setpagesize=false,
 bookmarksnumbered=true,%
 bookmarksopen=true,%
 colorlinks=true,%
 linkcolor=red,
 citecolor=blue,
}




\newtheoremstyle{mystyle}
    {}
    {}
    {\normalfont}
    {}
    {\bf}
    {}
    { }
    {}%

\theoremstyle{mystyle}
\newtheorem{theorem}{Theorem}
\newtheorem{lemma}{Lemma}
\newtheorem{proposition}{Proposition}
\newtheorem{corollary}{Corollary}
\newtheorem{definition}{Definition}
\newtheorem{example}{Example}

\setlength\abovecaptionskip{2truemm}



\newcommand{\Polylog}{\mathrm{polylog}}

\newcommand{\Gcal}{\mathcal{G}} %
\newcommand{\Tcal}{\mathcal{T}} %
\newcommand{\Pcal}{\mathcal{P}} %

\newcommand{\Ccolor}[1]{C_{\mathsf{col}(#1)}}
\newcommand{\Ccount}[1]{C_{\mathsf{cnt}(#1)}}

\newcommand{\TW}[1]{\tau} %
\newcommand{\CH}{\mathsf{ch}} %
\newcommand{\CHT}{\ensuremath{\mathsf{cht}}} %
\newcommand{\St}{c^{\ast}} %

\newcommand{\Comm}[1]{\llbracket{#1}\rrbracket} %

\newcommand{\Eout}[1]{E^{\mathrm{out}}_{#1}}

\newcommand{\Mem}{M}
\newcommand{\Trans}{\delta}

\newcommand{\Auxinit}{\triangledown}

\newcommand{\Par}[1]{p({#1})} %

\newcommand{\Dlabel}{\mathsf{la}} %
\newcommand{\Ddec}{\mathsf{dec}}  %
\newcommand{\Slabel}{\mathsf{sla}}%
\newcommand{\Sdec}{\mathsf{sdec}} %

\newcounter{cntLemmaNumber}
\newenvironment{rlemma}[1]{%
\setcounter{cntLemmaNumber}{\thelemma}
\setcounterref{lemma}{#1}
\addtocounter{lemma}{-1}
\begin{lemma}
}{%
\end{lemma}
\setcounter{lemma}{\thecntLemmaNumber}
}
\newcounter{cntTheoremNumber}
\newenvironment{rtheorem}[1]{%
\setcounter{cntTheoremNumber}{\thetheorem}
\setcounterref{theorem}{#1}
\addtocounter{theorem}{-1}
\begin{theorem}
}{%
\end{theorem}
\setcounter{theorem}{\thecntTheoremNumber}
}
\newcounter{cntPropositionNumber}

\allowdisplaybreaks[2]


\title{Fully Polynomial-Time Distributed Computation \\ in Low-Treewidth Graphs}

\author{
Taisuke Izumi\footnote{Graduate School of Information Science and Technology, Osaka University, 1-5, Yamadaoka, Suita, Osaka, 565-0871, Japan. E-mail: t-izumi@ist.osaka-u.ac.jp.}
\and Naoki Kitamura\footnote{Graduate School of Engineering, Nagoya Institute of Technology, Gokiso-cho, Showa-ku, Nagoya, Aichi, 466-8555, Japan. E-mail: ktmr522@yahoo.co.jp.}
\and Takamasa Naruse\footnote{Graduate School of Engineering, Nagoya Institute of Technology, Gokiso-cho, Showa-ku, Nagoya, Aichi, 466-8555, Japan. E-mail: t.naruse.333@stn.nitech.ac.jp.}
\and Gregory Schwartzman\footnote{Japan Advanced Institute of Science and Technology (JAIST), 
1-1 Asahidai, Nomi, Ishikawa 923-1292 Japan. E-mail: greg@jaist.ac.jp.}
}

\date{}

\begin{document}

\maketitle

\begin{abstract}
    We consider global problems, i.e. problems that take at least diameter time, even when the bandwidth is not restricted. We show that all problems considered admit efficient solutions in low-treewidth graphs. By ``efficient'' we mean that the running time has polynomial dependence on the treewidth, a linear dependence on the diameter (which is unavoidable), and only a \emph{polylogarithmic} dependence on $n$, the number of nodes in the graph. We present the following results in the CONGEST model (where $\TW{G}$ and $D$ denote the treewidth and diameter of the graph, respectively):
    \begin{itemize}
        \item Exact single-source shortest paths (Actually, the more general problem of computing a distance labeling scheme) for weighted and directed graphs can be computed in $\tilde{O}(\TW{G}^2D + \TW{G}^5)$ rounds\footnote{The $\tilde{O}(\cdot)$ notation hides $\mathrm{polylog}(n)$ factors.}. This is the first exact algorithm for the directed single-source shortest paths problem in low-treewidth graphs attaining a $\tilde{O}(\TW{G}^{O(1)}D)$-round running time.
        \item Exact bipartite unweighted maximum matching can be computed in $\tilde{O}(\TW{G}^{4}D + \TW{G}^7)$ rounds. This is the first algorithm for a non-trivial graph class that achieves a worst case running time \emph{sublinear} in the input size.
        \item The weighted girth can be computed in $\tilde{O}(\TW{G}^2D + \TW{G}^5)$ rounds for both directed and undirected graphs. Our results are the first to imply an \emph{exponential} separation between the complexity of computing girth and diameter for a non-trivial graph class.
    \end{itemize}
    Although the above problems are seemingly unrelated, we derive all of our results using a single unified framework. Our framework
    consists of two novel technical ingredients. The first is a
    fully polynomial-time distributed tree decomposition algorithm,
    which outputs a decomposition of width $O(\TW{G}^2\log n)$ in $\tilde{O}(\TW{G}^{2}D + \TW{G}^{3})$ rounds (where $n$ is the number of nodes in the graph). 
    The second ingredient, and the technical highlight of this paper, is the novel concept of a \emph{stateful walk constraint}, which naturally defines a set of feasible walks in the input graph based on their local properties (e.g., augmenting paths). Given a stateful walk constraint, the constrained version of the shortest paths problem (or distance labeling) requires the algorithm to output the shortest \emph{constrained} walk (or its distance) for a given source and sink vertices. We show that this problem can be efficiently solved in the CONGEST model by reducing it to an \emph{unconstrained} version of the problem. 
\end{abstract}

\newpage

\section{Introduction}

\subsection{Background and Motivation}
\label{subsec:background}

The \emph{treewidth} is one of the most important graph parameters and has received a huge amount of attention in the context of centralized algorithms \citep{Cygan15}. Informally speaking, it represents 
the graph's similarity to a tree (e.g., a tree has treewidth 1, a cycle has treewidth 2, a clique on $n$ nodes has treewidth $n-1$). In the context of centralized algorithms, a vast class of 
computationally hard problems is efficiently solvable in low-treewidth graphs. Furthermore, many real-world data sets are indeed low-treewidth graphs \cite{ManiuSJ19}.

In this work, we focus on the CONGEST model of distributed computation (see Section~\ref{sec: congest} for a formal definition). The inherent bandwidth limitation in the CONGEST model precludes any efficient centralized solution by aggregating the entire 
topological information of the network, and thus our algorithms must make do with only \emph{local} information. 
The usefulness of the treewidth parameter in the CONGEST is mostly due to the recent framework of \emph{low-congestion shortcuts}~\citep{GH16-2,HIZ16-2,HLZ18}, which provides efficient group communication for a collection of subgraphs \citep{GH16-2}. Based on this framework, several algorithms have achieved near-optimal running times for various fundamental problems in low-treewidth graphs~\citep{GH16-1,GH16-2,HIZ16-2,HLZ18}. For example, minimum-spanning tree, minimum-cut approximation, and approximate undirected single-source shortest paths~\citep{GH16-2,HIZ16-2,HL18,ZGYHS22}. 

While the low-congestion 
shortcut framework is a valuable tool for designing CONGEST algorithms (and is also used in this paper), 
it is a general framework, not limited to any specific graph class. Thus, it leaves many intriguing open questions for the family of low-treewidth graphs. For example, the problems of computing an \emph{efficient} tree decomposition and designing efficient algorithms for fundamental problems
based on tree decomposition make explicit use of the structure of low-treewidth graphs. Currently, the only relevant result is due to Li~\citep{Li18}, which presents a CONGEST algorithm with a running time of $\tilde{O}(\TW{G}^{\TW{G}}D)$ rounds that computes a tree decomposition of width $O(\TW{G})$ with applications to the distributed computation of optimal solutions for classic NP-hard problems (e.g., vertex cover) whose running time exponentially depends only on the width of the computed decomposition. 
\subsection{Our Results}
\label{subsec:contribution}
We focus on the study of fully polynomial-time distributed computation 
in low-treewidth graphs. Where ``fully polynomial-time'' means that the running time of 
algorithms depends polynomially on the treewidth of the input graph, linearly on its unweighted 
diameter $D$ and only has polylogarithmic dependence on $n$. This can be seen as a distributed analogue of the recent work of \citep{Fomin18}, which considers problems whose non-parametrized complexity has a super-linear dependence on the input size and presents algorithms whose running time depends polynomially on the treewidth and only \emph{linearly} on the input size.
As most of the problems which admit an improved running time for low-treewidth graphs are \emph{global} problems,
they admit the universal lower bound of $\Omega(D)$ rounds in the distributed setting. Where the term ``universal'' means 
that the lower bound holds for {\em any} instance. Hence our analog of linear dependence on the
input size as a linear dependence on $D$ is very natural. 

Our results are not the first to achieve a fully polynomial-time dependence on the treewidth. Specifically, the shortcut-based MST and approximate min-cut algorithms mentioned in Sec.~\ref{subsec:background} require $\tilde{O}(\TW{G}D)$ rounds, which beats the $\tilde{\Omega}(\sqrt{n} + D)$-round lower bound for general graphs~\cite{DHKKNPPW11}. However, due to the general nature of the shortcut framework, it does not take full advantage of the structure of specific graph classes. This is exemplified in the recent work of \citep{GP18,LP19,Parter20} on planar graphs.
While planar graphs admit efficient shortcut-based algorithms~\citep{GH16-1,GH16-2}, it is possible to achieve improved results and tackle new problems by leveraging techniques that are specific to planar graphs.
Our research can be seen as a low-treewidth counterpart 
of these results. The main contribution of this paper is a single algorithmic framework from which we are able to derive fully polynomial-time algorithms in low-treewidth graphs for a set of seemingly unrelated problems. In what follows, we explain the details of our results.

\paragraph{Distance Labeling and Single-Source Shortest Paths}
Distance labeling (\textsf{DL}) is the problem of assigning vertices with short labels
such that it is possible to compute the distance from $u$ to $v$ only by using their labels. 
The standard single-source shortest paths problem (\textsf{SSSP}) is easily reduced to distance labeling: 
the source node simply distributes its label to all other nodes. We present a randomized 
algorithm for exact \emph{directed} \textsf{DL}, which correctly constructs all of the labels in 
$\tilde{O}(\TW{G}^{2}D + \TW{G}^5)$ rounds with high probability (whp)\footnote{Throughout this paper, the term ``with high probability'' means that the probability is at least $1 - 1/n^c$ for an arbitrary large constant
$c > 0$.}.
It is known that the undirected weighted shortest-path problem require $\tilde{\Omega}(\sqrt{n} + D)$ rounds~\citep{DHKKNPPW11} for general graphs, even for approximate solutions. The first improvement of this bound for low-treewidth graphs is due to Haeupler and Li~\citep{HL18}. Their algorithm applies to any undirected graph that admits good shortcuts, including low-treewidth graphs (the running time and approximation factor depend on the quality of the shortcut). However, the approximation factor achieved is super-constant and their results do not extend to \emph{directed} graphs. Concurrently and independently of our work, the approximation ratio and running time of \citep{HL18} was recently improved to $(1 + \epsilon)$ and $\tilde{O}(\TW{G}D n^{o(1)})$~\cite{ZGYHS22}. However, their results do not apply for \emph{exact} distance computation nor to directed graphs. 

\paragraph{Exact Maximum Matching}
We present a randomized algorithm that computes exact unweighted maximum matching in bipartite 
graphs running in $\tilde{O}(\TW{G}^4D + \TW{G}^7)$ rounds.
While the maximum matching problem has received much attention
in the context of distributed approximation~\citep{FTR06,AKO18,LPP08,BCGS17}, the complexity 
of finding an exact solution is yet unknown. For general graphs, \citep{BKS19} were the first to present a non-trivial algorithm with a running time of $O(s^2_{\max})$ rounds, where $s_{\max}$ is the size of the maximum matching. This was recently improved to $O(s^{3/2}_{\max})$ \citep{KI21}.
For the case of bipartite graphs, an algorithm by Ahmadi 
et al.~\citep{AKO18} is the only result for exact maximum matching. It achieves a running time of $\tilde{O}(s_{\max})$ rounds (and thus the worst-case bound is $\tilde{O}(n)$, even in low-treewidth graphs). 
We present the first algorithm for a non-trivial graph class which achieves a running time sublinear in $n$.

\paragraph{Weighted Girth}
We present a randomized algorithm that computes the weighted girth, $g$, of a directed or undirected input
graph in $\tilde{O}(\TW{G}^{2}D + \TW{G}^5)$ rounds with high probability.
The best known upper bound for computing $g$ in general graphs is $\tilde{O}(\min\{gn^{1-\Theta(1/g)}, n\})$ rounds~\citep{CFGLLO20}, and a lower bound of $\tilde{\Omega}(\sqrt{n} + D)$ rounds is also known for unweighted
and undirected graphs~\citep{FHW12}. The lower bound holds even for a $(2 - \epsilon)$-approximation of $g$. Planar graphs admit an efficient solution 
for computing $g$ in $\tilde{O}(D^2)$ rounds \citep{LP19}. Our approach is structural and is 
based on a very simple (randomized) reduction of girth computation to a distance labeling scheme. Our techniques are novel and we believe that they may be applicable to other graph classes. 

Our result is the first separation between the complexity of diameter
computation and girth computation in undirected and unweighted graphs. All previously known results  exhibit similar complexity bounds for both problems, i.e., $\tilde{\Omega}(n^{\Theta(1)})$-round 
lower bounds for general graphs, and $\tilde{O}(D^{O(1)})$-round upper bounds for planar graphs~\citep{LP19,Parter20}. On the other hand, there exist hard instances of constant diameter 
and logarithmic treewidth for which computing the diameter in the CONGEST model requires $\Omega(n)$ rounds~\citep{AHK16}, contrasted with our algorithm for computing girth. That is, graphs of logarithmic treewidth and diameter are the first non-trivial graph class that exhibits an exponential separation in the round complexity for these two fundamental problems.

\subsection{Our Framework}
All of our algorithms are a direct result of a single unified framework.
The key technical ingredients of our framework are twofold: A new \emph{fully polynomial-time 
tree-decomposition algorithm}, 
and the novel concept of a \emph{stateful walk constraint}. In this subsection, we outline their ideas and their applications for our algorithms. 

\paragraph{Fully Polynomial-Time Tree Decompostion}
All of our results require the existence of an \emph{efficient} tree decomposition algorithm 
with a \emph{small width}. Unfortunately, the best known tree-decomposition algorithm \citep{Li18} has a running time that exponentially depends on the treewidth, and thus is too slow for our needs. Thus, we develop a fully polynomial-time CONGEST algorithm for tree decomposition, which runs in $\tilde{O}(\TW{G}^2D + \TW{G}^3)$ rounds and computes a tree decomposition 
of width $\tilde{O}(\TW{G}^2)$. The algorithmic ideas are based on the fully-polynomial 
time (centralized) 
tree decomposition algorithm by Fomin et al.~\citep{Fomin18}, with several nontrivial modifications that 
allow for an efficient implementation in the CONGEST model. While a direct implementation of~\citep{Fomin18} in the CONGEST model is straightforward, this will result in a round complexity of $\tilde{O}(\TW{G}^{O(1)}D)$, where the exponent of $\TW{G}$ is (at least) 7. We introduce novel ideas which allow us to substantially improve the dependence on $\TW{G}$. Our distance labeling result is obtained
by combining this tree decomposition algorithm with several techniques by Li and Parter~\cite{LP19} which were introduced in the context of distance labeling for planar graphs.

\paragraph{Stateful Walk Constraint}
The second ingredient of our framework is to extend the applicability of 
distributed directed shortest paths algorithms (including distance labeling schemes) to a more general type of shortest walks. 
We consider a constrained version of \textsf{SSSP} (or \textsf{DL}), where a subset $C$ of all walks 
in the input graph is given. This problem requires that each node $v$ in $G$ knows the length of the shortest 
walk in $C$ from a source vertex, $s$, to $v$ (or construct a labeling scheme that allows computing the length of the shortest walk in $C$ connecting the two vertices, using only their labels). 
This problem is not meaningful if $C$ is \emph{explicitly} given to each node, and our focus is 
the scenario where $C$ is given in an \emph{implicit} and \emph{distributed} manner. 

We introduce a natural 
class of walk constraints, which we call \emph{stateful walk constraints}. Roughly speaking, a stateful walk constraint is a set, $C$, of walks such that each node $u$ can \emph{locally} decide if a walk leaving $u$ is contained in $C$ or not, using only a small amount of additional information (referred to as 
the \emph{state} of a walk). This class captures many natural walks with combinatorial constraints, such as alternating walks (used in our matching algorithm). We show that the constrained versions of directed $\textsf{SSSP}$ and $\textsf{DL}$ under a stateful-walk constraint can be reduced to the corresponding unconstrained versions with a running-time overhead depending on the size of the state space associated with $C$. 

Let us outline how to apply the above framework to the problems of matching and girth. We show 
that each of these problems can be reduced to finding shortest walks under some
stateful-walk constraint. Our maximum matching algorithm, (i.e., alternating path finding) is one of the most natural applications of the framework. By combining the 
stateful-walk framework with a specific property of augmenting paths in low-treewidth graphs (presented in ~\citep{IOO18}), we derive our algorithm. For girth computation, our key idea is to use the framework to exclude walks that ``fold onto themselves", that is, the second half of the walk is the inversion of the first half. This leaves us with a set of walks which \emph{upper bound} the girth. Finally, we use a probabilistic sampling of edge labels combined with the above to derive our algorithm. 

The above applications demonstrate the expressive power and versatility of our framework. Finally, we would like to emphasize that the framework is not limited to low-treewidth graphs, but applies to general graphs. The authors believe that this framework is potentially useful in the design of 
efficient CONGEST algorithms for a wider class of problems.

\subsection{Related Work} \label{sec:relatedwork}
Distance computation problems are at the core of 
distributed graph algorithms. Recently there has been a vast number of results for both exact 
and approximate distance computation problems \citep{LP13,LP13-2,FHW12,HW12,DDP12,PRT12,Nanongkai14,HKN16,GL18,HNS17,FN18,BN18,BKKL17,Elkin17,AHK16,IW14,ARKP18,HL18,GKP20,CFGLLO20,CM20}. The state-of-the-art bounds
for general graphs are $\tilde{O}(\sqrt{n} + D)$ rounds for 
$(1 + \epsilon)$-approximate \textsf{SSSP}~\citep{BKKL17}, and 
$\tilde{O}(\sqrt{n}D^{1/4} + D)$ rounds for exact \textsf{SSSP}~\citep{CM20}. These results
hold for weighted and directed graphs. A tight runtime bound for exact \textsf{SSSP} is still an open
problem. For computing the girth, a near optimal approximation
algorithm for unweighted girth is known~\citep{PRT12}. It outputs the girth with an additive 
error of one (i.e., $(2 - 1/\Theta(g))$-multiplicative approximation) in $O(\sqrt{ng}\log n + D)$ 
rounds. 

Planar graphs are also an intriguing class of graphs, and are closely related to our results.
Although our algorithms are applicable to planar graphs, as planar graphs have treewidth of $O(D)$, 
for distance computation problems the existing algorithms tailored for planar graphs~\citep{LP19,Parter20} achieve a running time with a better dependence on $D$.

Distance labeling schemes were first proposed by Gavoille et al.~\citep{GPPR2004}, and their centralized construction was studied for many graph classes~\citep{GP03,FGNW17,GP08,KKP05,GPPR2004}. Distance labeling is closely related to 
\emph{(approximate) distance oracles}, which are centralized data structures for representing distance matrices that support quick access. There are a few results that consider distributed variants of approximate distance oracles~\citep{LP13,DDP12,IW14,LP15} for general graphs. However, all of them consider only approximate oracles, and essentially require the construction time to polynomially depend on $n$. The best bound for bounded treewidth graphs is the shortcut-based construction~\citep{HLZ18}, whose approximation factor is $\Polylog(n)$.

While most previous results for maximum matching focus on approximate 
solutions~\citep{FTR06,AKO18,LPP08,BCGS17}, the problem of finding an \emph{exact} solution has been receiving more and more attention recently~\citep{BKS19,AK20,KI21}. On the negative side, the lower bound of $\tilde{\Omega}(\sqrt{n} + D)$ rounds is implied from the result of~\citep{AKO18}. This lower bound is recognized as a strong barrier: It has been shown that the well-known approach of reduction from two-party communication complexity does not work for obtaining any stronger lower bound~\citep{BCDELP19}.
It is also known that exact maximum matching does not have any local solution: There exists a hard instance where $D = \Theta(n)$ that exhibits an $\Omega(n)$-round lower bound~\citep{BKS19}. 

\subsection{Organization of the Paper}
In Section~\ref{sec:preliminaries} we introduce the concept of tree 
decomposition and the CONGEST model. In Section~\ref{sec:TreeDecomposition}, we present our tree 
decomposition algorithm in the CONGEST model. In Section~\ref{sec:distancelabeling} we show how to solve the distance labeling problem by using our tree decomposition algorithm. In Section~\ref{sec:statefulwalk} we introduce the concept of stateful walks, and show how to reduce the problem of finding (shortest) stateful walks into the standard directed reachability or the shortest paths problem. Sections~\ref{sec:matching} and \ref{sec:girth} are devoted to the applications of our framework.

\section{Preliminaries}
\label{sec:preliminaries}

\subsection{Model and Notations}
\label{sec: congest}
Let us now define the CONGEST model of distributed computation. We model distributed systems as an 
undirected and unweighted graph, $G$, on $n$ nodes, where the nodes are computational units and edges are communication links. We assume that nodes have unique $O(\log n)$ bit IDs. Communication between nodes happens in \emph{synchronous} rounds. In each round, each node sends a (possibly different) $O(\log n)$-bit message to each neighbor and, within the same round, receives all messages from the neighbors. After receiving the messages, it performs some local computation. We assume that nodes have unbounded computational power, and when analyzing our algorithms, we only care about the \emph{communication cost} of the algorithm. That is, the number of communication rounds it takes to complete. For any graph $G$, we denote its vertex and edge sets by $V(G)$ and $E(G)$ respectively.

While we also deal with directed and weighted multigraphs as input instances, the communication network itself is modeled as a simple undirected unweighted graph (i.e., the orientation, weight, and multiplicity of the edges connecting two vertices do not affect the communication capability between them). More precisely, given an input instance $G$, we denote by $\Comm{G}$ the graph which is obtained by omitting all orientations of $E(G)$, by merging the multiedges connecting the same two vertices into a single one, and by removing all self-loops. Then $\Comm{G}$ is the communication network implied by $G$. Given a graph $H$, we denote by $D(H)$ the undirected diameter of $H$ (i.e., $D(H)$ is the diameter of $\Comm{H}$). For the input graph $G$, we use $D$ instead of $D(G)$. For any rooted tree $T$ and a vertex $v \in V(T)$, we denote by $T(v)$ the subtree of $T$ rooted by $v$. We also denote
by $\CH(T, v)$ the set of the children of $v$ in tree $T$.

\subsection{Tree Decomposition and Treewidth}
\label{subsec:treedecomposition}
 Let $G = (V(G), E(G))$ be an undirected
and unweighted graph. 
A tree decomposition of an undirected and unweighted graph $G$ is 
a pair $\Phi = (T, \{B_x\}_{x \in V(T)})$, where $T$ 
is a tree, referred to as \emph{decomposition tree}, and each vertex $x \in V(T)$ is associated 
with a subset $B_x \subseteq V(G)$ of vertices in $G$ (called \emph{bag $x$}) satisfying the 
following conditions:
\begin{enumerate}[(a)]
	\item $V(G) = \bigcup_{x \in V(T)} B_x$.
	\item Any edge in $G$ is covered by at least one bag, i.e., for all $(u, v) \in E(G)$, there exists 
	$x \in V(T)$ such that $u, v \in B_x$ holds.
	\item For any $u \in V(G)$, the subgraph of $T$ induced by the bags containing $u$ is connected. 
\end{enumerate}
The \emph{width} of a tree decomposition $\Phi = (T, \{B_{x}\}_{x \in V(T)})$ is defined as the maximum bag size minus one. The \emph{treewidth}, $\TW{G}$, of a graph $G$ is the minimum width over all tree decompositions of $G$. 
While the original definition of treewidth applies only to undirected graphs, we define the treewidth
of a directed graph $G$ as the treewidth of $\Comm{G}$.

Throughout this paper, we assume that any decomposition tree $T$ is rooted, and each vertex in $V(T)$ is identified by a string over the alphabet $[0, n - 1]$. Letting $x$ be any string over the alphabet 
$[0, n - 1]$ and $i$ be any character,
we define $x \bullet i$ as the string obtained by adding $i$ to the tail of $x$. The null string of 
length zero is denoted by $\psi$, which is the identifier of the root of $T$. 
Given a vertex $x \in  V(T)$,
$x \bullet i$ identifies the $i$-th child of $x$. We use the notation $x \sqsubseteq y$ if $x$ is a prefix of $y$, and the notation $x \parallel y$ if neither $x \sqsubseteq y$ nor $y \sqsubseteq x$ holds.
We denote the length of $x$ by $|x|$, which means the depth of vertex $x$ in $T$. We define $A_\ell(T)$
as the set of vertices of length $\ell$ in $V(T)$.
For any tree decomposition $\Phi = (T, \{B_x\}_{x \in V(T)})$ 
and $v \in V(G)$, its \emph{canonical string} $\St_{\Phi}(v)$ is the shortest string such that $v \in B_{\St_{\Phi}(v)}$ holds. Note that $\St_{\Phi}(v)$ is uniquely determined
because the set of bags containing $v$ forms a connected subgraph of $T$ (by condition (c) of the definition).
The subscript $\Phi$ is often omitted when it is clear from context.
Letting $x$ be a string of non-zero length in $V(T)$, we denote by $\Par{x}$ the string corresponding 
to the parent of $x$ (i.e., the string obtained by chopping the tail of $x$) in $T$. We define 
$\CHT_{\Phi}(x)$ as the set of $i \in [0, n - 1]$ such that $x \bullet i$ is a child of $x$, 
i.e., $\CHT_{\Phi}(x) = \{i \mid x \bullet i \in V(T) \}$. 

In the distributed setting, computing a tree decomposition means that each node $u \in V(G)$ 
outputs the IDs of the bags containing $u$. 

\subsection{Part-wise Aggregation}
\label{subsec:partwiseaggregation}

Throughout this paper, we often execute an algorithm, $\mathcal{A}$, on multiple subgraphs of 
the input graph independently and simultaneously. That is, given a collection 
$\mathcal{H} = \{H_1, H_2, \dots, H_N\}$ of vertex disjoint connected subgraphs of 
the input graph $G$, we execute $\mathcal{A}$ on all $H_i \in \mathcal{H}$ in parallel.
The primary obstacle in implementing this type of execution in the CONGEST model is 
that the diameter $D(H_i)$ for $H_i \in \mathcal{H}$ may be much larger than 
$D(G)$ (and can be $\Omega(n)$ in the worst case), and thus the running time of $\mathcal{A}$ in $H_i$ 
can depend on $n$ even if the running time of $\mathcal{A}$ depends only on the diameter of 
the input graph. 
The key technical ingredient for this section is a subroutine called 
\emph{part-wise aggregation}~\citep{GH16-2}, which
is defined as follows:
Let $G = (V(G), E(G))$ be an undirected graph, $\mathcal{H} = \{H_1, H_2, \dots, H_N\}$ be a collection of connected vertex disjoint subgraphs of $G$, and $\oplus$ be an associative binary function 
operating on a value domain $\mathcal{M}$ of cardinality $\mathrm{poly}(n)$. Suppose that each node $v \in V(H_i)$ knows all the edges in $E(H_i)$ incident to $v$, and has a value $x_{v, i} \in \mathcal{M}$. Every node in $H_i$ wants to 
learn the value $\bigoplus_{v \in V(H_i)} x_{v, i}$, i.e., the aggregation with operator $\oplus$ over all of the values $x_{v, i}$ for $v \in V(H_i)$. It is known that bounded treewidth graphs admit a fast algorithm for part-wise aggregation~\citep{HIZ16-2,HHEW18,Li18}, which runs in $\tilde{O}(\TW{G}D)$ rounds.

\section{Fully Polynomial-Time Distributed Tree Decomposition}
\label{sec:TreeDecomposition}

\subsection{Balanced Separator}
\label{subsec:balancedsepartor}

Our tree decomposition algorithm is based on the computation of \emph{balanced separators}, 
which is a common technique used in many (centralized or distributed) tree decomposition algorithms. 
We first introduce the notion of a $(X, \alpha)$-balanced separator, which is a slight generalization
of a conventional balanced separator.

Let $X$ be any subset of $V(G)$. For a given vertex subset $Y \subseteq V(G)$, we define $\mu_X(Y) = |Y \cap X|$. 
We also use a similar notation $\mu_X(H)$ for any subgraph $H \subseteq G$ to mean $\mu_X(V(H))$. 
The subscript $X$ is omitted if it is clear from the context. 
An $(X, \alpha)$-balanced separator, $S$, of an undirected graph $G$ is 
a vertex subset whose removal divides $G$ into $N$ connected components $G_1, G_2, \dots, G_N$ such that 
$\mu_X(G_i)/\mu_X(G) \leq \alpha$ 
holds for any $1 \leq i \leq N$. A \emph{$(V(G), \alpha)$-balanced separator}
is simply called an $\alpha$-balanced separator of $G$.

It is well-known that any graph $G$ admits a $(1/2)$-balanced separator whose size is $\TW{G} + 1$ (Lemma 7.19 of \citep{Cygan15}), and that one can obtain a tree-decomposition algorithm of width 
$O(t \log n)$ from any balanced separator algorithm which outputs a separator $S$ of size $t$.
The best known CONGEST algorithm for finding an $\alpha$-balanced separator of size $O(\TW{G})$ 
(for constant $\alpha < 1$)~\citep{Li18} has a running time that exponentially depends on 
$\TW{G}$, and thus does not fit our goals. To get rid of the exponential dependency, we present 
a new CONGEST algorithm for computing balanced separators building on the ideas of the centralized algorithm by 
Fomin et al.~\citep{Fomin18} (referred to as \textsc{Flpsw} hereafter).

\subsection{The Algorithm by Fomin et al.}
\label{subsec:Fomin}

We first present the outline of \textsc{Flpsw}. We assume $X = V(G)$ for simplicity, 
but the algorithm can handle an arbitrary $X$. 
\textsc{Flpsw} runs with a parameter $t$, and is guaranteed to output
an $\alpha$-balanced separator of size $O(t^2)$ for $\alpha = 1 - \Theta(1) > 0$ when $\TW{G} + 1 \leq t$. 
In the case when $\TW{G}$ is unknown, 
one can combine \textsc{Flpsw} with a standard doubling estimation technique for $t$. 
To explain the algorithm we first introduce the notion of an $U_1$-$U_2$ vertex cut for $U_1, U_1 \subseteq V(G)$ 
as a generalization of the standard $s$-$t$ vertex cut, which is defined as a vertex
subset $Z \subseteq V(G) \setminus (U_1 \cup U_2)$ such that $U_1$ and $U_2$ belong to different connected components in $G - Z$. If $U_1$ intersects with $U_2$ or some edge crosses
between $U_1$ and $U_2$, the size of the $U_1$-$U_2$ vertex cut is defined as $\infty$. 

Let $S$ be a $(1/2)$-balanced separator of $G$ of size at most $t$ (recall that it necessarily exists). 
The algorithm first constructs any rooted spanning tree, $T$, of 
$G$, and decomposes it into a set $\Tcal$ of $\Theta(t)$ subtrees of size $\Theta(n/t)$ such that 
only their root vertices are shared among two or more subtrees in $\Tcal$. In what follows,
we refer to this type of decomposition as the \emph{splitting} of $T$, and to each subtree as a 
\emph{split tree}. Let $R$ be the set of the root vertices of split trees. We assume 
that the hidden constant in the cardinality of $\Tcal$ is sufficiently large, e.g., 
$\Tcal \geq 100t$. There are two cases to consider.
\begin{itemize}
\item (Case 1) $R$ does not intersect $S$: Since all trees in $\Tcal$ are vertex disjoint except for $R$,
at most $|S| = t$ trees in $\Tcal$ intersect $S$. Then there exist two split trees $T_1, T_2 \in \Tcal$ such that they belong to different connected components in $G - S$, i.e., the minimum 
$V(T_1)$-$V(T_2)$ 
vertex cut $Z$ has size at most $|S| = t$. 
The algorithm finds such a pair by computing the minimum $V(T_1)$-$V(T_2)$ vertex cut for all pairs 
$(T_1, T_2) \in \Tcal^2$. Once the pair is found, the algorithm outputs $Z$ as the separator. Note that
$Z$ is a $(X, 1 - \Theta(1/t))$-balanced separator of $G$ because both $V(T_1)$ and $V(T_2)$ contain 
$\Theta(n/t)$ vertices. If the algorithm fails to find such a pair, it concludes that the first case does not apply, and proceeds to case 2.
\item (Case 2) $R$ intersects $S$: The algorithm
simply removes $R$ from $G$ and outputs it as the separator. The removal of $R$ 
results in the deletion of at least one vertex in $S$ from $G$. 
\end{itemize}
\textsc{Flpsw} iterates the procedure above $2t$ times for the largest connected components 
of the remaining graph. After all iterations are complete, we obtain a $(1 - \Theta(1))$-balanced separator. That is, if the first case succeeds $t$ times, then $\Theta(1)$ fraction of vertices are separated. Otherwise, all vertices in $S$ are removed. Since each iteration adds $O(t)$ vertices to the output set,
the total size of the output separator is $O(t^2)$. As stated in the introduction, it is relatively straightforward to implement \textsc{Flpsw} in $\tilde{O}(\TW{G}^{O(1)}D)$ rounds if we do not care about optimizing the exponent of $\TW{G}$. 

\subsection{Our Algorithm}

We present a modified version of \textsc{Flpsw} 
which admits a more efficient CONGEST implementation (the implementation details are 
deferred to Appendix~\ref{appendix:treewidth}). The key differences between our algorithm and 
\textsc{Flpsw} are threefold. First, instead of solving 
the minimum $V(T_1)$-$V(T_2)$ vertex cut problem for all $(T_1, T_2) \in \Tcal^2$, we simply 
adopt a random sampling strategy for identifying a pair $(T_1, T_2)$ which has a cut of $O(t)$ vertices.
When $S \cap R = \emptyset$ holds, this strategy is guaranteed to succeed with a constant probability. 
Since one pair $(T_1, T_2)$ is sampled per iteration, it suffices to solve $O(t)$ instances of 
the minimum vertex cut in total.

The second idea is a parallelization-friendly algorithm for tree splitting. More precisely, the algorithm
manages a set of disjoint trees $\Tcal$, where initially $\Tcal = \{T\}$, and iteratively splits trees of
large size in $\Tcal$, and then adds back the resulting split trees to $\Tcal$ if they are
still large. This strategy admits an efficient CONGEST implementation because the splitting
of two different trees in $\Tcal$ can be performed in parallel. 

The third idea is to compute
$O(t)$ instances of the minimum vertex cut simultaneously at the final step (\textsc{Flpsw} performs this computation sequentially). Utilizing a careful scheduling technique,
we can execute $t$ independent instances of the minimum vertex cut problem in $\tilde{O}(t \TW{G} D + t^2 \TW{G})$
rounds, which is more efficient than sequential processing (which takes $\tilde{O}(t^2 \TW{G}D)$ rounds).

We present the centralized version of our algorithm (referred to as \textsc{Sep} hereafter). 
It works as follows:
\begin{enumerate}
\item  If $\mu(G) \leq 200t^2$, the algorithm outputs $X$ as a 
$(X, 14399/14400)$-balanced separator and halts. 
\end{enumerate}
When the algorithm does not halt at step 1, the algorithm iteratively applies 
the following steps (2 and 3) for $\hat{t} = \lceil 301t/300 \rceil$ times to the graphs $G_1, G_2, \dots, G_{\hat{t}}$, where $G_1 = G$ and 
the rest of the sequence is generated within the following steps.
\begin{enumerate}
\setcounter{enumi}{1}
\item At the beginning of the $i$-th iteration (for $G_i$), the algorithm constructs 
some spanning tree $T^{\ast}$ of $G_i$, and then split $T^{\ast}$ into several trees. 
In the $i$-th iteration, this splitting procedure, which we refer to as \textsc{Split}, maintains the two sets of 
trees $\Tcal$ and $\Tcal_i$, which initially store $\Tcal = \{T^{\ast}\}$ and $\Tcal_i = \emptyset$.
By a single invocation of \textsc{Split}, every tree $T \in \Tcal$ is split into a set of trees of size 
at least $\mu(G)/(12t)$ and at most $5\mu(T)/6$. The original 
tree $T$ is removed from $\Tcal$ after splitting. Each split tree is added to $\Tcal$ if its size is 
more than $\mu(G)/(4t)$, or to $\Tcal_i$ otherwise.
The splitting process terminates when $\Tcal$ becomes empty.

The details of the procedure \textsc{Split} are as follows: For any $T \in \Tcal_i$,
the algorithm finds the center vertex $c \in V(T)$ of $T$, i.e., the vertex such that removing it decomposes $T$
into several subtrees of size at most $\mu(T)/2$.
Now we regard $c$ as the root of $T$. Next, $\textsc{Split}$ removes all subtrees $T(v)$ for $v \in \CH(T, c)$ 
such that $\mu(T(v)) \geq \mu(G)/(12t)$ as split trees.

Let $T'$
be the remaining tree. If $\mu(T') < \mu(G)/(12t)$, we pick any tree $T(v)$ split in the first step, 
and merge $T'$ into $T(v)$ (Fig.~\ref{fig:splittree}(a)). The size of $T' + T(v)$ is bounded by $\mu(G)/(12t) + \mu(T)/2 
\leq \mu(T)/3 + \mu(T)/2 \leq 5\mu(T)/6$ (recall that any $T \in \mathcal{T}$ has a size at least $\mu(G)/(4t)$ 
and thus $\mu(G)/(12t) \leq \mu(T)/3$ holds). Otherwise, we further split $T'$ into several 
subtrees sharing $c$ as their roots. Let us fix some ordering of the children of $c$ in $T'$, denoted by 
$y_0, y_1, \dots, y_{\ell - 1}$, we define $Y_{a,b} = \bigcup_{a \leq h < b} V(T'(y_h))$. The algorithm computes the indices $0 = q_0, q_1, \dots, q_{\ell'} = 
\ell - 1$ such that $\mu(G)/(12t) \leq \mu(Y_{q_{h-1}, q_{h}}) < \mu(G)/(6t)$ holds for 
all $1 \leq h \leq \ell' - 1$ and $\mu(G)/(12t) \leq \mu(Y_{q_{h-1}, q_{h}}) < \mu(G)/(4t)$ holds for $h = \ell'$. Then we split $T'$ into $\ell'$ connected subtrees induced by 
$Y_{q_h, q_{h+1}} \cup \{c\}$. Since the subtree $T'(y)$ for any $y \in \CH(T, c)$ has a size
less than $\mu(G)/12t$, one can always obtain such a splitting. Each induced subtree is added to $\mathcal{T}_i$ because its size 
is necessarily at most $\mu(G)/(4t)$ (Fig.~\ref{fig:splittree}(b)).

It is easy to see that $\Tcal$ becomes empty after $O(\log t)$ invocations of \textsc{Split}. At which point 
$\Tcal_i$ is a set of split trees covering
$T^{\ast}$, whose size is in the range $[\mu(G)/(12t), \mu(G)/(4t)]$. 

\item We denote the set of root vertices of subtrees in $\Tcal_i$ by $R_i$. 
If $R^{\ast}_i = \bigcup_{1 \leq j \leq i} R_i$ is a $(X, 14399/14400)$-separator of $G$, the algorithm outputs 
it and halts. Otherwise, we define $G_{i+1}$ as the heaviest connected component of $G_i - R_i$ with 
respect to $\mu$.
\end{enumerate}
If the algorithm completes $\hat{t}$ iterations of the steps above without halting, the following 
step is performed.
\begin{enumerate}
\setcounter{enumi}{3}
\item For each $i \in [1, \hat{t}]$, the algorithm chooses 95 ordered pairs uniformly at random from 
$\Tcal_i \times \Tcal_i$, and compute the $V(T_1)-V(T_2)$ vertex cut for all chosen pairs.
If the computed cut size is at most $t$, the cut vertices are added to the set $Z$.
Finally, $Z$ is outputted if it is a $(X, 14399/14400)$-balanced separator. Otherwise \textsc{Sep} fails.
When \textsc{Sep} fails for $5\log n$ trials, it concludes that $\TW{G} + 1 > t$ (and runs 
again after doubling $t$). 
\end{enumerate}

\begin{figure}[ht]
	\begin{center}
		\includegraphics[clip, scale=0.4]{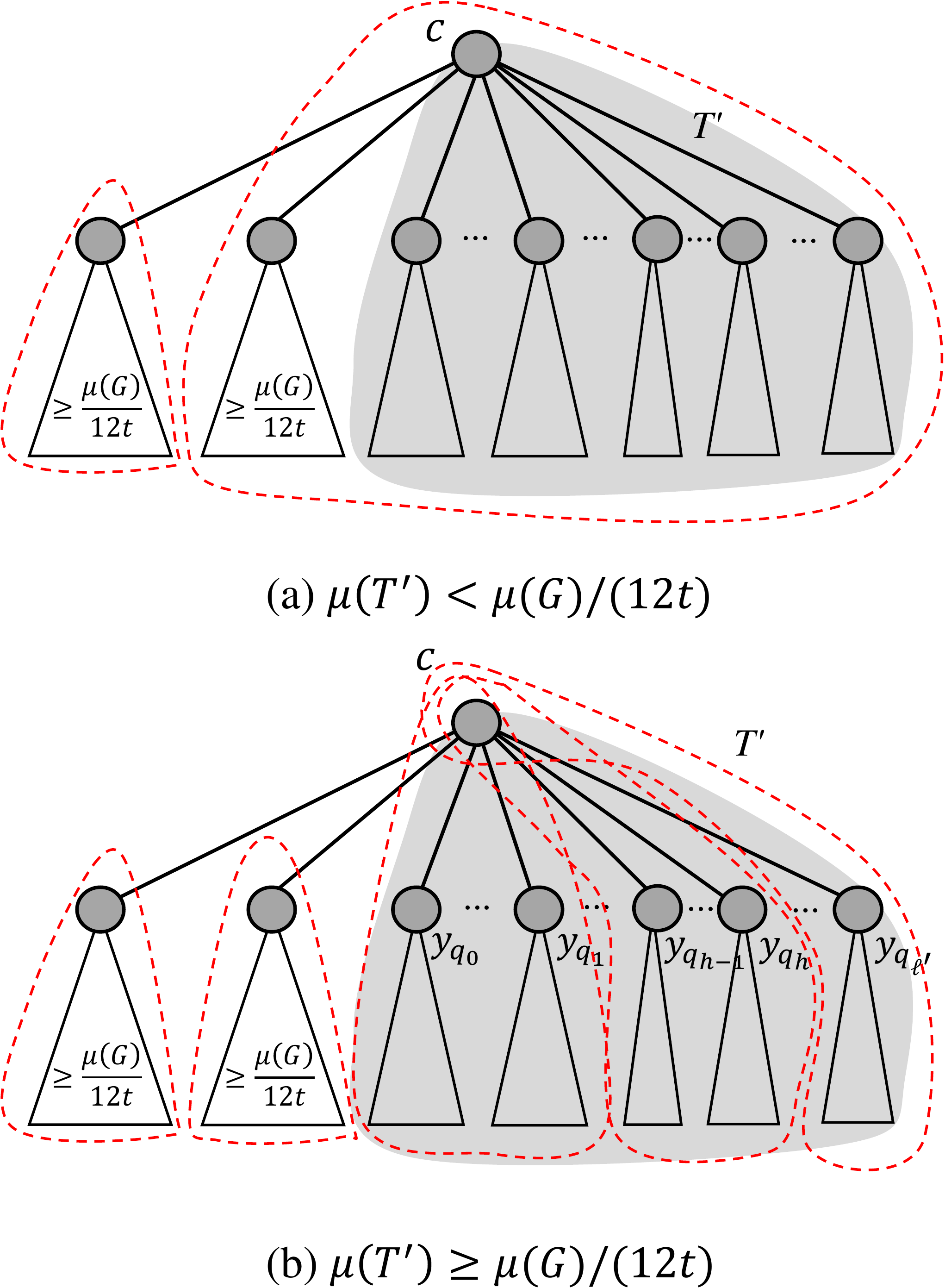}
	\end{center}
	\caption{\label{fig:splittree} An illustration of the \textsc{Split} procedure. Subtrees circled by 
	a red dotted line are split subtrees.} 
\end{figure}

While the fundamental idea of \textsc{Sep} is similar to \textsc{Flpsw}, it 
requires a completely new analysis and correctness proof. One significant technical challenge that we overcome is that the subtree pairs chosen in Step (4) are not vertex disjoint. This is problematic as we must prove that the size of the largest connected component after removing 
the computed separator $Z$ becomes substantially smaller. The complete proof of correctness
is deferred to Appendix~\ref{appendix:correctnesssep}. 
To implement \textsc{Sep} efficiently 
in the CONGEST model, we utilize the part-wise aggregation technique explained in 
Section~\ref{subsec:partwiseaggregation}, which is also known to provide efficient algorithms 
for minimum $U_1$-$U_2$ vertex cut and spanning tree construction running in 
$\tilde{O}(\TW{G}^{O(1)}D)$ rounds~\citep{GH16-2,HIZ16-2,Li18}. The details of the implementation and
its correctness proof are presented in Appendix~\ref{appendix:distsep}. Finally, we obtain 
the following lemma:

\begin{lemma}
    \label{lma:balancedseparatorconstruction}
    Let $G$ be an undirected graph, and $X \subseteq V(G)$ be any vertex subset. There exists 
    a randomized CONGEST algorithm that outputs a $(X, 14399/14400)$-balanced 
    separator of size at most $400(\TW{G}+1)^2$ for $G$ in $\tilde{O}(\TW{G}^2D + \TW{G}^{3})$ 
    rounds whp. 
\end{lemma}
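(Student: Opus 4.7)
The plan is to prove Lemma~\ref{lma:balancedseparatorconstruction} by (i) establishing the correctness of the centralized procedure \textsc{Sep} under the assumption $\TW{G}+1 \leq t$, and (ii) showing that it can be implemented in the CONGEST model within the claimed round complexity. Since $\TW{G}$ is not known a priori, $t$ is handled by a standard doubling schedule that terminates the first time \textsc{Sep} succeeds; this contributes only an $O(\log \TW{G})$ multiplicative overhead, absorbed by $\tilde{O}(\cdot)$. For correctness, fix $t = \TW{G}+1$ and walk through the three possible exit points. If \textsc{Sep} halts at Step~1, then $\mu(G) \leq 200 t^2$ and outputting $X \cap V(G)$ is vacuously a $(X, 14399/14400)$-balanced separator of size at most $200(\TW{G}+1)^2$. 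If \textsc{Sep} halts at Step~3, then $R^{\ast}_i$ is by definition a $(X, 14399/14400)$-balanced separator; the size bound follows because \textsc{Split} guarantees each split tree has $\mu$-weight at least $\mu(G)/(12t)$, so $|\Tcal_j| = O(t)$ for every $j$, and accumulating over $\hat{t} = O(t)$ iterations yields $|R^{\ast}_{\hat{t}}| = O(t^2)$. If \textsc{Sep} reaches Step~4 and succeeds, the output $Z$ is the union of at most $95 \hat{t}$ cuts, each of size at most $t$, so $|Z| = O(t^2)$. In all three cases the bound matches $400(\TW{G}+1)^2$ after tuning the constants.

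The delicate part is the probabilistic analysis of Step~4. Fix any $(1/2)$-balanced separator $S$ of $G$ with $|S| \leq t$, which exists by the standard structural property of low-treewidth graphs. The key structural claim is that in any iteration $i$ that reached Step~2--3 without halting, $\Tcal_i$ consists of $\Theta(t)$ split trees of $\mu$-weight in $[\mu(G)/(12t), \mu(G)/(4t)]$, and at most $|S| \leq t$ of them can contain a vertex of $S$; consequently a constant fraction of ordered pairs $(T_1, T_2) \in \Tcal_i \times \Tcal_i$ have their vertex sets lying in different components of $G_i - S$, so each such pair admits a $V(T_1)$-$V(T_2)$ vertex cut of size at most $|S| \leq t$. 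Each of the $95$ independent random samples per iteration therefore finds such a cut with constant probability, and a routine concentration argument shows that the cuts collected across all $\hat{t}$ iterations jointly separate a $(1-\Omega(1))$-fraction of the $\mu$-weight with at least constant probability per trial; the $5 \log n$ independent trials then amplify the success probability to $1 - n^{-\Omega(1)}$. The main obstacle, and in my view the hardest step, is that the sampled pairs are not vertex-disjoint, so one cannot simply sum per-pair weight reductions; the analysis must carefully account for how the union of recovered cuts interacts with the iteratively shrinking graphs $G_1, G_2, \dots$, which is exactly why Steps~2--3 and Step~4 cannot be decoupled.

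For the CONGEST implementation, the two primitives we rely on are spanning-tree construction and minimum $U_1$-$U_2$ vertex cut, both of which admit $\tilde{O}(\TW{G}^{O(1)} D)$-round algorithms on subgraphs supporting part-wise aggregation, as recalled in Section~\ref{subsec:partwiseaggregation}. The \textsc{Split} procedure terminates after $O(\log t)$ invocations, and within one invocation all active trees in $\Tcal$ are processed in parallel via a single run of part-wise aggregation, giving $\tilde{O}(\TW{G} D)$ rounds per invocation and $\tilde{O}(t \TW{G} D)$ total over the $\hat{t} = O(t)$ outer iterations. The dominant cost is Step~4, where $O(t)$ minimum-vertex-cut instances must be executed: processing them sequentially would cost $\tilde{O}(t^2 \TW{G} D)$, which exceeds the budget. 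The final ingredient is the scheduling argument sketched before the lemma statement, which executes all $O(t)$ cut instances concurrently in $\tilde{O}(t \TW{G} D + t^2 \TW{G})$ rounds by sharing the underlying shortcut infrastructure across instances. Substituting $t = \TW{G}+1$ yields the advertised $\tilde{O}(\TW{G}^2 D + \TW{G}^3)$ bound.
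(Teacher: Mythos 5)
Your overall architecture---the three exit points of \textsc{Sep}, the role of a fixed $(1/2)$-balanced separator $S$ of size at most $t$, and the CONGEST implementation via part-wise aggregation plus concurrent scheduling of the $O(t)$ vertex-cut instances in $\tilde O(t\TW{G}D+t^2\TW{G})$ rounds---matches the paper's proof. But there are two genuine gaps, both in the probabilistic analysis of Step~4 that you yourself flag as the hardest part. First, your structural claim that in \emph{every} iteration $i$ at most $|S|\le t$ of the trees in $\Tcal_i$ can contain a vertex of $S$ is false: the split trees are vertex-disjoint only \emph{away from their roots}, and a single root $c\in R_i\cap S$ may be shared by up to $\Theta(t)$ split trees, all of which then intersect $S$. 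The bound you want holds only in iterations with $R_i\cap S=\emptyset$. This is precisely why \textsc{Sep} runs $\hat t=\lceil 301t/300\rceil>t$ iterations: the root sets $R_1,\dots,R_{\hat t}$ are pairwise disjoint, so at most $t$ iterations can be ``bad'', leaving at least $\lceil t/300\rceil$ good ones (Lemma~\ref{lma:separatedpairs}). Your argument never invokes this disjointness across iterations, so it cannot justify that a constant fraction of sampled pairs are separated.

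Second, and more seriously, you correctly observe that the sampled pairs are not vertex-disjoint and that one ``cannot simply sum per-pair weight reductions'', but you then dispose of the issue with ``a routine concentration argument''. No concentration bound closes this gap. The difficulty is twofold: (i) heavily overlapping trees may appear in many separated pairs, so the recovered cuts may all detach essentially the same portion of the weight; and (ii) for each separated pair whose cut enters $Z$, all one knows is that \emph{at least one} of its two trees lies outside the largest component of $G-Z$, with no control over which. The paper resolves this in Lemma~\ref{lma:separateprobability} by greedily keeping one separated pair per good iteration subject to neither tree having a ``large intersection'' (more than $\mu(G)/(24t)$ of its weight) with the trees already kept, and then proving that for \emph{every} choice function $\gamma$ selecting one tree from each kept pair, the union $\bigcup_i V(T_{i,\gamma(i)})$ has weight at least $\mu(G)/14400$. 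This universally quantified union bound is the missing idea; without it (or an equivalent device) your proposal does not establish that the largest component of $G-Z$ actually loses an $\Omega(1/\mathrm{poly})$ fraction of the weight, which is the entire content of Step~4's correctness.
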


\subsection{Distributed Tree Decomposition based on Balanced Separators}
\label{subsec:disttreedecomosition}

We construct a tree decomposition of width at most $O(\TW{G}^2 \log n)$, utilizing the balanced 
separator algorithm of lemma~\ref{lma:balancedseparatorconstruction}. We refer to the constructed 
tree decomposition as $\Phi = (T, \{B_x\}_{x \in V(\Tcal)})$.
As explained in Section~\ref{subsec:treedecomposition}, 
the subscript $x$ of each bag is a string over the alphabet $[0, n-1]$. Initially, let $G_{\psi} = G$.
There exists a standard strategy to obtain a decomposition from any balanced separator algorithm, 
which works as follows: We first compute a balanced separator $S$ of $G = G_\psi$. The set $S$ 
becomes the root bag $B_\psi$ of the constructed tree decomposition. For each connected component 
$G_0, G_1, \dots, G_{N-1}$ of $G - S$, we recursively construct their tree decompositions. Finally, 
we add $S$ to all of the bags in those decompositions, and connect their roots $1, 2, \dots, N$ to the root $\psi$ of the whole tree decomposition as children. Using an algorithm for computing a balanced separator of 
size $O(\TW{G}^2)$, this strategy yields a tree decomposition of size $O(\TW{G}^2\log n)$. However, adopting 
this strategy to the distributed setting is problematic, mainly due to the fact that the bag $B_x$ is 
not a subset of the vertices of the corresponding graph $G_x$. To avoid it, our algorithm utilizes a slightly modified strategy.

The algorithm recursively decomposes $G_{x}$ for each string $x$ by fixing the corresponding bag 
$B_x$. It first computes a $O(1)$-balanced separator $S_x$ of $G_x$ using 
the algorithm of Lemma~\ref{lma:balancedseparatorconstruction}. If $|V(G_{x})| \leq 2|S_x|$, 
we define $B_x = |V(G_{x})|$ and the recursion terminates. Otherwise, 
we define the bag $B_x = V(G_x) \cap (\bigcup_{x' \sqsubseteq x} S_{x'})$.
Let $G'_{x \bullet 0}, G'_{x \bullet 1}, \allowbreak \dots, G'_{x \bullet (N - 1)}$ be the connected components of 
$G_{x} - B_x$. The graph $G_{x \bullet i}$ (for $0 \leq i  \leq N-1$) is defined as $G_{x \bullet i} = 
G'_{x \bullet i} + \{(u, v) \mid (u, v) \in (V(G'_{x \bullet i}) \times B_x) \cap E(G_{x})\}$. 
This decomposition strategy is illustrated in Fig.~\ref{fig:treedecomposition}
This construction yields a tree decomposition of width $O(\TW{G}^2 \log n)$ and guarantees 
$B_x \subseteq G_x$ for any $x \in V(T)$.

\begin{figure}[ht]
	\begin{center}
		\includegraphics[clip, scale=0.3]{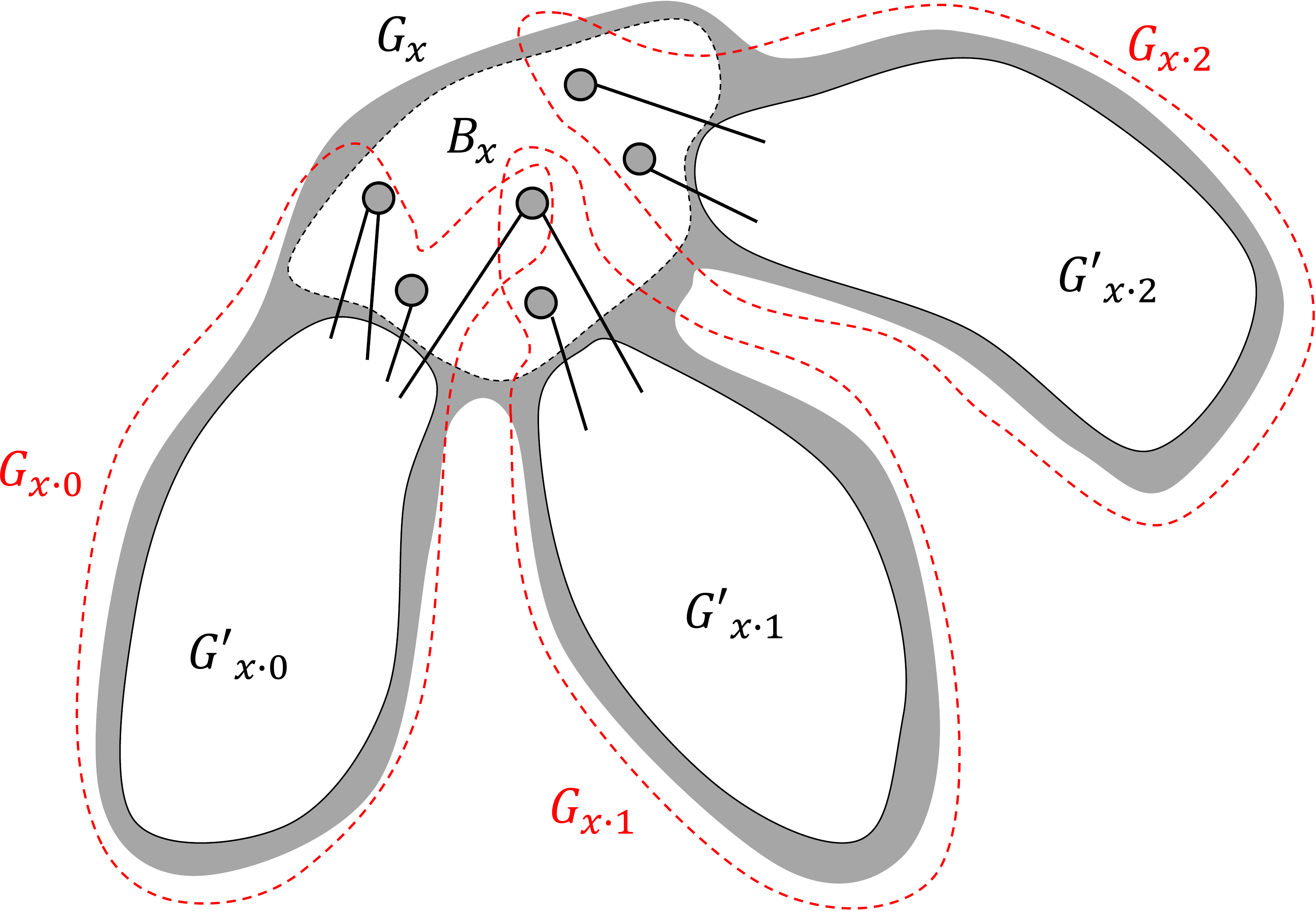}
	\end{center}
	\caption{\label{fig:treedecomposition} An illustration of our tree decomposition.}
\end{figure}

A primary obstacle in adopting this decomposition strategy to the distributed setting is the fact
that the collection of subgraphs $\mathcal{G}_\ell = \{G_x \mid x \in A_\ell\}$ for any $\ell$ is
not necessarily vertex disjoint. For the recursive construction of the tree decomposition, 
we need to execute our balanced separator algorithm for all graphs in $\mathcal{G}_\ell$ in parallel.
We circumvent this obstacle by computing the separator of $G'_x$ (i.e. $G_x - B_{p(x)}$) instead of $G_x$. 
Since $\{G'_x \mid x \in A_\ell\}$ are connected and vertex disjoint by definition, one can apply the
technique based on the low-congestion shortcut framework to compute the separators of graphs $G'_x$
for all $x \in A_\ell$ in parallel. Each separator $S'_x$ for $G'_x$ is easily transformed 
into the separator $S_x$ for $G_x$ by adding all vertices in $V(G_x) \cap V(B_p(x))$. The bag $B_x$ is 
defined as $B_x = V(G_x) \cap (\bigcup_{x' \sqsubseteq x} S_{x'}) = B_{p(x)} \cup S'_x$, 
and thus its size is bounded by $O(\TW{G}^2 \log n)$. Our main theorem is stated as follows:

\begin{theorem} \label{thm:treedecomposition}
For a given graph, $G=(V, E)$, there exists a tree decomposition algorithm in the CONGEST model, which constructs a tree decomposition,
$\Phi = (T, \{B_x\}_{x \in V(T)})$, of width $O(\TW{G}^2 \log n)$ whp. 
The depth of $T$ is $O(\log n)$ and the running time of the algorithm 
is $\tilde{O}(\TW{G}^2D + \TW{G}^{3})$ rounds.
\end{theorem}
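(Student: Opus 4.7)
The plan is to execute the recursive construction outlined in Section~\ref{subsec:disttreedecomosition} in the CONGEST model in a level-by-level BFS manner, processing one level $A_\ell(T)$ of the decomposition tree at a time. At the start of level $\ell$, every node $v \in V(G)$ knows the identifier $x \in A_\ell(T)$ of the unique ancestor bag with $v \in V(G'_x)$, and also a bit indicating whether it belongs to $B_{p(x)}$. For each $x \in A_\ell(T)$, we invoke Lemma~\ref{lma:balancedseparatorconstruction} on $G'_x$ with $X = V(G'_x)$ to compute a $(V(G'_x), 14399/14400)$-balanced separator $S'_x$ of size at most $400(\TW{G}+1)^2$. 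The new bag is set to $B_x = (B_{p(x)} \cap V(G_x)) \cup S'_x$, and the children $\{x \bullet i\}$ correspond to the connected components of $G'_x - S'_x$; the recursion halts at $x$ when $|V(G_x)| \le 2|S_x|$, in which case we set $B_x = V(G_x)$.

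For the structural claims, $|V(G'_x)| \le (14399/14400)\,|V(G'_{p(x)})|$ by the balance property, so the recursion tree has depth $O(\log n)$, and inductively $|B_x| \le |B_{p(x)}| + |S'_x| = O(\TW{G}^2 \log n)$, matching the claimed width. The three tree-decomposition axioms follow as for the standard separator-based scheme: (a) every vertex is absorbed into a bag at some leaf via the termination rule $B_x = V(G_x)$; (c) once a vertex $v$ is first placed into some separator $S'_{x'}$, it propagates into exactly those $B_y$ with $y \sqsupseteq x'$ and $v \in V(G_y)$, and these $y$ form a subtree; (b) for an edge $(u,v) \in E(G)$, once one endpoint (say $u$) is placed into some bag $B_y$, the boundary-edge rule $G_{y \bullet i} = G'_{y \bullet i} + \{(a,b) \in (V(G'_{y \bullet i}) \times B_y) \cap E(G_y)\}$ guarantees that $u \in V(G_z)$, and hence $u \in B_z$, for every descendant $z$ of $y$ along the recursion path on which $v$ remains in $V(G'_z)$, so both $u$ and $v$ appear together at the deepest bag $B_z$ at which $v$ is absorbed.

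For the running time, the critical observation is that the subgraphs $\{G'_x : x \in A_\ell(T)\}$ at any given level $\ell$ are vertex-disjoint and connected, so the low-congestion shortcut machinery of Section~\ref{subsec:partwiseaggregation} allows a single part-wise aggregation over this entire collection to be performed in $\tilde{O}(\TW{G} D)$ rounds of the global communication graph, independent of the diameters of the individual parts $G'_x$. Since the balanced-separator algorithm of Lemma~\ref{lma:balancedseparatorconstruction} (detailed in Appendix~\ref{appendix:distsep}) is built from $\Poly(\TW{G})$ such primitives with total per-graph complexity $\tilde{O}(\TW{G}^2 D + \TW{G}^3)$, invoking it simultaneously on all parts at a level costs the same $\tilde{O}(\TW{G}^2 D + \TW{G}^3)$ rounds. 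Summed over the $O(\log n)$ levels, the total is still $\tilde{O}(\TW{G}^2 D + \TW{G}^3)$ after absorbing the logarithm into the $\tilde{O}(\cdot)$ notation.

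The main obstacle I anticipate is verifying that every internal primitive of Lemma~\ref{lma:balancedseparatorconstruction}, including those not covered by a vanilla part-wise aggregation (notably the sampled minimum $V(T_1)$-$V(T_2)$ vertex cuts and the tree-splitting procedure \textsc{Split}), can be executed collectively on the vertex-disjoint parts $\{G'_x : x \in A_\ell(T)\}$ within the same asymptotic round complexity as the single-graph case. This requires arguing that each such primitive reduces cleanly to part-wise aggregations on the collection, so that the shortcut framework yields the required global-diameter dependence rather than a dependence on the (possibly $\Omega(n)$) diameters of the individual parts.
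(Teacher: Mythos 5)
Your proposal follows essentially the same route as the paper's own proof: level-by-level parallel invocation of the balanced-separator algorithm on the vertex-disjoint collection $\{G'_x \mid x \in A_\ell(T)\}$, the same width/depth accounting, and the same separator-based verification of the three tree-decomposition axioms (the paper packages the key separation fact as Proposition~\ref{prop:gateway} and the axiom checks as Lemma~\ref{lma:correctnessTreeDecomposition}). The obstacle you flag at the end---that every primitive inside Lemma~\ref{lma:balancedseparatorconstruction} must run collectively on the disjoint parts---is exactly what the paper discharges via the near-disjoint part-wise aggregation machinery of Appendix~\ref{appendix:subgraphoperation}, so the argument is sound.
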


The complete
details of our distributed implementation, including the formal correctness proof of our tree decomposition algorithm, are given in Appendix~\ref{appendix:DistTreeDecomposition}.

\section{Distributed Distance Labeling in Low-Treewidth Graphs}
\label{sec:distancelabeling}
The proofs of all the lemmas and theorems for this section are deferred to Appendix~\ref{appendix:distancelabeling}.
\subsection{Outline}
Consider the weighted and directed input graph $G = (V(G), E(G))$ with edge cost function $c_G : E(G) \to \mathbb{N}$. The 
distance labeling problem is formally defined as follows:
\begin{definition}[\textbf{Distance Labeling (\textsf{DL})}] Distance labeling consists of 
    a labeling function $\Dlabel_G : V(G) \to \{0, 1\}^{\ast}$ that depends on the input graph $G$ (which can be directed and weighted), 
    and a common \emph{decoder} function $\Ddec : \{0, 1\}^{\ast} \times \{0, 1\}^{\ast} \to \mathbb{N}$. The decoder returns the distance $d_G(u, v)$ from two labels $\Dlabel_{G}(u)$ and $\Dlabel_{G}(v)$. The problem requires that each node $v \in V(G)$ outputs its label $\Dlabel_G(v)$.  
\end{definition}

Our distributed implementation of distance labeling adopts a similar approach to the algorithm of 
Li and Parter~\citep{LP19} for planar graphs, whose structure is a slightly modified version of the 
distance labeling scheme by Gavoille et al.~\citep{GPPR2004}. Our implementation is a recursive 
algorithm utilizing tree decomposition, and can roughly be stated as follows: Let $G$ be a weighted 
directed input graph, $\Phi = (T, \{B_x\}_{x \in V(T)})$ be the (rooted) tree 
decomposition of $G$ constructed by the algorithm of Theorem~\ref{thm:treedecomposition}. 
The algorithm recursively and independently constructs distance labels for each graph in $\Gcal_1$, 
and then each node 
$u$ in $G = G_{\psi}$ learns the distances from/to all of the nodes in $B_{\psi}$ and stores them in the 
label of $u$ constructed in $G_x$ (where $G_x$ is defined in Sec.~\ref{subsec:disttreedecomosition}). Consider computing the distance from $u$ to $v$. If the shortest path form $u$ to $v$ does not contain any vertex in $B_{\psi}$, the distance is obtained by the label of $u$ and $v$ for $G_x$. Otherwise, it suffices to take the minimum of the distance from $u$ to $s$ plus that from $s$ to $v$ for every 
$s \in B_{\psi}$, which can computed from the labels of $u$ and $v$. 

We formally define the distance labeling constructed by our CONGEST algorithm.
For simplicity of presentation, we assume that the edge cost function $c_G$ is a mapping 
from $V(G) \times V(G)$ to $\mathbb{N} \cup \{\infty\}$, where we define $c_G(u, v) = \infty$ if 
$(u, v) \not\in E(G)$. A \emph{distance set} 
$d_G(u, X)$ for $u \in V(G)$ and $X \subseteq V(G)$ is defined as the set of tuples 
$(u, v, d_G(u, v))$  and $(v, u, d_G(v, u))$ for all $v \in X$. We also define $B^{\uparrow}_{\Phi}(u) = 
\bigcup_{x' \sqsubseteq \St_{\Phi}(u)} B_{x'}$.
The label $\Dlabel_{G}(u)$ is defined as $\Dlabel_{G}(u) = d_{G}(u, B^{\uparrow}_{\Phi}(u))$.
The decoder function $\Ddec$ is defined as follows:
\begin{align*}
    \Ddec(\Dlabel_{G}(u), \Dlabel_G(v)) = \min_{s \in B^{\uparrow}_{\Phi}(u) \cap B^{\uparrow}_{\Phi}(v)} d_G(u, s) + d_G(s, v).
\end{align*}
Using the tree decomposition algorithm of Theorem~\ref{thm:treedecomposition}, the label size is bounded by 
$\tilde{O}(\TW{G}^2)$ bits. The lemma below guarantees the correctness of this labeling scheme.
\begin{lemma} \label{lma:correctnessDL}
For any $u, v \in V(G)$, $\Ddec(\Dlabel_{G}(u), \Dlabel_{G}(v)) = d_G(u, v)$ holds.
\end{lemma}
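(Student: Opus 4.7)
The plan is to prove the identity $\Ddec(\Dlabel_G(u), \Dlabel_G(v)) = d_G(u, v)$ by matching upper and lower bounds. The $\geq$ direction is immediate from the directed triangle inequality: for any $s \in B^{\uparrow}_\Phi(u) \cap B^{\uparrow}_\Phi(v)$, concatenating a shortest $u \to s$ walk with a shortest $s \to v$ walk witnesses $d_G(u,v) \leq d_G(u,s) + d_G(s,v)$, so minimizing over $s$ preserves the bound. Note that this intersection is always non-empty because the root bag $B_\psi$ is contained in $B^{\uparrow}_\Phi(w)$ for every $w$.

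For the reverse inequality, the plan is to exhibit some $s^* \in B^{\uparrow}_\Phi(u) \cap B^{\uparrow}_\Phi(v)$ that lies on a shortest directed $u \to v$ path, which will yield $d_G(u, s^*) + d_G(s^*, v) = d_G(u, v)$ and hence $d_G(u,v) \ge \Ddec(\Dlabel_G(u), \Dlabel_G(v))$. The first step is to derive a clean combinatorial description of the intersection. Unwinding the definition of $B^{\uparrow}_\Phi(\cdot)$ together with condition (c) of tree decomposition (the set of bag indices containing any fixed vertex $w$ forms a subtree of $T$ rooted at $\St_\Phi(w)$), I would show that $w \in B^{\uparrow}_\Phi(u)$ if and only if $\St_\Phi(w) \sqsubseteq \St_\Phi(u)$. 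Letting $x^*$ denote the LCA of $\St_\Phi(u)$ and $\St_\Phi(v)$ in $T$, this yields the characterization $B^{\uparrow}_\Phi(u) \cap B^{\uparrow}_\Phi(v) = \{w : \St_\Phi(w) \sqsubseteq x^*\}$.

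The main obstacle is the following structural claim: any $u \to v$ path $w_0 = u, w_1, \ldots, w_k = v$ in $G$ contains some $w_i$ with $\St_\Phi(w_i) \sqsubseteq x^*$. The engine is condition (b) of tree decomposition, which supplies a shared bag $B_y$ for each edge $(w_i, w_{i+1})$ and hence forces both $\St_\Phi(w_i) \sqsubseteq y$ and $\St_\Phi(w_{i+1}) \sqsubseteq y$, making these two canonical strings comparable in the ancestor order on $T$. After dispensing with the trivial boundary cases $\St_\Phi(u) = x^*$ and $\St_\Phi(v) = x^*$ (take $s^* = u$ or $s^* = v$ respectively), I would assume $\St_\Phi(u)$ sits strictly below some child $x^* \bullet a$ of $x^*$ and $\St_\Phi(v)$ strictly below a different child $x^* \bullet b$, and consider the smallest index $i$ at which $\St_\Phi(w_i)$ is no longer in the subtree rooted at $x^* \bullet a$. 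Then $\St_\Phi(w_{i-1})$ is in that subtree while $\St_\Phi(w_i)$ is not; yet they must be comparable, so the only possibility is that $\St_\Phi(w_i)$ is a proper ancestor of $x^* \bullet a$, giving $\St_\Phi(w_i) \sqsubseteq x^*$. Applying this claim to any shortest directed path from $u$ to $v$ delivers the witness $s^*$ needed to close the $\leq$ direction, and the two inequalities together establish the lemma.
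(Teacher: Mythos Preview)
Your proof is correct and follows the same high-level outline as the paper: the $\geq$ direction is the triangle inequality, and the $\leq$ direction exhibits an intermediate vertex $s^*$ on a shortest $u\to v$ path lying in $B^{\uparrow}_\Phi(u)\cap B^{\uparrow}_\Phi(v)$, with the comparable case $\St_\Phi(u)\sqsubseteq\St_\Phi(v)$ (or vice versa) handled by taking $s^*\in\{u,v\}$.

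The one genuine difference is in the incomparable case. The paper simply invokes the classical bag-separator property of tree decompositions: with $z$ the LCA of $\St_\Phi(u)$ and $\St_\Phi(v)$, the bag $B_z$ separates $u$ from $v$ in $G$, so any shortest path meets $B_z\subseteq B^{\uparrow}_\Phi(u)\cap B^{\uparrow}_\Phi(v)$. You instead give a first-principles argument: you characterize the intersection as $\{w:\St_\Phi(w)\sqsubseteq x^*\}$, observe via condition~(b) that consecutive canonical strings along any path are $\sqsubseteq$-comparable, and track the first index where the path's canonical string leaves the subtree below $x^*\bullet a$. This is a clean, self-contained derivation of exactly the separator fact the paper quotes; it buys independence from that background lemma at the cost of a few extra lines. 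Either route is fine.
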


\subsection{Distance Labeling Construction}

We explain the construction of $\Dlabel_G(u)$ for all $u \in V(G)$ in the CONGEST model. First we 
introduce the graph $H_{x}$ associated with each $B_{x}$ as follows:
\begin{itemize}
    \item If $x \in V(T)$ is a leaf node, we define $H_{x} = G_x$. 
    \item \sloppy{Otherwise, $V(H_x) = B_{x}$. An edge $(u, v)$ is contained in $E(H_x)$ if and only if 
    $d_{G_x}(u, v)$ is 
    finite or $(u, v) \in E(G)$ holds. The edge cost $c_{H_x}(u, v)$ is defined as $c_{H_x}(u, v) = 
    \min\{c_G(u, v), \min_{i \in \CHT(x)} d_{G_{x \bullet i}}(u, v)\}$.}
\end{itemize}
The key properties of the graph $H_x$ are stated in the following two lemmas. 
\begin{lemma} \label{lma:distH}
For any $u, v \in V(H_x)$, $d_{H_x}(u, v) = d_{G_x}(u, v)$ holds.
\end{lemma}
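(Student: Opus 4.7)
The plan is to prove both inequalities $d_{H_x}(u,v) \leq d_{G_x}(u,v)$ and $d_{G_x}(u,v) \leq d_{H_x}(u,v)$ directly from the definitions, without induction on the decomposition tree, since $c_{H_x}$ is defined in terms of $d_{G_{x \bullet i}}$ rather than $d_{H_{x \bullet i}}$. The leaf case is immediate because $H_x = G_x$, so I focus on internal $x$, where $V(H_x) = B_x$.

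For the direction $d_{G_x}(u,v) \leq d_{H_x}(u,v)$, I would take a shortest $u$-to-$v$ walk $P$ in $H_x$ and expand each edge of $P$ into a walk of equal length in $G_x$. By the definition of $c_{H_x}$, every edge $(a,b)$ of $P$ satisfies either $c_{H_x}(a,b) = c_G(a,b)$, in which case I substitute the single edge $(a,b) \in E(G_x)$, or $c_{H_x}(a,b) = d_{G_{x \bullet i}}(a,b)$ for some $i \in \CHT(x)$, in which case I substitute a shortest $a$-to-$b$ walk inside $G_{x \bullet i}$. Since each $G_{x \bullet i}$ is a subgraph of $G_x$ by the construction in Section~\ref{subsec:disttreedecomosition}, concatenating these expansions yields a walk in $G_x$ of length exactly $d_{H_x}(u,v)$.

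For the reverse direction $d_{H_x}(u,v) \leq d_{G_x}(u,v)$, I would start with a shortest $u$-to-$v$ walk $W$ in $G_x$ and decompose it at its visits to $B_x$, writing $W = W_1 W_2 \cdots W_k$ so that each segment $W_j$ runs from a vertex $a_j \in B_x$ to a vertex $b_j \in B_x$ and contains no other vertex of $B_x$ in its interior. The crux is that since $B_x$ separates the components $G'_{x \bullet 0}, \ldots, G'_{x \bullet (N-1)}$ of $G_x - B_x$, every interior vertex of $W_j$ lies in a single component $V(G'_{x \bullet i_j})$; together with the two boundary edges incident to $a_j$ and $b_j$, the segment $W_j$ is then a walk in $G_{x \bullet i_j}$ (because $E(G_{x \bullet i_j})$ includes precisely those $G_x$-edges joining $V(G'_{x \bullet i_j})$ to $B_x$). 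Hence $|W_j| \geq d_{G_{x \bullet i_j}}(a_j, b_j) \geq c_{H_x}(a_j, b_j)$, and if $W_j$ has no interior vertex at all, then $(a_j, b_j) \in E(G)$ and $|W_j| = c_G(a_j, b_j) \geq c_{H_x}(a_j, b_j)$. In either subcase $(a_j, b_j) \in E(H_x)$, so the concatenation $(a_1, b_1), (a_2, b_2), \ldots, (a_k, b_k)$ is a walk in $H_x$ from $u$ to $v$ of total cost at most $|W| = d_{G_x}(u,v)$.

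The main obstacle will be the walk-decomposition step in the second direction: one must carefully verify that a maximal $B_x$-free segment of a walk in $G_x$ indeed lies within a single $G_{x \bullet i_j}$, which relies on the structural fact that $V(G_{x \bullet i})$ equals $V(G'_{x \bullet i})$ augmented with exactly the $B_x$-neighbors of $V(G'_{x \bullet i})$, together with the corresponding boundary edges. Some additional bookkeeping is needed to handle directedness (the argument goes through unchanged by following the oriented edges along the walk) and to dispose of $c_{H_x}$-values that are formally $\infty$, but these do not alter the core argument.
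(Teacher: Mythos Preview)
Your proposal is correct and follows essentially the same approach as the paper: the paper also proves $d_{H_x}(u,v) \leq d_{G_x}(u,v)$ by taking a shortest path in $G_x$, splitting it at its visits to $V(H_x)=B_x$, noting that each maximal $B_x$-free segment lies in a single $G_{x\bullet i}$ (or is a single edge), and hence yields an $H_x$-edge of no greater cost; the opposite inequality is handled there with the one-line remark ``by the converse argument,'' which is precisely your edge-expansion direction.
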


\begin{lemma} \label{lma:distUpdate}
Let $u$ and $v$ be any two vertices in $V(G_{x \bullet i}) \cup B_x$ for some $i \in \CHT(x)$. Then the following equality holds.
\begin{align*}
d_{G_x}(u, v) &= 
\min \{d_{G_{x \bullet i}}(u, v), \\ 
& \quad \quad \min_{s, s' \in V(H_x)} (d_{G_{x \bullet i}}(u, s) + d_{H_x}(s, s')  + d_{G_{x \bullet i}}(s', v))\}. 
\end{align*}
\end{lemma}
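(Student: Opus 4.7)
}
The plan is to prove the two inequalities separately, using $B_x$ as a separator of $G_x$ (which follows from the tree decomposition property applied to the recursive construction in Section~\ref{subsec:disttreedecomosition}, since the connected components of $G_x - B_x$ are exactly the $V(G'_{x \bullet j})$'s).

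For the direction $d_{G_x}(u,v) \le \text{RHS}$, I would observe that each quantity on the right-hand side is the length of some concatenated walk in $G_x$ from $u$ to $v$. The first term satisfies $d_{G_{x\bullet i}}(u,v) \ge d_{G_x}(u,v)$ since $G_{x\bullet i}$ is a subgraph of $G_x$ with the same edge weights. For the second term, fix any $s, s' \in V(H_x) = B_x$. Using Lemma~\ref{lma:distH}, $d_{H_x}(s,s') = d_{G_x}(s,s')$, so by the triangle inequality in $G_x$ together with the fact that $d_{G_{x\bullet i}}(u,s) \ge d_{G_x}(u,s)$ and $d_{G_{x\bullet i}}(s',v) \ge d_{G_x}(s',v)$, we get $d_{G_x}(u,v) \le d_{G_{x\bullet i}}(u,s) + d_{H_x}(s,s') + d_{G_{x\bullet i}}(s',v)$, and taking the minimum over $s,s'$ preserves this.

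For the direction $d_{G_x}(u,v) \ge \text{RHS}$, I would let $P$ be a shortest $u$-$v$ walk in $G_x$ and split into cases according to whether $P$ meets $B_x$. If $P$ does not meet $B_x$, then because $V(G'_{x\bullet i})$ is a full connected component of $G_x - B_x$, the walk $P$ stays inside $V(G'_{x\bullet i}) \subseteq V(G_{x\bullet i})$, using only edges of $G'_{x\bullet i} \subseteq G_{x\bullet i}$; hence $d_{G_{x\bullet i}}(u,v) \le |P| = d_{G_x}(u,v)$, which is the first option in the minimum. Otherwise, let $s$ be the first vertex of $P$ in $B_x$ (taking $s = u$ if $u \in B_x$) and $s'$ the last vertex of $P$ in $B_x$ (taking $s' = v$ if $v \in B_x$), and split $P$ into prefix $u \to s$, middle $s \to s'$, and suffix $s' \to v$.

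The crucial and most delicate step is verifying that the prefix and suffix are genuine walks in $G_{x\bullet i}$, so that their lengths upper bound $d_{G_{x\bullet i}}(u,s)$ and $d_{G_{x\bullet i}}(s',v)$ respectively. The interior vertices of the prefix lie in $V(G_x) \setminus B_x$, and the connected component argument forces them all into $V(G'_{x\bullet i})$ (using that one endpoint is either $u \in V(G_{x\bullet i})$ with $u \notin B_x$ implying $u \in V(G'_{x\bullet i})$, or $s$ is a $B_x$-vertex adjacent to $V(G'_{x\bullet i})$ so already in $V(G_{x\bullet i})$). The edges of the prefix are then either internal to $G'_{x\bullet i}$ or cross between $V(G'_{x\bullet i})$ and $B_x$, which by the definition $G_{x\bullet i} = G'_{x\bullet i} + \{(u,v) \in (V(G'_{x\bullet i}) \times B_x) \cap E(G_x)\}$ are all in $E(G_{x\bullet i})$; an analogous argument handles the suffix. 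The middle portion is a walk from $s$ to $s'$ in $G_x$, so its length is at least $d_{G_x}(s,s') = d_{H_x}(s,s')$ by Lemma~\ref{lma:distH}. Adding the three lower bounds gives $|P| \ge d_{G_{x\bullet i}}(u,s) + d_{H_x}(s,s') + d_{G_{x\bullet i}}(s',v)$, which is at least the second option in the minimum. The main obstacle is handling the corner cases where $u$ or $v$ themselves lie in $B_x \setminus V(G_{x\bullet i})$; these are resolved cleanly by adopting the convention $d_{G_{x\bullet i}}(w,w)=0$ together with the choices $s=u$ or $s'=v$ above, which collapse the corresponding prefix or suffix to length zero.
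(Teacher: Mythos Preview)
Your proposal is correct and follows essentially the same approach as the paper: case on whether a shortest $u$--$v$ path in $G_x$ meets $B_x$, and when it does, take $s$ and $s'$ to be the first and last vertices of the path in $B_x$ and invoke Lemma~\ref{lma:distH} for the middle segment. Your write-up is in fact more careful than the paper's, which treats the $\leq$ direction and the corner case $u\in B_x$ (or $v\in B_x$) only implicitly.
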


The construction of the labels follows a bottom-up recursion over the decomposition tree $T$.
More precisely, the proposed algorithm constructs $\Dlabel_{G_x}(u)$ for all $u \in V(G_{x})$,
provided that $\Dlabel_{G_{x \bullet i}}(u)$ for all $u \in G_{x \bullet i}$ and $i \in \CH(T, x)$ are 
available. The outline of the algorithm is as follows:
\begin{enumerate}
   \item If $x$ is a leaf in $T$, each node $u \in V(G_{x})$ broadcasts the information of the edges incident to $u$ in $G_x$ to the nodes in $G_{x}$, i.e., each node $u$ knows the entire information 
   of $G_{x}$. Since the collection of the graphs for every leaf $x$ is not vertex disjoint, we implement this process by introducing a slightly generalized version of the 
   part-wise aggregation (see Appendix~\ref{appendix:subgraphoperation} for the details). By solving
   the all-pairs shortest paths problem locally, $u$ obtains the label $\Dlabel_{G_{x}}(u)$.  
   If $x$ is not a leaf, the algorithm executes steps 2-4.
   \item  For all $i \in \CHT(x)$, the algorithm 
   recursively constructs the distance labeling $\Dlabel_{G_{x \bullet i}}(u)$ for $u \in V(G_{x \bullet i})$ utilizing
   $\Phi' = (T(x \bullet i), \{B_{x'}\}_{x \bullet i \sqsubseteq x'})$ as the tree decomposition of $G_{x \bullet i}$.
   Since the node $x \bullet i$ is the root of 
   $T(x \bullet i)$, $B_{x \bullet i} \subseteq B^{\uparrow}_{\Phi'}(u)$ trivially holds. That is, 
   $\Dlabel_{G_{x \bullet i}}(u) = d_{G_{x \bullet i}}(u, B^{\uparrow}_{\Phi'}(u))$ necessarily contains the information 
   of the distance
   set $d_{G_{x \bullet i}}(u, B_{x \bullet i})$. Thus, each node $u \in B_{x}$ can identify 
   the edges incident to $u$ in $H_{x}$. This process is done via local computation.
   \item Each node $u \in B_{x}$ broadcasts the set of edges incident to $u$ in $H_{x}$ to all nodes in $G_{x}$, which is also implemented by the generalized part-wise aggregation.
   \item Using the information received at step 3, each node $u \in V(G_{x})$ locally knows $H_{x}$.
   Following the formula of Lemma~\ref{lma:distUpdate}, $u$ updates the distance set 
   $d_{G_{x \bullet i}}(u, B^{\uparrow}_{\Phi'}(u))$ to $d_{G_{x}}(u, B^{\uparrow}_{\Phi'}(u))$, and learns the distance set $d_{G_x}(u, B_x)$.
\end{enumerate}
We state the following theorem.
\begin{theorem} \label{thm:dlabel}
Let $G = (V(G), E(G))$ be any directed graph with edge cost function $c:E(G) \to \mathbb{N}$.
Then there exists a randomized CONGEST algorithm that solves $\textsf{DL}$ in 
$\tilde{O}(\TW{G}^2D + \TW{G}^5)$ rounds with probability at least $1 - 1/n^{9}$. The label size of each node 
is $O(\TW{G}^2\log^2n)$ bits. 
\end{theorem}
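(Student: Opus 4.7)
The plan is to establish correctness and complexity of the four-step bottom-up recursion above, instantiated on the tree decomposition $\Phi = (T, \{B_x\}_{x \in V(T)})$ produced by Theorem~\ref{thm:treedecomposition}. For correctness I would proceed by induction on depth in $T$. At a leaf $x$, step~1 delivers the full topology of $G_x$ to every $u \in V(G_x)$, so a purely local all-pairs shortest-path computation yields $\Dlabel_{G_x}(u) = d_{G_x}(u, B^{\uparrow}_{\Phi}(u))$ exactly. At an internal node $x$, the inductive hypothesis gives $\Dlabel_{G_{x \bullet i}}(u) = d_{G_{x \bullet i}}(u, B^{\uparrow}_{\Phi'}(u))$ for each child; since $B_{x \bullet i} \subseteq B^{\uparrow}_{\Phi'}(u)$, every $u \in B_x$ can locally read off the child-side distances $d_{G_{x \bullet i}}(u, v)$ for $v \in B_x$, which by construction are exactly the weights that determine its incident edges in $H_x$. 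After step~3 every node of $G_x$ knows $H_x$ globally; Lemma~\ref{lma:distH} then ensures $d_{H_x}(u,v) = d_{G_x}(u,v)$ on bag pairs, and the update of step~4 is precisely the identity of Lemma~\ref{lma:distUpdate}, so it correctly lifts $d_{G_{x \bullet i}}(u, B^{\uparrow}_{\Phi'}(u))$ to $d_{G_x}(u, B^{\uparrow}_{\Phi}(u))$. Lemma~\ref{lma:correctnessDL} closes the argument by guaranteeing that the decoder reports exact pairwise distances.

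The main technical obstacle lies in implementing steps~1 and~3 in CONGEST: the collection $\{G_x \mid x \in A_\ell(T)\}$ at any fixed depth $\ell$ is \emph{not} vertex disjoint, because distinct $G_x$'s share ancestor bag vertices. A direct application of standard part-wise aggregation therefore fails. My plan is to route all broadcasts through the vertex-disjoint ``skeletons'' $G'_x = G_x - B_{p(x)}$ introduced in Section~\ref{subsec:disttreedecomosition}, using the generalized subgraph-aggregation primitive developed in Appendix~\ref{appendix:subgraphoperation}: ancestor-bag vertices first relay their data into the skeleton that contains them, the aggregation runs on the disjoint $G'_x$'s using a shortcut of quality $\tilde{O}(\TW{G})$, and finally the results are propagated back up along the bag-containment chain. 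Because bag containment forms a tree of depth $O(\log n)$, this back-propagation adds only a polylogarithmic overhead.

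With the aggregation primitive in hand, the round analysis proceeds level by level. By Theorem~\ref{thm:treedecomposition} each bag has size $|B_x| = O(\TW{G}^2 \log n)$, so $H_x$ has $O(\TW{G}^4 \log^2 n)$ weighted edges and step~3 amounts to broadcasting that many $O(\log n)$-bit scalars inside $G_x$. Pipelining them through a part-wise aggregation of latency $\tilde{O}(\TW{G}^{2} D + \TW{G}^{3})$ per batch, together with the $O(\log n)$ levels of $T$, yields a contribution of $\tilde{O}(\TW{G}^{2} D + \TW{G}^{5})$ rounds; step~4 is purely local and adds nothing. Adding the $\tilde{O}(\TW{G}^{2} D + \TW{G}^{3})$ cost of the preliminary decomposition gives the claimed overall bound. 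The $1 - 1/n^9$ success probability follows by a union bound over the $O(\log n)$ separator invocations used inside Theorem~\ref{thm:treedecomposition}, each correct with high probability by Lemma~\ref{lma:balancedseparatorconstruction}.

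Finally, the label-size bound is immediate: $\Dlabel_G(u) = d_G(u, B^{\uparrow}_\Phi(u))$ stores one $O(\log n)$-bit distance per vertex of $B^{\uparrow}_\Phi(u)$, and $|B^{\uparrow}_\Phi(u)| \le \sum_{x' \sqsubseteq \St_\Phi(u)} |B_{x'}| = O(\TW{G}^2 \log n \cdot \log n)$ by width $O(\TW{G}^2 \log n)$ and depth $O(\log n)$, giving $O(\TW{G}^2 \log^2 n)$ bits as claimed.
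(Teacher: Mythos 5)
Your proposal is correct and follows essentially the same route as the paper: correctness from Lemmas~\ref{lma:distH} and~\ref{lma:distUpdate} by induction over the decomposition tree, implementation of the broadcasts in steps~1 and~3 via the generalized part-wise aggregation on the vertex-disjoint skeletons $G'_x$ (Appendix~\ref{appendix:subgraphoperation} and Corollary~\ref{corol:multibroadcast}), and a per-level cost of $\tilde{O}(\TW{G}D + \TW{G}^5)$ from broadcasting the $\tilde{O}(\TW{G}^4)$ bits describing $H_x$, summed over the $O(\log n)$ levels and added to the cost of Theorem~\ref{thm:treedecomposition}. Two cosmetic imprecisions do not affect the stated bounds: the multi-source broadcast costs $\tilde{O}(\TW{G}D + h\TW{G})$ rather than ``$\tilde{O}(\TW{G}^2D+\TW{G}^3)$ per batch,'' and your label-size tally ($O(\TW{G}^2\log^2 n)$ entries of $O(\log n)$ bits each) literally yields $O(\TW{G}^2\log^3 n)$ bits, matching the theorem's figure only up to a logarithmic factor.
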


\section{Stateful walks}
\label{sec:statefulwalk}

\subsection{Definition}

Let $G= (V(G), E(G), \gamma_G)$ be a directed multigraph, where $\gamma_G$ is a mapping from each element in $E(G)$ to an ordered pair in $V(G) \times V(G)$ (as $G$ is a multigraph, we cannot assume that elements in $E(G)$ are of the form $(u,v)$, and thus $\gamma_G$ is required).
A walk $w = e_1, e_2, \dots, e_{\ell}$ in $G$ is a sequence of edges in $E(G)$ 
such that for any $i \in [1, \ell -1]$, $\gamma_G(e_{i})[1] = \gamma_G(e_{i+1})[0]$ holds. 
To describe the vertices in the walk explicitly, it can also be represented as an alternating 
sequence of vertices and edges, $w = u_0, e_1, u_1, e_2, u_2, \dots, e_{\ell}, u_{\ell}$, such that $\gamma_G(e_1)[0] = u_0$ and $\gamma_G(e_i)[1] = u_i$ holds for any $i \in [1, \ell]$. We denote by $W_G$ the set of all finite-length walks in $G$, and 
also denote the set of all walks from vertex $s$ to vertex $t$ in $W_G$ by $W_{G}(s, t)$.
The walk of length zero in $W_G$ is denoted by $\phi$. 
For two walks $x, y \in W_G$ such that the last vertex of $x$ and the first vertex of $y$ are the same, their concatenation is denoted by $x \circ y$. If $y$ (resp. $x$) is a walk of 
length one consisting of an edge $e \in E(G)$, we use the notation $x \circ e$ 
(resp. $e \circ y$). A walk $w$ is often treated as a subgraph of $G$, i.e., $V(w)$ and $E(w)$ respectively denote the vertices and edges in $w$.

A \emph{walk constraint} is 
a subset $C \subseteq W_G$. That is, a walk constraint limits the set of graph walks 
to the subset $C$. Given a walk constraint $C$, we denote by $W_{G, C}(s, t)$ the set of all walks 
from $s$ to $t$ in $C$. If $W_{G, C}(s, t)$ is not empty, we say that $t$ is \emph{$C$-reachable} 
from $s$ in $G$. The \emph{$C$-distance} from $s$ to $t$, denoted by $d_{G, C}(s, t)$, is the shortest (weighted) length of all the walks in $W_{G, C}(s, t)$.  We consider a variation of the 
single-source shortest paths problem, which requires that for a given constraint $C$ and source node $s$
each node $v \in V(G)$ outputs the shortest walk from $s$ to $v$ in $C$, as well as its length 
$d_{G, C}(s, v)$. 
As discussed previously, this problem becomes meaningful only when $C$ is provided to the nodes of 
the graph in an \emph{implicit} and \emph{distributed} manner. To formally define the above, we present 
the notion of a \emph{stateful walk constraint}, followed by an intuitive description of the definition.
\begin{definition}[Stateful Walk Constraint] \label{def:statefulwalk}
Let $Q$ be any set containing two special elements $\perp$ and $\Auxinit$. A 
walk constraint $C \subset W_G$ is called \emph{stateful} if it contains $\phi$ and 
there exist a function $\Mem_C : W_G \to Q$ and a tuple of transition functions 
$\boldsymbol{\Trans_C} = (\Trans_{C, e})_{e \in E(G)}$ associated with each edge $e \in E(G)$, where
$\delta_{C, e}$ is a mapping from $Q$ to $Q$, satisfying the following three conditions:
\begin{enumerate}
    \item $\Mem_C(w) \neq \perp$ if and only if $w \in C$. In addition, $\Mem_C(w) = \Auxinit$ if and only if $w = \phi$.
    \item For any $w \in W_G$ terminating at $u$ and $e = (u, v) \in E(G)$, $\Trans_{C, e}(\Mem_C(w)) = \Mem_C(w \circ e)$ holds.
    \item For any $e \in E(G)$, $\Trans_{C,e}(\perp) = \perp$.
\end{enumerate}
\end{definition}
We omit the subscript $C$ of $\Mem$ and $\Trans$ when it is clear from the context.
The rough intuition of the definition above is as follows: Every walk $w \in W_G$ has a state in $Q$ (which is referred to as the \emph{state} of $w$ hereafter). Starting from the length-zero walk $\phi$, which has the special state $\Auxinit$, the state of the walk 
changes as the length of the walk increases. The function $\delta_{e}$ determines how the state of a given 
walk $w$ changes by appending edge $e$ to its tail. The second condition implies that 
the state of $w \circ e$ is determined only by the state of $w$ and the edge $e$ and is independent of any other feature of $w$. The state $\perp$ is a ``reject'' state, which implies $w$ 
does not satisfy the constraint $C$ (i.e., $w\notin C$). Condition 3 implies that once a walk $w$ does not satisfy $C$ (i.e., $M_C(w)=\perp$), no addition of edges to $w$ can make it satisfy $C$. 
Let $\Eout{G}(u)$ be the set of outgoing edges from $u$ in $G$.
Assuming each node $u$ knows the transition function $\delta_{C, e}$ for all $e \in \Eout{G}(u)$, a stateful walk constraint implies that each node $u$ can \emph{locally} decide the state of a walk $w$ leaving $u$ only from the state of the incoming prefix of the walk.
For a stateful walk constraint $C$, a walk with a state other than $\perp$ (i.e. a walk in $C$) 
is called a \emph{stateful} walk. We now present two concrete examples of stateful walks.
\begin{example}[$c$-Colored Walk] \label{example:color}
Here we assume edges have colors, and we are interested in walks where no two consecutive 
edges are monochromatic. Consider the edge label domain $\Sigma$ of cardinality $c$ (i.e. 
color palette), and an assignment $f : E(G) \to \Sigma$ of a color to each edge. 
A \emph{$c$-colored walk} $w = e_1, e_2, \dots, e_{\ell}$ is a walk satisfying $f(e_{i}) \neq f(e_{i+1})$ 
for all $1 \leq i \leq \ell - 1$. The set of all $c$-colored walks $\Ccolor{c} \subseteq W_G$
is a stateful walk constraint associated with the following triple $(Q, \Mem, \boldsymbol{\Trans})$: 
The state domain is $Q = \Sigma \cup \{\perp, \Auxinit\}$. For any $w \in W_G$, $\Mem(w)$ represents 
$f(e)$ for the last edge $e$ in $w$ if $w \in \Ccolor{c}$, $\Auxinit$ if
$w = \phi$, and $\perp$ otherwise. The state $\Trans_e(q)$ is $f(e)$ if $q \neq f(e)$, 
and $\perp$ otherwise.
\end{example}
\begin{example}[count-$c$ Walk] \label{example:count}
Here we assume edges are assigned a binary value (i.e., zero or one), and we are interested in walks 
that contain at most $c$ edges of value one. A \emph{count-$c$ walk} 
$w = e_1, e_2, \dots, e_{\ell}$ is a walk satisfying $\sum_{i \in [1, \ell]} f(e_i) \leq c$.
The set of all count-$c$ walks, $\Ccount{c}$, is a stateful walk constraint associated with the following triple $(Q, \Mem, \boldsymbol{\Trans})$: The state domain is defined as $Q = [0, c] \cup \{\perp, \Auxinit\}$. 
For any 
$w \in W_G$, $\Mem(w)$ represents $\sum_{e \in w} f(e)$ if it is within the range $[0, c]$, $\Auxinit$ if $w = \phi$, and $\perp$ otherwise. The state $\Trans_e(q)$ is $q + f(e)$ if $q \not\in \{\Auxinit, \perp\}$ and $q + f(e) \leq c$ hold, $f(e)$ if $q = \Auxinit$, and $\perp$ otherwise. 
\end{example}
As seen in the examples above, the specification of the function $\delta_e$ typically 
relies only on the edge label $f(e)$, but this characteristic is not mandatory. 
\paragraph{Subsets of stateful walk constraints}
Let $C$ be a stateful walk constraint associated with the triple 
$(Q, \Mem, \boldsymbol{\delta})$. We denote by $C(q)$ the set of all walks with state $q$ in $C$, 
and define $C(Q') = \cup_{q \in Q'} C(q)$ for $Q' \subseteq Q$. 
By definition, $C(Q')$ for any $Q' \subseteq Q$ is a walk constraint (but not 
necessarily stateful). For example, while the stateful constraint for count-$c$ walks considers all walks of count \emph{at most $c$}, we can define \emph{exact} count-$c$ walks (where the count is exactly $c$) as $C(c) \subset C$. 

\subsection{Finding Stateful Walks}
\label{subsec:auxiliarygraph}
In this section, we show how to reduce the problem of finding shortest stateful walks to the problem of finding 
\emph{unconstrained} shortest walks.
We present a general framework for reducing the constrained version of the shortest paths problem,
for any stateful constraint $C \subseteq W_G$, into the unconstrained version in some
auxiliary directed graph $G_{C}$.
The construction of $G_{C}$ is defined as follows:
\begin{itemize}
    \item \sloppy{
    Let $U_Q(u) = \{(u, i) \mid i \in Q\}$, and define $V(G_C) = \bigcup_{u \in V(G)} U_Q(u)$ (i.e., $V(G_C) = V(G) \times Q$).}
    \item $((u, i), (v, j)) \in E(G_{C})$ if and only if one of the following conditions holds: (1) There exists an edge $e = (u, v) \in E(G)$ satisfying $\Trans_{C,e}(i) = j$. (2) $u = v$, $i \neq \perp$, and $j = \perp$ hold.
    \item If the input graph $G$ is weighted (by an edge-cost function $c : E(G) \to \mathbb{N}$),
    for any $u, v \in V(G)$, assign the cost $c(u, v)$ to any edge 
    $E(G_C) \cap (U_Q(u) \times U_Q(v))$.
\end{itemize}

The intuition of the above construction is that we wish to break down the 
vertex $u$ into $U_Q(u)$ in order to distinguish walks entering $u$ with different states. The 
vertex $(u, i)$ can be seen as the arrival vertex of any walk $w$ to $u$ with state $i$. Since we add an edge between $(u, i)$ and $(v, j)$ if and only if $\Trans_{C, e}(i) = j$ holds, the walk $w \circ e$, which has state $\Trans_{C, e}(i) = j$ in the original graph $G$, 
always reaches $(v, j)$. An illustration of this construction is presented in Figure~\ref{fig:graph-construction}. Note that the second condition is introduced in order to bound 
the diameter of $\Comm{G_{C}}$ by $O(D)$. The distance from any node in $U_Q(u)$ to any node in $U_Q(v)$ 
is at most $d_{\Comm{G}}(u, v) + 2$ because there exists a walk from $(u, \perp)$ to $(v, \perp)$ of 
length $d_{\Comm{G}}(u, v)$ (recall that condition 2 of Definition~\ref{def:statefulwalk} implies $((u, \perp), (v, \perp)) \in E(G_C)$ for any $e = (u, v) \in E(G)$).  We state the following lemma.
\begin{lemma} \label{lma:shorteststatefulwalk}
Let $G = (V(G), E(G), \gamma_G)$ be any multigraph with edge-cost function 
$c:E(G) \to \mathbb{N}$, and $C \subseteq W_G$ be a stateful walk constraint with associated triple 
$(Q, \Mem, \boldsymbol{\Trans})$. There exists a walk $w$ of weighted length $x$ from $s$ to $t$ with state $q$ 
($s, t \in V(G)$, $q \in Q \setminus \{\perp\}$) in $C$ 
if and only if there exists a walk $w'$ of weight $x$ from $(s, \Auxinit)$ to $(t, q)$ in $G_C$. 
\end{lemma}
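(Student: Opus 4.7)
The plan is to exhibit an explicit weight-preserving bijection between the walks $w \in W_{G,C}(s,t) \cap C(q)$ and the walks $w'$ in $G_C$ from $(s, \Auxinit)$ to $(t, q)$, which immediately yields both directions of the equivalence. The bijection pairs a walk $w = u_0, e_1, u_1, \ldots, e_\ell, u_\ell$ in $G$ with the walk in $G_C$ obtained by annotating each vertex with the state of the corresponding prefix, i.e.\ $w' = (u_0, \Mem(w_0)), e_1, (u_1, \Mem(w_1)), \ldots, e_\ell, (u_\ell, \Mem(w_\ell))$, where $w_i = e_1, \ldots, e_i$ denotes the length-$i$ prefix (and $w_0 = \phi$). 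Since each edge $e_i$ is an edge of $G$ appearing in both walks, and the edge costs in $G_C$ are defined to coincide with $c(e_i)$, the weights agree immediately; it remains only to verify well-definedness and surjectivity.

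For the forward direction, I would proceed by induction on the length $\ell$ of $w$. The base case $\ell = 0$ is the singleton walk at $(s, \Auxinit)$, and by condition~1 of Definition~\ref{def:statefulwalk} this is exactly the starting vertex when $w = \phi \in C$. For the inductive step, given that the prefix $w_{i-1}$ has state $\Mem(w_{i-1}) \neq \perp$ (since $w \in C$ and no extension of a $\perp$-state walk can leave $\perp$ by condition~3), condition~2 gives $\Trans_{C, e_i}(\Mem(w_{i-1})) = \Mem(w_i)$, so the pair $((u_{i-1}, \Mem(w_{i-1})), (u_i, \Mem(w_i)))$ is an edge of $G_C$ by construction (clause~1). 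The terminal vertex is $(t, \Mem(w)) = (t, q)$ as required.

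For the reverse direction, given any walk $w'$ in $G_C$ from $(s, \Auxinit)$ to $(t, q)$ with $q \neq \perp$, I project to the first coordinate to obtain a candidate edge sequence $e_1, \ldots, e_\ell$ in $E(G)$. The key observation, and the only subtle point in the whole proof, is that no ``dummy'' edge of clause~2 (those going from $(u, i)$ to $(u, \perp)$) can ever appear in $w'$: once the second coordinate becomes $\perp$, the only outgoing edges of $(v, \perp)$ are either dummy edges preserving state $\perp$ or edges of clause~1 with $\Trans_{C, e}(\perp) = \perp$ (by condition~3), so the walk would be trapped at the second coordinate $\perp$ and could never reach $(t, q)$ with $q \neq \perp$. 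Consequently every edge of $w'$ is of clause~1, so the projection is a genuine walk in $G$, and an easy induction using condition~2 shows that the second coordinate of the $i$-th vertex is exactly $\Mem(w_i)$, yielding $\Mem(w) = q \neq \perp$, hence $w \in C$ by condition~1.

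The main obstacle is subtle rather than computationally involved: it is the need to argue that the dummy edges, which are introduced solely to control $D(\Comm{G_C})$, cannot ``sneak'' into a valid walk to a non-$\perp$ target. Once this is observed, both directions reduce to routine inductions driven by the defining property $\Trans_{C, e_i}(\Mem(w_{i-1})) = \Mem(w_i)$, and weight preservation is automatic from how $c$ is lifted to $G_C$.
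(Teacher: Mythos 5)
Your proof is correct and follows essentially the same route as the paper: annotate each vertex of the walk with the state of the corresponding prefix and use conditions 1--3 of Definition~\ref{def:statefulwalk} to check that consecutive annotated pairs are edges of $G_C$, with weight preservation being immediate from how $c$ is lifted. In fact you are more thorough than the paper on the converse direction (which the paper dismisses as ``inverting the argument''): your observation that the clause-2 dummy edges trap the walk at second coordinate $\perp$ and hence cannot occur on any walk reaching $(t,q)$ with $q \neq \perp$ is exactly the point that needs to be made explicit there.
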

The proof of the lemma is deferred to Appendix~\ref{appendix:statefulwalk}. For any stateful walk constraint $C$, its state $q$, and two vertices $s, t \in V_G$, this lemma allows us to compute the shortest walk in $C(q)$ from $s$ to $t$ (and its distance) by computing the directed shortest path from $(s, \Auxinit)$ to $(t, q)$ in $G_C$. Letting 
$p_{\max}$ be the maximum edge multiplicity of the original graph $G$, it is easy to simulate 
the execution of any CONGEST algorithm for $\Comm{G_{C}}$ on top of the original communication graph $\Comm{G}$ with $O(|Q| \cdot p_{\max})$-round overhead: Each node $v \in V(G)$ is responsible for the simulation of the nodes in $U_Q(v)$. Consider the subgraph $H(u, v)$ of $G_{C}$ induced by $U_Q(u) \cup U_Q(v)$. Each node in this subgraph has at most $p_{\max}$ outgoing edges. Thus, the total number of edges 
in $H(u, v)$ is at most $2p_{\max}|Q|$. A single communication round over the links in $E(\Comm{H(u, v)})$ can be achieved by $O(|Q|p_{\max})$ communication rounds over the edge $(u, v) \in E(\Comm{G})$. 
The total number of nodes in $\Comm{G_C}$ is $|Q|n$, and the diameter of $G_{C}$ is $O(D)$. 
It is easy to show that the treewidth of $G_{C}$ is bounded by $O(|Q| \cdot \TW{G})$: Given a tree decomposition of $G$, we replace each vertex $v$ in each bag by $U_Q(v)$. The resulting decomposition 
is obviously a tree decomposition of $G_{C}$ and the bag size is multiplied by $|Q| + 1$.
Consequently, any $f(n, D, \TW{G_C})$-round algorithm in $G_C$ is simulated on the top of $\Comm{G}$
within $O(|Q|p_{\max}f(n, D, \TW{G}(|Q| + 1))$ rounds.
 
In our applications, we are interested in the constrained version of distance labeling schemes, 
which is formalized as follows:
\begin{itemize}
    \item \textbf{Constrained distance labeling (\textsf{CDL($C$)})}: Let $C$ be a stateful walk constraint with associated triple $(Q, \Mem, \boldsymbol{\Trans})$. It consists of a 
    labeling function $\Slabel_{G,C}: V(G) \to \{0, 1\}^{\ast}$, that depends on the input graph $G$, and a common \emph{decoder} function $\Sdec_{C} : Q \times \{0, 1\}^{\ast} \times \{0, 1\}^{\ast} \to \mathbb{N}$. Both functions must satisfy $\Sdec_C(q, \Slabel_{G,C}(u), \Slabel_{G,C}(v)) = d_{G, C(q)}(u, v)$ for any $u, v \in V(G)$ and $q \in Q$. The problem requires that each node $v \in V(G)$ outputs its label $\Slabel_{G, {C}}(v)$. 
\end{itemize}
Note that the input graph $G$ can be directed and weighted. 
The problem \textsf{CDL}($C$) in $G$ is solved by any algorithm for (standard) distance labeling 
in $G_C$. Since a node $u \in V(G)$ simulates all nodes $U_Q(u) \subseteq V(G_C)$, after
the construction of the standard distance labeling for $G_C$, $u$ has the set of
labels $\{\Dlabel_{G_C}((u, i)) \mid i \in Q\}$, which we take as the output of \textsf{CDL}($C$). 
This is because one can obtain the $C(q)$-distance from $u$ to $v$ by computing 
$\Ddec(\Dlabel_{G_C}((u,\Auxinit)), \Dlabel_{G_C}((v, q)))$.
To solve \textsf{CDL($C$)} in a low-treewidth graph, $G$, we use the algorithm of Theorem~\ref{thm:dlabel}.
Running it on $G_C$, we obtain an algorithm for \textsf{CDL($C$)}. Consequently, 
we obtain the following theorem.
\begin{theorem} \label{thm:slabel}
Let $G = (V(G), E(G), \gamma_G)$ be a multigraph of maximum edge multiplicity $p_{\max}$, 
$c:E(G) \to \mathbb{N}$ be an edge cost function of $G$, and 
$C$ be a stateful walk constraint with associated triple $(Q, \Mem, \boldsymbol{\Trans})$. 
Then there exists a randomized CONGEST algorithm that solves $\textsf{CDL}(C)$
in $\tilde{O}(|Q|p_{\max}((|Q|\TW{G})^{2}D + (|Q|\TW{G})^4)))$ rounds whp.
\end{theorem}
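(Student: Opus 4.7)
The plan is to reduce \textsf{CDL}($C$) on the input graph $G$ to the unconstrained distance labeling problem on the auxiliary graph $G_{C}$ of Section~\ref{subsec:auxiliarygraph} and then to invoke Theorem~\ref{thm:dlabel} on $G_{C}$, charging only the overhead of simulating $G_{C}$ on top of $\Comm{G}$. Concretely, each $v \in V(G)$ simulates its entire cluster $U_{Q}(v) = \{(v,i) \mid i \in Q\}$; once the algorithm of Theorem~\ref{thm:dlabel} has produced $\Dlabel_{G_{C}}((v,i))$ for every $i \in Q$, I would take $\Slabel_{G,C}(v)$ to be the concatenation of these $|Q|$ sublabels and set
\[
\Sdec_{C}(q, \Slabel_{G,C}(u), \Slabel_{G,C}(v)) \;=\; \Ddec\bigl(\Dlabel_{G_{C}}((u,\Auxinit)),\, \Dlabel_{G_{C}}((v,q))\bigr).
\]
The specification $\Sdec_{C}(q,\Slabel_{G,C}(u),\Slabel_{G,C}(v)) = d_{G,C(q)}(u,v)$ is then immediate from Lemma~\ref{lma:shorteststatefulwalk}, which equates $d_{G,C(q)}(u,v)$ with the (unconstrained) distance from $(u,\Auxinit)$ to $(v,q)$ in $G_{C}$, combined with Lemma~\ref{lma:correctnessDL} applied to the distance labeling of $G_{C}$.

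The quantitative core is to control the three parameters of $G_{C}$ that enter Theorem~\ref{thm:dlabel}. First, $D(\Comm{G_{C}}) = O(D)$: condition~(2) of the construction adds a local edge from every $(u,i)$ with $i \neq \perp$ to $(u,\perp)$, while the combination of condition~(1) of the construction with condition~(3) of Definition~\ref{def:statefulwalk} ($\Trans_{C,e}(\perp) = \perp$) puts an edge $((u,\perp),(v,\perp))$ into $E(G_{C})$ whenever $(u,v) \in E(G)$, so the $\perp$-stratum is an isomorphic copy of $\Comm{G}$ and every other node is one hop away from it. Second, $\TW{G_{C}} = O(|Q|\,\TW{G})$: starting from any tree decomposition of $\Comm{G}$, I would replace each occurrence of a vertex $u$ in a bag by its entire cluster $U_{Q}(u)$; bag sizes get multiplied by $|Q|+1$, and all three tree-decomposition axioms are preserved because every edge of $G_{C}$ — both the inter-cluster edges produced by condition~(1) and the intra-cluster $\perp$-edges produced by condition~(2) — sits above an original edge or vertex of $\Comm{G}$ that is already covered by some bag.

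The third ingredient is the per-round simulation overhead of $O(|Q|\,p_{\max})$. Across a physical edge $(u,v) \in E(\Comm{G})$, the directed crossing edges of $G_{C}$ arise from condition~(1): each of the at most $p_{\max}$ parallel edges $e$ between $u$ and $v$ (in either orientation) contributes at most $|Q|$ transitions $((u,i),(v,\Trans_{C,e}(i)))$, for a total of $O(|Q|\,p_{\max})$ virtual directed edges in each direction. Each such edge carries an $O(\log n)$-bit CONGEST message per simulated round, so all of them can be serialized across the single physical link $(u,v)$ in $O(|Q|\,p_{\max})$ rounds, while all condition~(2) edges stay inside a cluster and are therefore free. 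Plugging $D(\Comm{G_{C}}) = O(D)$ and $\TW{G_{C}} = O(|Q|\,\TW{G})$ into Theorem~\ref{thm:dlabel} and multiplying by the $O(|Q|\,p_{\max})$ simulation factor yields the announced bound with high probability.

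The step I expect to require the most care is not any single estimate in isolation but the verification that every subroutine invoked inside Theorem~\ref{thm:dlabel} — in particular the tree decomposition of Theorem~\ref{thm:treedecomposition}, the part-wise aggregation of Section~\ref{subsec:partwiseaggregation}, and the bottom-up broadcasts of the auxiliary graphs $H_{x}$ — still works, and still meets its stated bound, when it is executed on $\Comm{G_{C}}$ through the cluster simulation. In particular, one must check that the $\tilde{O}(\TW{G_{C}}^{2})$-bit labels on $G_{C}$ do not exceed the capacity provided by the serialized simulation, and that condition~(2) edges do not smuggle any additional congestion beyond the $O(|Q|\,p_{\max})$ factor already accounted for. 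Once these delicate but routine checks are carried out, the theorem follows by direct substitution of the parameters of $G_{C}$ into the bound of Theorem~\ref{thm:dlabel}.
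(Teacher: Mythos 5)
Your proposal is correct and takes essentially the same route as the paper: construct the auxiliary graph $G_{C}$, bound its diameter by $O(D)$ and its treewidth by $O(|Q|\TW{G})$ via cluster replacement in the bags, simulate it on $\Comm{G}$ with $O(|Q|p_{\max})$ per-round overhead, and run the distance-labeling algorithm of Theorem~\ref{thm:dlabel} on $G_{C}$, decoding through $\Ddec(\Dlabel_{G_C}((u,\Auxinit)),\Dlabel_{G_C}((v,q)))$ together with Lemma~\ref{lma:shorteststatefulwalk}. The only caveat is that literal substitution of $\TW{G_C}=O(|Q|\TW{G})$ into the $\tilde{O}(\TW{G}^{2}D+\TW{G}^{5})$ bound of Theorem~\ref{thm:dlabel} yields exponent $5$ rather than the $4$ appearing in the stated bound, but this discrepancy is already present in the paper's own statement and is not introduced by your argument.
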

While distance labeling only outputs the distance, it can be easily transformed into an algorithm
for finding the shortest stateful walk between any pair of vertices $s, t \in V(G)$.
\begin{corollary} \label{corol:pathconstruction}
Let $G = (V(G), E(G), \gamma_G)$ be a multigraph of maximum edge multiplicity $p_{\max}$, 
$c:E(G) \to \mathbb{N}$ be an edge cost function of $G$, and 
$C$ be a stateful walk constraint with associated triple $(Q, \Mem, \boldsymbol{\Trans})$. 
Then there exists a randomized CONGEST algorithm for constructing the shortest walk 
$w$ in $W_{G, C(q)}(s, t)$ for any given $v$ and $q \in Q$ whp. Each node in $V(w)$ outputs the distance 
along $w$ from $s$ and its predecessor in $w$. The running time of the algorithm is $\tilde{O}(|Q|p_{\max}((|Q|\TW{G})^{2}D + (|Q|\TW{G})^4)))$ rounds.
\end{corollary}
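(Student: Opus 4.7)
The plan is to derive the walk-construction algorithm from Theorem~\ref{thm:slabel} by augmenting the labeling scheme with a broadcast of the source and sink labels, followed by purely local backward reasoning at each node. This adds only low-order overhead to the running time of Theorem~\ref{thm:slabel}.

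First, I would invoke Theorem~\ref{thm:slabel} on $G$ to compute $\Slabel_{G,C}(u) = \{\Dlabel_{G_{C}}((u,i)) \mid i \in Q\}$ at every $u \in V(G)$, at a cost of $\tilde{O}(|Q|p_{\max}((|Q|\TW{G})^{2}D + (|Q|\TW{G})^{4}))$ rounds. To force a \emph{unique} shortest walk, I would perturb edge costs by an injective tiebreaker such as $c'(e) = c(e)\cdot n^{4} + \mathrm{ID}(e)$, so that distinct walks of equal $c$-length acquire distinct $c'$-lengths, costs stay polynomial in $n$, and the unique $c'$-shortest walk from $(s,\Auxinit)$ to $(t,q)$ in $G_{C}$ remains a $c$-shortest walk. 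The labeling of Theorem~\ref{thm:slabel} is re-run against $c'$.

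Second, the designated endpoints $s$ and $t$ broadcast $\Slabel_{G,C}(s)$ and $\Slabel_{G,C}(t)$ over a BFS tree of $\Comm{G}$. Each label carries $|Q|$ sub-labels of $\tilde{O}((|Q|\TW{G})^{2})$ bits, so pipelining finishes in $\tilde{O}(D + |Q|^{3}\TW{G}^{2})$ rounds, which is dominated by Phase~1. After the broadcast, every node $u$ can locally evaluate $d_{G_{C}}((s,\Auxinit),(u,q_{1}))$ and $d_{G_{C}}((u,q_{1}),(t,q))$ for every $q_{1} \in Q$ by applying the decoder $\Sdec_{C}$ to the received labels together with its own label.

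Third, each $u$ declares that the auxiliary vertex $(u,q_{1})$ lies on the shortest walk exactly when $d_{G_{C}}((s,\Auxinit),(u,q_{1})) + d_{G_{C}}((u,q_{1}),(t,q)) = d_{G_{C}}((s,\Auxinit),(t,q))$. For each such $(u,q_{1})$, $u$ asks each neighbor $u'$ in $\Comm{G}$ for $d_{G_{C}}((s,\Auxinit),(u',q''))$ for every $q''$ with $\delta_{C,e}(q'')=q_{1}$ on some edge $e=(u',u) \in E(G)$; the neighbor $u'$ computes these values from $\Slabel_{G,C}(s)$ and its own label, and returns them along the single link $(u,u')$. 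Node $u$ then picks $(u',q'')$ as the predecessor of $(u,q_{1})$ whenever $d_{G_{C}}((s,\Auxinit),(u',q''))+c'(e)=d_{G_{C}}((s,\Auxinit),(u,q_{1}))$, and outputs that predecessor together with the distance $d_{G_{C}}((s,\Auxinit),(u,q_{1}))$ along the walk. This neighbor exchange moves only $O(|Q|p_{\max})$ values per edge and costs $\tilde{O}(|Q|p_{\max})$ rounds.

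The main obstacle I anticipate is making the locally-chosen predecessors stitch together into a single consistent walk, since without tiebreaking multiple distinct shortest walks may coexist and the local choices need not form a connected backward chain. The perturbation trick resolves this cleanly: under $c'$ the shortest walk from $(s,\Auxinit)$ to $(t,q)$ is unique, so the set of $(u,q_{1})$ passing the equality check is exactly its vertex set and each non-source vertex has exactly one admissible predecessor. Since edge costs under $c'$ are strictly positive, this walk is a simple path in $G_{C}$, so even if some $u \in V(G)$ appears on the walk with several different states, each occurrence produces its own unambiguous predecessor entry at $u$. The total round complexity is therefore dominated by Phase~1, matching the bound stated in the corollary.
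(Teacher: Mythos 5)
The paper itself treats this corollary as an immediate consequence of Theorem~\ref{thm:slabel} and gives no explicit proof, so your job is to supply the ``easy transformation,'' and your overall route --- label everything in $G_C$, broadcast the two endpoint labels, mark the auxiliary vertices $(u,q_1)$ with $d_{G_C}((s,\Auxinit),(u,q_1)) + d_{G_C}((u,q_1),(t,q)) = d_{G_C}((s,\Auxinit),(t,q))$, and recover predecessors by a one-hop exchange --- is the natural and essentially correct one. Your round-complexity accounting is also fine: both the pipelined broadcast of the endpoint labels and the $O(|Q|p_{\max})$-value neighbor exchange are dominated by the cost of Theorem~\ref{thm:slabel}. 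You are also right that the whole scheme hinges on the shortest $(s,\Auxinit)$--$(t,q)$ walk in $G_C$ being \emph{unique}; without uniqueness the locally marked vertices form a shortest-path DAG rather than the vertex set of a single walk, and tracing a consistent walk back from $t$ hop by hop would cost $\Omega(|w|)$ rounds, which you cannot afford.

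The genuine gap is in how you enforce that uniqueness. The perturbation $c'(e) = c(e)\cdot n^4 + \mathrm{ID}(e)$ does \emph{not} make distinct walks of equal $c$-length acquire distinct $c'$-lengths: two edge-disjoint shortest paths whose ID-sums happen to coincide (e.g., ID multisets $\{1,4\}$ versus $\{2,3\}$) remain tied under $c'$, so the claim ``the set of $(u,q_1)$ passing the equality check is exactly its vertex set'' can fail. A tiebreaker that is genuinely injective on edge sets, such as $2^{\mathrm{ID}(e)}$, would require exponentially many bits and break both the $O(\log n)$-bit message bound and the $\mathrm{poly}(n)$ value domain assumed by the part-wise aggregation primitives. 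The standard repair --- and the one you should use, since the corollary only claims correctness whp --- is the isolation lemma: add to each edge an independent uniformly random integer from $[1,\mathrm{poly}(n)]$ at a scale below one unit of $c$. Since under strictly positive costs every shortest walk in $G_C$ is a simple path, and distinct simple $s$--$t$ paths have distinct edge sets, random weights isolate a unique minimum-weight shortest path with constant probability (amplified to whp by $O(\log n)$ independent repetitions, each rerunning the labeling). With that substitution the rest of your argument, including the observation that a $G$-vertex appearing with several states yields one unambiguous predecessor entry per state, goes through.
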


\begin{figure}[ht]
	\begin{center}
		\includegraphics[clip, scale=0.25]{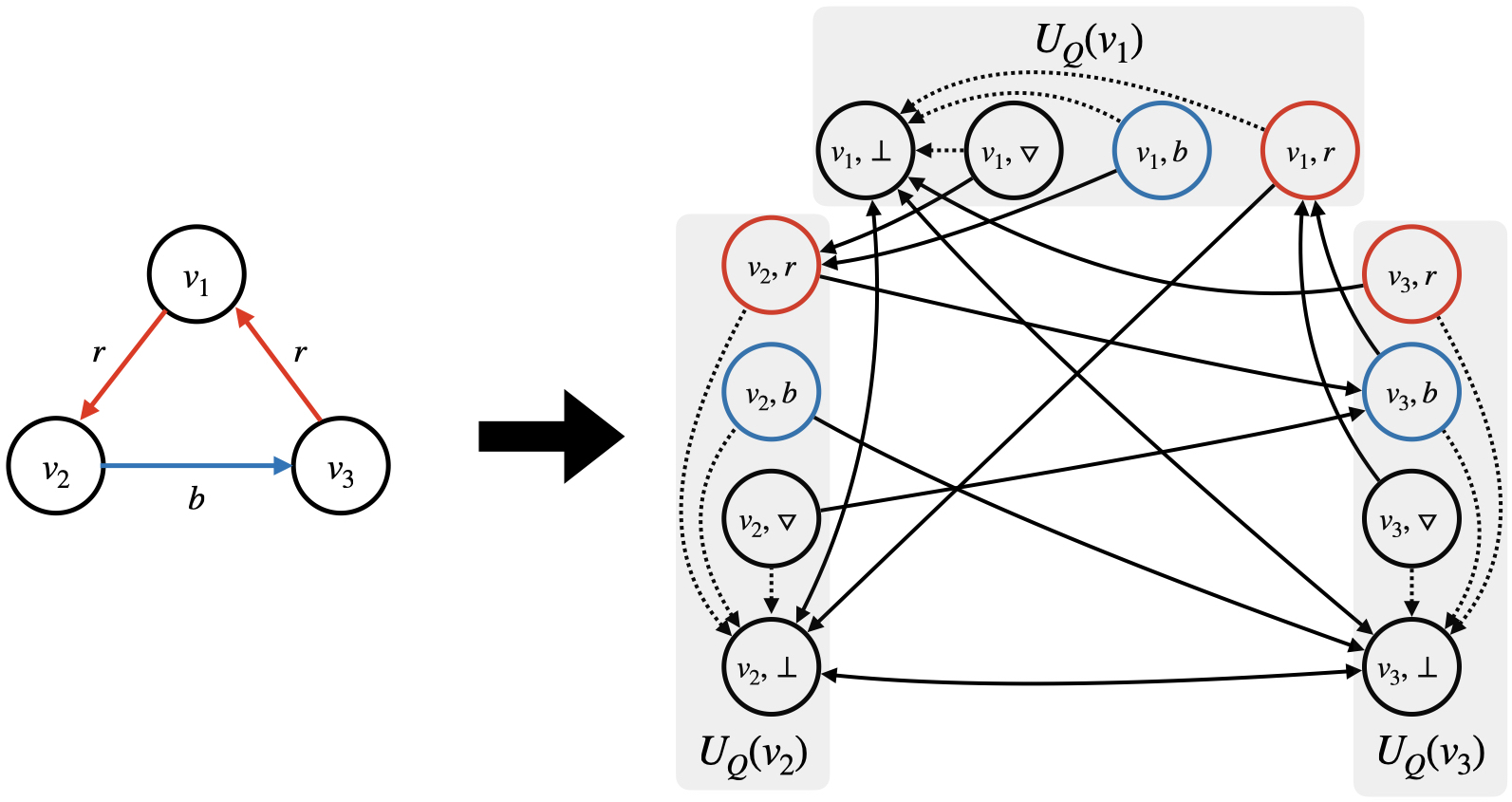}
	\end{center}
	\caption{\label{fig:graph-construction} An illustration of the construction of the graph $G_{C}$ (right) from a graph $G$ (left). Here we consider the $c$-Colored walk problem, where $Q=\{r,b, \perp, \Auxinit\}$. Note that $G_C$ has no edge labels.
	The dashed lines are edges added according to condition (1), while the solid lines are edges added according to condition (2). For simplicity, the above example is not weighted. However, if, for example, it holds that $c((v_2,v_3))=x$ in $G$ then the weight of all edges going from $U_Q(v_2)$ to $U_Q(v_3)$ (left to right) in $G_c$ will have weight $x$.}
\end{figure}

\section{Exact Bipartite Maximum Matching}
\label{sec:matching}
Let $G$ be an undirected unweighted graph.
A matching $M \subseteq E(G)$ is a set of edges such that any two distinct edges do not
share an endpoint. Given a matching $M$ of a graph $G$, we say that a vertex $u$ is \emph{unmatched} if it is 
not an endpoint of any matching edge. The maximum matching problem requires the 
algorithm to output the maximum cardinality matching (by marking the edges in the solution). 
The maximum matching problem is known to be reducible to the task of finding an 
\emph{augmenting path}, a simple path connecting two unmatched nodes where matching edges 
and non-matching edges appear alternately. Once an augmenting path is found, by flipping matching edges and non-matching
edges, the size of the matching increases by one. The maximum matching is obtained by iterating this augmentation 
process until the current matching does not have any augmenting path. 

An augmenting path can be seen as a simple $2$-colored walk whose endpoints are both unmatched vertices, and thus it fits naturally into our stateful-walk framework (more precisely, the construction of $\Ccolor{2}$-distance labeling following Example~\ref{example:color} and Theorem~\ref{thm:slabel}).
This idea is incorrect for general graphs because the shortest $2$-colored walk is not necessarily simple, but is valid for bipartite graphs: It is well-known that any shortest $2$-colored walk is simple in bipartite graphs. 
However, we still have a few hurdles to overcome. 
The first issue is how does each unmatched node detect if there exists an augmenting path starting from itself? Since there might exist $\Omega(n)$ unmatched nodes, the trivial solution where all unmatched nodes broadcast their own $\Ccolor{2}$-distance labels is very costly. The second issue is how to speed up the iterations of the matching update. The trivial sequential update takes $\Omega(n)$ iterations. We resolve these issues by a divide-and-conquer approach utilizing balanced separators. 
The key observation is 
that if the maximum matching is already computed for each connected component in $G - S$ independently (where $S \subseteq V(G)$ is any vertex subset), it suffices to check only the 
existence of augmenting paths with at least one endpoint in $S$. 
This observation is proved in the following simple proposition, which is a special case 
of a more general theorem presented in \citep{IOO18}:
\begin{proposition}[Iwata et al. \citep{IOO18}]
\label{prop:maximummatching}
Let $G$ be any (undirected and unweighted) graph $G$, $U \subseteq V(G)$ be a vertex subset, 
and $\mathcal{H} = \{H_1, H_2, \dots, H_N\}$ be the set of connected components of $G - U$. 
Assume that the maximum matching $M_i$ for each connected component $H_i \in \mathcal{H}$ is already computed. Then, for any $v \in V \setminus U$, the size of the maximum matching of the graph $G - (U \setminus \{v\})$ is at most $|\bigcup_{1 \leq i \leq N} M_i| + 1$. Any augmenting path in $G - (U \setminus \{v\})$ starts from $v$.
\end{proposition}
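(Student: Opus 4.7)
The plan is to set $G' := G - (U \setminus \{v\})$ and $M^{\star} := \bigcup_{i=1}^{N} M_i$, and to observe at the outset that $M^{\star}$ is itself a matching of $G'$, because the components $H_i$ of $G - U$ are pairwise vertex-disjoint subgraphs of $G - U \subseteq G'$. Both assertions will then follow from a single structural fact: $G' - v = G - U$, i.e.\ $G'$ differs from the disjoint union of the $H_i$'s only by the single vertex $v$ and its incident edges.

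For the cardinality bound I would invoke the standard observation that deleting a single vertex reduces the matching number by at most one. Any matching of $G'$ contains at most one edge incident to $v$; removing that edge (if present) yields a matching of $G' - v = G - U = H_1 \cup \cdots \cup H_N$, and the maximum matching of this disjoint union has size exactly $\sum_i |M_i| = |M^{\star}|$ by the assumed maximality of each $M_i$. Hence every matching of $G'$ has size at most $|M^{\star}| + 1$.

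For the structural claim about augmenting paths I would exploit alternation. Since $v \notin V(G - U) = \bigcup_i V(H_i)$, no edge of $M^{\star}$ is incident to $v$, and in particular $v$ is $M^{\star}$-unmatched. Let $P = u_0, e_1, u_1, \ldots, e_\ell, u_\ell$ be any $M^{\star}$-augmenting path in $G'$. Each internal vertex $u_j$ (with $1 \le j \le \ell - 1$) is incident to a matching edge of $P$ and is therefore $M^{\star}$-matched, which forbids $v$ from appearing in the interior of $P$. It only remains to exclude the case $v \notin V(P)$: then $P$ would be a connected path in $G' - v = G - U$, hence contained in a single component $H_i$, and would thus form an $M_i$-augmenting path in $H_i$, contradicting the maximality of $M_i$. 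Therefore $v$ must appear as an endpoint of $P$, as claimed. The only step requiring any thought is this alternation argument that pushes $v$ out of the interior of $P$, which reduces to the one observation that the added separator vertex $v$ is $M^{\star}$-unmatched; everything else is bookkeeping.
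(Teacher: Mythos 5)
Your proof is correct. Note that the paper itself gives no proof of this proposition --- it is quoted as a special case of a theorem of Iwata et al.\ \citep{IOO18} --- so there is nothing internal to compare against; what you have written is a clean, self-contained justification of the cited fact. The decomposition you use (set $G' = G-(U\setminus\{v\})$, observe $G'-v = G-U$, bound the matching number via ``deleting one vertex drops it by at most one,'' and push $v$ out of the interior of any $M^{\star}$-augmenting path because $v$ is $M^{\star}$-unmatched while interior vertices of an augmenting path are matched) is exactly the standard argument, and the final contradiction via Berge's theorem applied to a single component $H_i$ is valid because the $M_i$ are vertex-disjoint, so an endpoint lying in $H_i$ that is $M^{\star}$-unmatched is in particular $M_i$-unmatched. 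One remark: the proposition as printed says $v\in V\setminus U$, under which reading $U\setminus\{v\}=U$ and both claims are degenerate; your proof silently (and correctly) adopts the reading $v\in U$, which is the one actually used in the proof of Theorem~\ref{thm:matching}, where $v$ is the separator vertex $s_i$ being added back. It would be worth stating that assumption explicitly at the top of your argument, since the identity $G'-v=G-U$ and the claim $v\notin\bigcup_i V(H_i)$ both depend on it.
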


This proposition naturally induces a divide-and-conquer approach for bipartite maximum matching in 
low-treewidth graphs. That is, computing a balanced separator $S$, we recursively compute the maximum 
matching of each connected component of $G - S$. To obtain the maximum matching for the whole graph $G$, it suffices
to consider only the augmenting paths starting from $S$. We state the following theorem (the proof is deferred to Appendix~\ref{appendix:matching}).

\begin{theorem} \label{thm:matching}
There exists a randomized CONGEST algorithm that computes the maximum matching for any bipartite graph $G$ 
in $\tilde{O}(\TW{G}^{4}D + \TW{G}^7)$ rounds whp.
\end{theorem}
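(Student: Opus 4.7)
The plan is to combine the divide-and-conquer principle of Proposition~\ref{prop:maximummatching} with the stateful-walk framework of Section~\ref{sec:statefulwalk}. First I would invoke Theorem~\ref{thm:treedecomposition} to produce a tree decomposition $\Phi = (T, \{B_x\}_{x \in V(T)})$ of depth $O(\log n)$ and width $O(\TW{G}^2 \log n)$ in $\tilde{O}(\TW{G}^2 D + \TW{G}^3)$ rounds. The matching is then computed bottom-up over $T$. At each leaf bag, the matching of $G_x$ is computed locally after $G_x$ is gathered via a generalized part-wise aggregation. At each internal bag $B_x$, after the optimal matchings $M_i$ for the child subgraphs $G_{x \bullet i}$ are fixed, we initialize $M := \bigcup_i M_i$ and repeatedly augment until no augmenting path exists.

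The crucial structural input is Proposition~\ref{prop:maximummatching}: taking $U = B_x$, and noting that the connected components of $G_x - B_x$ are exactly the interiors of the children's subgraphs, each of which already carries an optimal matching, every augmenting path in $G_x$ with respect to $M$ has at least one endpoint in $B_x$. Hence the number of augmentations required at $B_x$ is at most $|B_x| = O(\TW{G}^2 \log n)$. Each augmentation is realized as follows: orient the edges of $G_x$ arbitrarily to obtain a directed multigraph, color matched edges with one label and non-matching edges with a second label, and apply Corollary~\ref{corol:pathconstruction} with $C = \Ccolor{2}$ to obtain the shortest $2$-colored walk between an unmatched vertex of $B_x$ and any other unmatched vertex in $G_x$. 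Bipartiteness of $G$ guarantees that this walk is simple and therefore a genuine augmenting path. With $|Q| = O(1)$ and $p_{\max} = O(1)$, one invocation costs $\tilde{O}(\TW{G}^2 D + \TW{G}^5)$ rounds.

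To choose which unmatched pair to augment on in each iteration, I would have the $O(\TW{G}^2 \log n)$ unmatched vertices of $B_x$ gather their $O(\TW{G}^2 \log^2 n)$-bit $\Ccolor{2}$-distance labels at a designated leader of $B_x$ via part-wise aggregation; the leader picks an unmatched pair at finite $\Ccolor{2}$-distance, and the explicit augmenting path is reconstructed by Corollary~\ref{corol:pathconstruction} and flipped to grow $M$ by one. All bags at the same depth of $T$ are handled simultaneously, since the interiors $G'_x = G_x - B_{p(x)}$ at a common level are vertex-disjoint and connected, which is precisely the setting in which part-wise aggregation simulates $\textsf{CDL}$ in parallel across bags at essentially the cost of a single invocation. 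Multiplying the $O(\log n)$ levels of $T$, the $O(\TW{G}^2 \log n)$ sequential augmentations per bag, and the $\tilde{O}(\TW{G}^2 D + \TW{G}^5)$ per-invocation cost of $\textsf{CDL}(\Ccolor{2})$ yields the claimed bound $\tilde{O}(\TW{G}^4 D + \TW{G}^7)$.

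The main obstacle I anticipate is arguing that parallel augmentations at sibling bags, which share boundary vertices through their common ancestor, do not invalidate one another. Proposition~\ref{prop:maximummatching} is precisely what makes this work: once the child matchings have each been driven to optimality within their own subgraphs, any later augmentation performed at the parent bag touches the children only through vertices lying in the parent's separator, so each child's matching remains optimal within its own subgraph throughout the rest of the process. A secondary technicality is that the input to $\textsf{CDL}$ is a directed multigraph (obtained by duplicating edges to encode matched versus non-matching orientation), but the multiplicity is bounded by a small constant, so $p_{\max} = O(1)$ and the running-time analysis is unaffected.
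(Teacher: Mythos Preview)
Your proposal is essentially correct and follows the same divide-and-conquer structure as the paper: recurse on the pieces carved out by a small separator, then use at most $O(\TW{G}^2)$ rounds of augmenting-path search via $\textsf{CDL}(\Ccolor{2})$ to extend the children's optimal matchings to the parent. The running-time arithmetic is the same.

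There is one point where your argument is looser than the paper's and worth tightening. You assert that ``every augmenting path in $G_x$ with respect to $M$ has at least one endpoint in $B_x$'' and then search only among unmatched vertices of $B_x$. This is true for the \emph{initial} matching $M=\bigcup_i M_i$ (every vertex of $B_x$ is unmatched, so an alternating path cannot pass through $B_x$ without terminating there, and a path avoiding $B_x$ would contradict maximality inside some $H_i$), but you do not argue that it remains true after several augmentations, when some vertices of $B_x$ are already matched and the restriction $M\cap E(H_i)$ need no longer be maximum. The paper sidesteps this by applying Proposition~\ref{prop:maximummatching} in the literal one-vertex-at-a-time form: enumerate $S=\{s_1,\dots,s_k\}$, maintain a maximum matching of $G-\{s_i,\dots,s_k\}$, and when $s_i$ is added the proposition guarantees that any augmenting path starts at $s_i$, so a single search from $s_i$ suffices. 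Adopting that order of processing makes your correctness argument immediate and costs nothing in round complexity; it also obviates the need to gather labels of all unmatched bag vertices, since at each step you know exactly which vertex to search from.

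A secondary remark: your description has the leader pick ``an unmatched pair at finite $\Ccolor{2}$-distance'' among labels collected from $B_x$, which as written would only detect augmenting paths with \emph{both} endpoints in $B_x$. Even under your intended invariant the second endpoint can lie in some $H_i$; the one-vertex-at-a-time formulation (or, equivalently, a virtual sink attached to all unmatched vertices) resolves this. With these adjustments your proof coincides with the paper's.
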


\section{Computing Girth}
\label{sec:girth}
Let $G$ be a simple and (positively) weighted graph, which could be directed or undirected. 
The \emph{girth} of $G$ (denoted by $g$) is the weight of the shortest\footnote{By shortest, we mean smallest weight.} simple cycle in $G$. For directed graphs, it is relatively easy to reduce the computation of girth
to the distance labeling construction: The length of the shortest cycle containing a directed edge 
$(u, v) \in E(G)$ is determined by computing the distance from $v$ to $u$ in $G$, which is obtained by 
exchanging the labels $\Dlabel_G(u)$ and $\Dlabel_G(v)$. To compute the girth, it suffices to 
execute this task for all edges, and take the minimum over all of the computed cycle lengths. In our setting,
the running time of this algorithm is $\tilde{O}(\TW{G}^2D + \TW{G}^5)$ rounds.

The case for undirected graphs is more challenging because the shortest path from 
$v$ to $u$ can contain the edge $(u, v)$, while such a case does not occur in directed graphs.
This section provides a CONGEST algorithm that computes $g$ in an undirected graph $G$
using our framework. Let $\Sigma = \{0, 1\}$ be the edge label domain. Recall that a walk $w$ is called \emph{exact count-1} if $w$ contains exactly one edge with label one. Since $\Ccount{1}(1)\subset \Ccount{1}$ is obviously the set of all exact count-1 walks, Theorem~\ref{thm:slabel} allows us to compute the shortest length of exact count-1 walks from $u$ to $v$ for any two nodes $u, v \in V(G)$ using their labels (i.e., $\Sdec_{\Ccount{1}}(1, \Slabel_{\Ccount{1}}(u), \Slabel_{\Ccount{1}}(v))$ is the shortest length of exact count-1 walks from $u$ to $v$). The following lemma is the key for our girth algorithm. 
\begin{lemma} \label{lma:count-one-girth}
Any shortest exact count-1 walk $w$ starting and terminating at the same vertex $v$ contains a simple cycle, and thus the weight of $w$ is at least $g$. 
\end{lemma}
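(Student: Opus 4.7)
The core idea is a standard parity argument on closed walks, specialized so that the unique label-$1$ edge forces the presence of a simple cycle.

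Let $w$ be the walk, viewed as a closed walk from $v$ to $v$ in the (undirected, simple, positively weighted) graph $G$. For each edge $f \in E(G)$, let $c_w(f)$ denote the number of times $f$ is traversed by $w$, and let $E_o = \{f \in E(G) : c_w(f) \text{ is odd}\}$. My first step is to observe that the unique label-$1$ edge $e$ satisfies $c_w(e)=1$, hence $e \in E_o$; in particular $E_o \neq \emptyset$. My second step is the standard closed-walk parity observation: because $w$ enters and leaves each vertex an equal number of times (including $v$, counting the start and the end together), the quantity $\sum_{f \ni u} c_w(f)$ is even for every vertex $u \in V(G)$. Reducing modulo $2$, the subgraph $H = (V(G), E_o)$ has all even degrees.

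From these two steps, $H$ is a nonempty subgraph in which every vertex has even degree, so $H$ contains a simple cycle $C$; this is a standard fact (follow any maximal trail inside $H$ until it revisits a vertex, which must happen since degrees are even). Since $C \subseteq E_o \subseteq E(w)$, the walk $w$ contains the simple cycle $C$ as a subgraph, which is the first conclusion of the lemma.

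For the weight bound, note that edges in $E_o$ are used at least once by $w$, so
\[
\text{weight}(w) \;=\; \sum_{f \in E(G)} c_w(f)\, c(f) \;\geq\; \sum_{f \in E_o} c(f) \;\geq\; \sum_{f \in C} c(f) \;=\; \text{weight}(C) \;\geq\; g,
\]
using positivity of edge weights in the middle inequalities and the definition of $g$ at the end.

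The only place requiring care is step two, the parity identity: one must handle the start/end vertex $v$ correctly so that the single ``open'' endpoint does not break the argument. This is resolved by the closedness of $w$ (the initial and terminal half-edges at $v$ combine into two incidences), making the total incidence count at $v$ even as well. No other step is delicate, and no additional hypothesis on $G$ beyond undirectedness and positive weights is used.
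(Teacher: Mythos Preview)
Your proof is correct, and it takes a genuinely different route from the paper's argument. The paper argues by minimality: it looks at the first vertex repetition in $w$, observes that either this already gives a simple cycle or it is an immediate backtrack $u_i \to u_{i+1} \to u_i$ along the same edge, and then uses that this repeated edge cannot be the unique label-$1$ edge to shortcut it out, contradicting minimality of $w$. Your argument instead uses the parity of edge traversals in a closed walk: the unique label-$1$ edge is traversed exactly once and hence lies in $E_o$, so $E_o$ is nonempty, and the even-degree property of the traversal multiset forces a simple cycle inside $E_o$.

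The main payoff of your approach is that it never uses the hypothesis that $w$ is \emph{shortest}: it shows that every exact count-$1$ closed walk (shortest or not) contains a simple cycle in its edge set and has weight at least $g$. The paper's argument, by contrast, genuinely needs minimality for the shortcutting step. On the other hand, the paper's proof yields a slightly stronger structural conclusion (a simple cycle appearing as a contiguous subwalk of $w$), whereas you only get a simple cycle as a subgraph of the edge set of $w$; but for the weight bound, which is all that is used downstream, your conclusion suffices. One minor phrasing point: ``the unique label-$1$ edge'' should really be ``the unique label-$1$ edge traversed by $w$'' (the graph may have other label-$1$ edges not touched by $w$), but your deduction $c_w(e)=1$ is correct regardless.
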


Note that the above holds for \emph{any} assignment of binary edge labels. Assume a labeling
function $f$ such that some shortest cycle $R = e_0, e_1, \dots, e_{g-1}$ has exactly one edge $e_i \in E(R)$ which satisfies $f(e_i) = 1$. Each node $u$ computes the length of the shortest exact count-1 walk from $u$ to $u$. Let us denote this by $g(u)$. 
As explained above, this is possible by using \textsf{CDL}($\Ccount{1}$)
(note that the shortest length of exact count-1 walks (i.e., $\Ccount{1}(1)$-distance) from $u$ to $u$ is computed locally by the label $\Slabel_{\Ccount{1}}(u)$). As $R$ is a shortest cycle in $G$ with exactly one edge labeled "1", by Lemma~\ref{lma:count-one-girth}, $g(v) = g$ holds for every $v \in V(R)$. Thus, we can compute the girth $g = \min_{u \in V(G)} g(u)$ by standard aggregation over all nodes. 

The final challenge is how to obtain the edge label function $f$ satisfying the above condition. However, this can be resolved by a probabilistic label assignment. Let $F$ be the set of the edges, $e \in E(G)$, such that $e$ is covered by at least one shortest cycle. Note that if exactly one edge in $F$ has label one, the condition holds for at least one shortest cycle. To guarantee that the condition holds with constant probability, it suffices to assign each edge with label one (independently of other edges) with probability $p = \Theta(1/|F|)$. Repeating this process (assigning random labels, computing $g$, keeping the minimum value of $g$) a logarithmic number of times, we can amplify the success probability to $1-1/n$. While the value of $p$ is unknown to the algorithm, it can be estimated by a standard doubling technique.
Consequently, we state the following theorem (the complete argument and proofs are presented in Appendix~\ref{appendix:girth}).

\begin{theorem} \label{thm:girth}
There exists a randomized CONGEST algorithm that computes the girth, $g$, of directed and weighted
graph $G$ in $\tilde{O}(\TW{G}^{2}D + \TW{G}^5)$ rounds whp.
\end{theorem}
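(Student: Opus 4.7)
The plan is to derive Theorem~\ref{thm:girth} directly from the stateful-walk framework (Theorem~\ref{thm:slabel}) applied to the count-$1$ constraint $\Ccount{1}$ of Example~\ref{example:count}. For directed graphs I would use the simple reduction sketched at the start of the section: compute $\Dlabel_G$ via Theorem~\ref{thm:dlabel}; for each directed edge $(u,v)\in E(G)$, the quantity $c(u,v) + \Ddec(\Dlabel_G(v),\Dlabel_G(u))$ upper bounds the weight of the shortest cycle through $(u,v)$ and always exceeds $g$, so taking the minimum over all edges by part-wise aggregation (Section~\ref{subsec:partwiseaggregation}) produces $g$ in $\tilde{O}(\TW{G}^2 D + \TW{G}^5)$ rounds. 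The nontrivial case is the undirected one, where $d_G(v,u)$ may reuse the edge $(u,v)$ itself; here I would follow the blueprint already outlined in the section.

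Fix an arbitrary edge labeling $f:E(G)\to\{0,1\}$ and invoke the $\textsf{CDL}(\Ccount{1})$ algorithm of Theorem~\ref{thm:slabel}. Each node $u$ then possesses its label $\Slabel_{G,\Ccount{1}}(u)$ and can compute locally $g_f(u) := \Sdec_{\Ccount{1}}(1,\Slabel_{G,\Ccount{1}}(u),\Slabel_{G,\Ccount{1}}(u))$, the weight of a shortest closed walk at $u$ that contains exactly one $f$-labeled-$1$ edge. By Lemma~\ref{lma:count-one-girth}, any such walk contains a simple cycle, so $g_f(u)\geq g$ for every $f$ and every $u$; conversely, if some shortest cycle $R$ carries exactly one $f$-labeled-$1$ edge, then traversing $R$ from any $v\in V(R)$ yields $g_f(v)\leq g$, so $\min_u g_f(u) = g$. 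A single part-wise aggregation over $V(G)$ broadcasts this minimum.

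To produce an $f$ with the isolation property with high probability without knowing $|F|$ (the set of edges lying on some shortest cycle), I would run a doubling sweep: for each $i = 0,1,\dots,\lceil\log_2 |E(G)|\rceil$, set $p_i = 2^{-i}$, label each edge independently with $1$ with probability $p_i$, and repeat the count-$1$ computation $\Theta(\log n)$ times. When $p_i = \Theta(1/|F|)$, the probability that exactly one edge of $F$ is labeled $1$ is $|F|\,p_i(1-p_i)^{|F|-1} = \Omega(1)$, so the unique labeled edge lies on some shortest cycle and the condition above is met; thus some trial at that scale succeeds with probability $1 - n^{-\Omega(1)}$. Because $g_f(u)\geq g$ always holds, the overall output $\min_{i,j,u} g_{f_{i,j}}(u)$ never underestimates $g$ and equals $g$ whp.

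The main obstacle is the round-complexity accounting. For $\Ccount{1}$ we have $|Q| = 4 = O(1)$, and on a simple input graph $p_{\max}=1$, so each invocation of Theorem~\ref{thm:slabel} costs $\tilde{O}(\TW{G}^2 D + \TW{G}^5)$ rounds; the $O(\log^2 n)$ outer repetitions and the final aggregation are absorbed into $\tilde{O}(\cdot)$, matching the claim of the theorem. Beyond bookkeeping, the only conceptual subtlety — the reuse of edge $(u,v)$ in the undirected case — is resolved cleanly because the stateful-walk framework forces the returning walk to include a distinguished labeled edge, so a trivial length-$2$ ``backtrack'' never qualifies as an exact count-$1$ closed walk of weight less than $g$.
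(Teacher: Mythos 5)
Your proposal is correct and follows essentially the same route as the paper: the directed case via the per-edge distance-labeling reduction, and the undirected case via Lemma~\ref{lma:count-one-girth}, the $\textsf{CDL}(\Ccount{1})$ labeling of Theorem~\ref{thm:slabel} to compute each node's shortest exact count-1 closed walk, and a doubling sweep over the labeling probability to isolate a single edge of a shortest cycle with constant probability, amplified by $O(\log n)$ repetitions. The round-complexity accounting ($|Q|=O(1)$, $p_{\max}=1$) also matches the paper's.
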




\newcommand{\etalchar}[1]{$^{#1}$}

\newpage

\appendix

\section{Preliminaries for the Appendix}
\label{appendix:preliminaries}

\subsection{Part-wise Aggregation for Near Disjoint Collection}
\label{appendix:subgraphoperation}

We present a generalized version of part-wise aggregation, which is applicable to a certain kind of edge-disjoint but not necessarily vertex-disjoint subgraphs. 
A collection of connected subgraphs $\mathcal{H} = \{H_1, H_2, \dots, H_N\}$ ($H_i \subseteq G$) is called
a \emph{near disjoint collection} of $G$ if it satisfies the following two conditions: 
\begin{itemize}
    \item For any edge $(u, v) \in E(G)$, either $u$ or $v$ is contained in at most one subgraph in $\mathcal{H}$.
    \item Let $V'(H_i) \subseteq V(H_i)$ be the set of nodes which do not belong to any other 
    subgraph in $\mathcal{H} \setminus \{H_i\}$. For any $i \in [1, N]$,
    the subgraph $H'_i$ induced by $V'(H_i)$ is connected.
\end{itemize}
We extend the applicability of subgraph aggregation to any near disjoint collection of $G$.
\begin{lemma}
	\label{lma:SA}
	For any undirected graph $G$ and its near disjoint collection $\mathcal{H}$, subgraph aggregation for $\mathcal{H}$ and any associative binary operator $\oplus$
	can be solved deterministically in $\tilde{O}(\TW{G}D)$ rounds.
\end{lemma}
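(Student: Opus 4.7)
The plan is to reduce near-disjoint part-wise aggregation to the standard vertex-disjoint variant. The crucial structural observation is that the first condition in the definition of a near-disjoint collection forces the set $U \subseteq V(G)$ of vertices lying in two or more subgraphs of $\mathcal{H}$ to be an independent set in $G$: if some edge $(u,v) \in E(G)$ had both endpoints in $U$, then both endpoints would lie in multiple subgraphs, violating the condition. Consequently, for any $H_i \in \mathcal{H}$ and any shared vertex $u \in V(H_i) \cap U$, every neighbor of $u$ inside $H_i$ must belong to the exclusive part $V'(H_i)$.

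Using this fact, I would proceed in three stages. First, each shared vertex $u \in U$ identifies, for each $H_i$ with $u \in V(H_i)$ and $V(H_i) \neq \{u\}$, a designated private neighbor $r(u,i) \in V'(H_i)$, and forwards $x_{u,i}$ to $r(u,i)$ across the edge $(u, r(u,i))$ in $O(1)$ rounds. Each private vertex $v \in V'(H_i)$ then locally combines the received values with its own $x_{v,i}$ via $\oplus$, yielding updated values $\hat{x}_{v,i}$ satisfying
$$\bigoplus_{v \in V'(H_i)} \hat{x}_{v,i} \;=\; \bigoplus_{v \in V(H_i)} x_{v,i}.$$
Second, since the collection $\{H'_i\}$ of subgraphs induced by $V'(H_i)$ is vertex-disjoint (by construction) and connected (by assumption), I would invoke the standard part-wise aggregation algorithm of \citep{HIZ16-2,HHEW18,Li18} on $\{H'_i\}$ with the updated values; this computes $\bigoplus_{v \in V(H_i)} x_{v,i}$ at every private vertex of $H_i$ in $\tilde{O}(\TW{G}D)$ rounds. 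Finally, each $r(u,i)$ forwards the aggregate back to $u$ in one round, so that every shared vertex also learns the aggregate for each subgraph containing it. The degenerate case $V(H_i) = \{u\}$ is trivially handled since $u$ already possesses $x_{u,i}$.

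The main subtlety, rather than a genuine obstacle, is confirming the two structural preconditions that make this reduction work: every shared vertex in $H_i$ truly has a neighbor in $V'(H_i)$, and the $H'_i$ meet the hypotheses of standard part-wise aggregation. The first follows from independence of $U$ together with the fact that $H_i$ is connected and nontrivial; the second is immediate from the second defining property of near-disjoint collections. With these verifications, the preprocessing and postprocessing stages contribute only $O(1)$ rounds, so the overall round complexity inherits the $\tilde{O}(\TW{G}D)$ bound of the standard algorithm.
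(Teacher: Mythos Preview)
Your proposal is correct and follows essentially the same approach as the paper: forward each shared vertex's value to a private neighbor, run standard vertex-disjoint part-wise aggregation on the $H'_i$, then send the result back. Your explicit observation that the shared vertices form an independent set is a clean way to justify that every shared vertex in a nontrivial $H_i$ has a neighbor in $V'(H_i)$; the paper states this consequence more tersely (``such a neighbor necessarily exists by the first condition of near disjointness'') without spelling out the independence argument, and does not explicitly address the degenerate singleton case you mention.
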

\begin{proof}
As a preprocessing step, for all $i \in [1, N]$, each node $u \in V(H_i) \setminus V(H'_i)$ sends its input value $x_{u, i}$ to 
one neighbor $v$ that belongs to $V(H'_i)$.
Such a neighbor necessarily exists by the first condition of near disjointness. The node $v$ receiving 
$x_{u, i}$ sets $x_{v, i} \oplus x_{u, i}$ as its new 
input value. Then the algorithm executes PA for $\mathcal{H'} = \{H'_1, H'_2, \dots, 
H'_N\}$. Since $\mathcal{H}'$ is a collection of vertex disjoint connected subgraph, this PA task can be implemented by known algorithm~\citep{HIZ16-2,HHEW18}, whose running time is $\tilde{O}(\TW{G}D)$ rounds.
The output for $H'_i$ is obviously equal to $\bigoplus_{v \in V(H_i)} x_{v, i}$. This output value for 
$H'_i$ is sent back to the nodes in $V(H_i) \setminus V(H'_i)$ in one round.
The total running time is dominated by that for PA, which takes $\tilde{O}(\TW{G}D)$ rounds. 
\end{proof}

Most of the algorithms presented hereafter are constructed by utilizing the (generalized) 
part-wise aggregation (PA) and standard one-round neighborhood communication (i.e., a single round 
of the CONGEST model) as fundamental building blocks. We formalize one-round neighborhood communication as a task of subgraph operations, which we refer to as the short-hand ``SNC.'' That is, SNC for $\mathcal{H} = 
\{H_1, H_2, \dots, H_N\}$ is the task where each node $u \in V(H_i)$ exchanges $O(\log n)$ bit messages with its neighbors in $H_i$. Trivially, SNC can be implemented for any near disjoint collection $\mathcal{H}$ in a single round.

It is known that several fundamental tasks in the distributed setting can be reduced to PA and SNC.
\begin{lemma}[Ghaffari et al.~\citep{GH16-2}, Haeupler et al.~\citep{HL18}, Li~\citep{Li18}, generalized to near-disjoint collections]
    \label{lma:SAapplication}
    \sloppy{
    Let $\mathcal{H} = \{H_1, H_2, \dots, H_N\}$ be a near-disjoint collection of connected subgraphs of $G$. 
    The tasks below are deterministically solvable simultaneously and independently for all $H_i$ 
    by $\tilde{O}(1)$ invocations of PA and SNC running on $G$.}
    \begin{itemize}
        \item Subgraph Rooted Spanning Tree (RST): Given root node $r_i \in V(H_i)$, compute a spanning tree $T_i$ rooted at $r_i$ for each $H_i$. 
        The output of every $v \in V(H_i)$ is the set of all neighbors in $T_i$ as well as a pointer to the parent vertex in $T_i$. The root node outputs a pointer to itself. 
        \item Subtree Aggregation (STA): 
        Assume that each $H_i$ is a rooted tree. Let $\oplus$ 
        be an associative binary operator over a finite set $\mathcal{M}$ of cardinality $\mathrm{poly}(n)$. 
        Given inputs $x_{v,i} \in \mathcal{M}$ for all $v \in V(H_i)$ and $i \in [1, N]$, each vertex $v \in V(H_i)$ outputs the value $\bigoplus_{u \in V(H_i(v))} x_{u, i}$ (recall that $H_i(v)$ is the subtree of $H_i$ rooted at $v$).
        \item Subgraph Leader Election (SLE): Elect a unique leader independently in each subgraph $H_i$. Each
        node $v \in V(H_i)$ has a binary input $x_{v, i} \in \{0, 1\}$ indicating if $v$ is a candidate or not.
        The output of each node in $H_i$ is the ID of the elected leader in $H_i$. The input of the elected leader must be one.
        \item Connected Component Detection (CCD): For each $H_i$, detect all of the connected components of a given subgraph $H'_i \subseteq H_i$. The input of each node $u \in V(H_i)$ is the set of edges
        incident to $u$ in $H'_i$. The algorithm assigns each connected component in $H_i$ with a unique ID,
        which is outputted by the nodes in the component.
        \item Subgraph Broadcast (BCT) : Each subgraph $H_i \in \mathcal{H}$ contains a single source node 
        $u_i$, which broadcasts a message, $m_i$, of $O(\log n)$ bits in $H_i$. The output of the algorithm
        at node $v \in V(H_i)$ is $m_i$. 
\end{itemize}
    In addition, the following task is deterministically solvable by $\tilde{O}(t)$ invocations of 
    PA and SNC in $G$. 
    \begin{itemize}
        \item Subgraph minimum vertex cut (MVC($t$)): Given two disjoint subsets $X_i, Y_i \subseteq 
        V(H_i)$ for 
        each $H_i$, output a \emph{$X_i$-$Y_i$ vertex cut} of $H_i$ of size at most $t$. Each node outputs 1 if it is in the computed cut, and 0 otherwise. If there is no cut of size at most $t$, all nodes in $H_i$ output $-1$.
    \end{itemize}
\end{lemma}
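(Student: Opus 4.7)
The plan is to verify each of the six subgraph primitives separately. For the first five tasks, the reductions are standard in the low-congestion shortcut literature \citep{GH16-2,HL18,Li18}, so my main task will be to confirm they remain correct once Lemma~\ref{lma:SA} supplies PA on near-disjoint collections; the genuine challenge is MVC($t$), which is the only primitive whose cost grows with $t$.

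For RST I would iterate the standard lex-min hop-count propagation $(d_v,\mathrm{parent}_v)$ initialized to $(0,\ID{r_i})$ at the root and $(\infty,\bot)$ elsewhere; this is expressible as a shortcut-tree aggregation and hence as $\tilde{O}(1)$ PA invocations plus one SNC to record parent pointers. STA is a single PA whose operator is the given $\oplus$, with each vertex reading off the aggregate at its own logical-subtree root. SLE is a single lex-min PA over $(\mathbf{1}[x_{v,i}=0],\ID{v})$ followed by a BCT of the winning ID. BCT is a single PA with the operator that keeps the unique non-null message contributed by the source. Finally, CCD is an $O(\log n)$-round Boruvka-style contraction in which each iteration uses SLE (to pick the leader of every current component) and STA restricted to edges of $H'_i$ (to aggregate the minimum outgoing edge); each iteration costs $\tilde{O}(1)$ PAs, and the total telescopes to $\tilde{O}(1)$.

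The main obstacle is MVC($t$). My plan is the classical vertex-to-edge-cut reduction inside each $H_i$: split every vertex $v \notin X_i \cup Y_i$ into an in-copy $v^{\mathrm{in}}$ and an out-copy $v^{\mathrm{out}}$ joined by a unit-capacity edge, collapse $X_i$ into a super-source and $Y_i$ into a super-sink, and simulate at most $t$ Ford--Fulkerson augmentation phases. Each phase consists of (i) a layered BFS from the super-source in the current residual graph, implemented as $\tilde{O}(1)$ lex-min PAs with the ``distance and predecessor'' aggregator; (ii) a lex-min SLE that canonicalizes one augmenting path when it exists; and (iii) PA-based path reversal that updates the residual edges. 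If after at most $t$ phases the residual still admits an $X_i$-$Y_i$ path, the algorithm reports $-1$; otherwise the last BFS identifies the source-reachable residual set, and its boundary of saturated split-edges yields a cut of size at most $t$ that each vertex can detect locally. Correctness is the vertex-connectivity max-flow/min-cut theorem, and the complexity is $\tilde{O}(t)$ PA and SNC invocations.

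Near-disjointness is handled by the same bookkeeping trick as in the proof of Lemma~\ref{lma:SA}: before each primitive, every boundary vertex delegates its role in $H_i$ to a privately owned interior neighbor in $V'(H_i)$ via one SNC, and retrieves the result via one SNC afterwards. Because each $V'(H_i)$ is connected and the collection $\{V'(H_i)\}$ is vertex-disjoint, every internal PA invocation is legitimate by Lemma~\ref{lma:SA}, so the $\tilde{O}(1)$ bound for the first five primitives and the $\tilde{O}(t)$ bound for MVC are preserved.
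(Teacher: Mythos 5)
First, note that the paper does not actually prove this lemma: it is imported from \citep{GH16-2,HL18,Li18}, and the only new content is the extension to near-disjoint collections, which is exactly the delegation trick of Lemma~\ref{lma:SA} (boundary vertices hand their role to a private interior neighbor in $V'(H_i)$ and collect the answer afterwards). Your last paragraph reproduces that correctly, so the genuinely new part of the statement is handled fine.

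However, several of your reductions to PA have real gaps, all stemming from the same misunderstanding of what PA provides. A part-wise aggregation returns to \emph{every} node of $H_i$ the \emph{single} value $\bigoplus_{v \in V(H_i)} x_{v,i}$; it does not return per-vertex quantities. Consequently: (i) your RST construction via ``lex-min hop-count propagation'' is a BFS, which needs a number of iterations proportional to the hop-diameter of $H_i$ --- possibly $\Omega(n)$ even when $D(G)$ is small, which is precisely the situation PA exists to circumvent --- and no associative aggregator turns this into $\tilde{O}(1)$ PAs; the standard construction is instead Bor\r{u}vka-style fragment merging ($O(\log n)$ phases, each fragment finding a minimum outgoing edge via PA), which yields an arbitrary, not BFS, spanning tree and is then re-rooted at $r_i$. (ii) STA is not ``a single PA'': the required outputs $\bigoplus_{u \in V(H_i(v))} x_{u,i}$ differ from vertex to vertex, whereas one PA delivers one common value; subtree aggregation needs an additional mechanism (e.g.\ tree contraction / path decomposition over $O(\log n)$ PA phases). (iii) Most importantly, in MVC each Ford--Fulkerson phase in your plan runs a layered BFS of the residual graph, which again costs a number of SNC/PA invocations equal to the residual $s$-$t$ hop distance, not $\tilde{O}(1)$; summed over $t$ phases this does not give $\tilde{O}(t)$ invocations. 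The known route (Li~\citep{Li18}) exploits the structure of the residual graph after $i \le t$ unit-vertex-capacity augmentations --- an undirected graph plus $i$ reversed vertex-disjoint flow paths --- so that the augmenting-path search reduces to connectivity computations on undirected pieces (CCD-type operations, $\tilde{O}(1)$ PAs each) stitched along the few directed path segments, rather than to a distance computation. Your high-level architecture for MVC (vertex splitting, at most $t$ augmentations, max-flow/min-cut duality for correctness) is right; the missing idea is how a single augmenting path is found within an $\tilde{O}(1)$ PA budget.
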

We also use the above three-character shorthands for indicating which algorithm is applied to handle
a given task. For example, consider the statement ``the algorithm performs some 
task (STA)'', here we mean that the task is implemented by subtree aggregation. 

\subsection{Scheduling Multiple Instances of Subgraph Operations}
\label{subsec:multiplescheduling}

\sloppy{
We would like to execute $N$ CONGEST algorithms $\mathcal{A} = \{A_1, A_2, A_3, \dots, A_N\}$ simultaneously and independently in a common network $G$. The \emph{dilation} of $A_i$ in $G$ is the running time of 
$A_i$ when only it is executed in $G$. 
Letting $c_i(e)$ be the number of rounds that $A_i$ uses edge $e$ (i.e. sends a message through $e$),
the congestion of $A_i$ in $G$ is the value $\max_{e \in E(G)} c_i(e)$. Note that 
each algorithm $A_i$ can be a randomized algorithm. Then the dilation and congestion of 
$A_i$ become random variables, and a failure of $A_i$ means that either 
$A_i$ outputs a wrong answer or violates the specified dilation or congestion bound.
The dilation of $\mathcal{A}$ is defined as the maximum dilation over all algorithms in $\mathcal{A}$,
and the congestion of $\mathcal{A}$ in $G$ is defined as $\max_{e \in E(G)} \sum_{1 \leq i \leq N} c_i(e)$. The following general scheduling theorem is known:
}

\begin{theorem}[Ghaffari~\citep{Ghaffari15}] \label{thm:scheduling}
Let $\mathcal{A}$ be a set of (possibly randomized) CONGEST algorithms, such that the dilation and congestion of 
$\mathcal{A}$ in $G$ are respectively bounded by $\delta$ and $\gamma$ whp.  Then there 
exists a CONGEST algorithm for running all algorithms in $\mathcal{A}$ in $\tilde{O}(\delta + \gamma)$
rounds whp.
\end{theorem}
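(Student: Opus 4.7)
The plan is to follow the random-delay approach of Leighton--Maggs--Rao, which is the standard route to such scheduling theorems and is the route Ghaffari adapts to the CONGEST setting. The central idea is to pick, for every algorithm $A_i$, an independent uniform starting delay $d_i \in \{1, 2, \dots, \gamma\}$ and to simulate $A_i$ during rounds $d_i, d_i+1, \dots, d_i + \delta - 1$ of the merged schedule. A naive interleaving would pay the aggregate congestion on some single edge, but the random offsets smear the usage of each edge across $\gamma$ rounds, which I will argue keeps the per-edge, per-round load small whp.

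First I would fix an arbitrary edge $e$ and an arbitrary round $r$ of the merged schedule, and let $X_i$ be the indicator that $A_i$ transmits across $e$ in round $r$. Since $A_i$ uses $e$ at $c_i(e)$ of its internal rounds, and each such usage is mapped to round $r$ by exactly one value of $d_i$, I obtain $\Pr[X_i = 1] \le c_i(e)/\gamma$, and hence $\mathbb{E}\bigl[\sum_i X_i\bigr] \le \sum_i c_i(e)/\gamma \le 1$ by the congestion hypothesis on $\mathcal{A}$. The variables $\{X_i\}$ are independent because the delays are chosen independently, so a Chernoff bound gives $\sum_i X_i = O(\log n)$ whp, and a union bound over the $\mathrm{poly}(n)$ edges and the $O(\gamma + \delta)$ rounds preserves this bound simultaneously everywhere.

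Next I would dispose of the residual $O(\log n)$ per-edge, per-round congestion by a purely local reshuffling: whenever a node has $k = O(\log n)$ messages slated for the same edge in the same round, it serializes them over $k$ consecutive rounds using a deterministic tiebreaking rule (for instance, by algorithm ID). This local stretching inflates the total time by only an $O(\log n)$ factor, so every algorithm completes by round $O((\gamma + \delta)\log n) = \tilde{O}(\gamma + \delta)$. No global coordination is required: each node only needs the delays of the algorithms it participates in, which can be disseminated along with the random delays themselves.

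The main subtlety I expect to wrestle with is that the algorithms themselves are allowed to be randomized, so $c_i(e)$ and the individual dilations are random variables satisfying the stated bounds only whp. My plan is to condition on the high-probability event that every $A_i$ respects its congestion and dilation guarantees (available via a union bound over the $N \le \mathrm{poly}(n)$ algorithms), and then apply the delay analysis, whose randomness is independent of the internal randomness of the $A_i$. The two layers of randomness then compose cleanly, preserving the whp success guarantee and yielding the $\tilde{O}(\gamma + \delta)$-round bound claimed in the theorem.
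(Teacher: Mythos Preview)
The paper does not prove this theorem; it is stated as a known result of Ghaffari~\citep{Ghaffari15} and used as a black box. Your sketch is the standard Leighton--Maggs--Rao random-delay argument, which is indeed the approach Ghaffari adapts to the CONGEST setting, so there is nothing to compare against in this paper and your outline is essentially the right one.

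One caveat worth flagging: your conditioning step tacitly assumes that the communication pattern of each $A_i$ (which edges it uses in which of its internal rounds) is determined once its internal coins are fixed, independently of how the scheduler interleaves the algorithms. This is fine for the oblivious setting in which the theorem is typically applied (and is how the applications in this paper use it), but if an $A_i$ were adaptive to the timing of incoming messages from other algorithms, the quantities $c_i(e)$ could in principle depend on the delays $d_j$, breaking the independence you need for the Chernoff bound. Ghaffari's formulation handles this by treating each $A_i$ as running in isolation when measuring its congestion and dilation; you should state that assumption explicitly.
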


Now we bound the congestion and dilation of PA for a near disjoint collection of $G$. The following
result is known:
\begin{lemma}[Haeupler et al.~\citep{HIZ16-2}, \citep{HHEW18}]
	\label{lma:SAcongestion}
	For any undirected graph $G$ and its near disjoint collection $\mathcal{H}$ of connected subgraphs, 
	PA for $\mathcal{H}$ can be solved deterministically with dilation $\tilde{O}(\TW{G}D)$ and congestion $\tilde{O}(\TW{G})$.
\end{lemma}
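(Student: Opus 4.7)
The plan is to reduce the near-disjoint case to the standard vertex-disjoint case and then inspect the known algorithm to pull a congestion bound out alongside the dilation bound.

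For the reduction, I would reuse the preprocessing trick already used in the proof of Lemma~\ref{lma:SA}. Let $V'(H_i)$ be the nodes appearing only in $H_i$, and let $H'_i$ be the subgraph induced on $V'(H_i)$, which is connected by the second near-disjointness condition. For every boundary node $u \in V(H_i) \setminus V'(H_i)$, the first near-disjointness condition guarantees an edge $(u,v)$ with $v$ lying in a unique subgraph containing $u$, so $v$ lies in some $V'(H_i)$. Each such $u$ forwards its input $x_{u,i}$ to such a $v$ in one round, $v$ locally folds it into its own value with $\oplus$, we run PA on the vertex-disjoint collection $\mathcal{H}' = \{H'_1,\ldots,H'_N\}$, and finally the aggregated answer is pushed back to each boundary node in one symmetric round. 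The pre- and post-processing rounds contribute only $O(1)$ to both dilation and congestion, so it suffices to prove the claimed bounds for vertex-disjoint $\mathcal{H}'$.

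For the vertex-disjoint case, I would invoke the low-congestion shortcut construction for bounded-treewidth graphs of \citep{HIZ16-2}, which, given a vertex-disjoint partition of $G$, produces shortcut edges such that (i)~each augmented subgraph $H'_i \cup S_i$ has diameter $\tilde{O}(\TW{G}+D)$ and (ii)~each edge of $G$ is used in the implementation of at most $\tilde{O}(\TW{G})$ shortcut edges. PA is then realized by first building a spanning tree of each augmented subgraph and then doing a pipelined upcast/convergecast of the $\oplus$-values on that tree, as in \citep{HHEW18}. The standard analysis bounds the number of rounds by $\tilde{O}(\TW{G}+D)\cdot \tilde{O}(\TW{G}) = \tilde{O}(\TW{G}D)$ because each round of the shortcut protocol is simulated on $G$ with overhead $\tilde{O}(\TW{G})$ (the per-edge congestion of a single shortcut round). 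To extract the separate congestion bound, I would re-examine this simulation: the pipelined protocol sends $\tilde{O}(1)$ messages per shortcut edge during the whole run, and each edge of $G$ participates in at most $\tilde{O}(\TW{G})$ shortcut edges, giving total per-edge usage $\tilde{O}(\TW{G})$ over the entire algorithm. This matches the claimed congestion.

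The main obstacle is this last re-examination: the theorems cited in \citep{HIZ16-2,HHEW18} are typically stated as round bounds only, so I will have to open up their shortcut-plus-pipelining pipeline and verify that every edge of $G$ carries at most $\tilde{O}(\TW{G})$ messages across the whole execution (rather than $\tilde{O}(\TW{G})$ per each of $\tilde{O}(D)$ rounds, which would give $\tilde{O}(\TW{G}D)$ congestion and be useless for Theorem~\ref{thm:scheduling}). The key observation making this work is that each shortcut edge is transmitted only $\tilde{O}(1)$ times under pipelined convergecast, so the per-$G$-edge count is simply the number of shortcut edges routed through it. Once this bookkeeping is done, we obtain dilation $\tilde{O}(\TW{G}D)$ and congestion $\tilde{O}(\TW{G})$ for PA on $\mathcal{H}'$, and pulling back through the $O(1)$-overhead reduction gives the same bounds for $\mathcal{H}$, as required.
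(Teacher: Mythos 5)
The paper never proves this lemma --- it is imported wholesale from \citep{HIZ16-2,HHEW18}, with the extension to near-disjoint collections left implicit (it follows from the same reduction used in Lemma~\ref{lma:SA}). So there is no in-paper proof to compare against; what you have done is supply the argument the paper omits, and your reconstruction is sound. Your reduction to the vertex-disjoint case is exactly the preprocessing of Lemma~\ref{lma:SA}: one forward round and one backward round, each edge carrying $O(1)$ messages because the targets for distinct parts lie in the pairwise-disjoint sets $V'(H_i)$. Your congestion bookkeeping for the vertex-disjoint case isolates the correct invariant, which is the actual content of the lemma beyond the $\tilde{O}(\TW{G}D)$ round bound already quoted in Section~\ref{subsec:partwiseaggregation}: a tree-based convergecast/broadcast inside each augmented part $H'_i\cup S_i$ uses each of its edges only $O(1)$ times over the whole execution, so the total number of messages crossing an edge $e$ of $G$ is $O(1)$ times the number of parts whose shortcut contains $e$, which the shortcut guarantee bounds by $\tilde{O}(\TW{G})$. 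This is precisely the distinction (total congestion versus per-round congestion) needed to make Theorem~\ref{thm:scheduling} useful, and you flag it explicitly. Two minor blemishes, neither fatal: (i) the arithmetic $\tilde{O}(\TW{G}+D)\cdot\tilde{O}(\TW{G})$ yields $\tilde{O}(\TW{G}^2+\TW{G}D)$, which is not $\tilde{O}(\TW{G}D)$ when $\TW{G}\gg D$; the dilation claim nevertheless stands since the $\tilde{O}(\TW{G}D)$ running time for PA is exactly what \citep{HIZ16-2,HHEW18} deliver and what the paper already cites. (ii) The existence of a neighbor $v\in V'(H_i)$ for each boundary node $u$ deserves one more sentence: pick any neighbor $v$ of $u$ inside the connected subgraph $H_i$; since $u$ lies in at least two parts, the first near-disjointness condition forces $v$ to lie in exactly one, namely $H_i$, so $v\in V'(H_i)$.
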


This lemma yields better running time bounds for executing a collection of tasks presented in Lemma~\ref{lma:SAapplication}
independently in parallel. Specifically, our algorithms 
execute multiple runs of BCT and MVC. We state the following two corollaries.
\begin{corollary} \label{corol:multivertexcut}
Let $\mathcal{H} = \{H_1, H_2, \dots, H_N\}$ be a near disjoint collection of $G$. 
For each subgraph $H_i$, $h$ pairs of disjoint vertex subsets $X_{i,j}, Y_{i,j} \subseteq 
        V(H_i)$ ($1 \leq j \leq h$) are given. Then there exists a randomized algorithm for finding a
        $X_{i,j}$-$Y_{i,j}$ vertex cut of size at most $t$ (if it exists) for all $1 \leq i \leq N$ and $1 \leq j \leq h$ in $\tilde{O}(t\TW{G}D + ht\TW{G})$ rounds whp.
\end{corollary}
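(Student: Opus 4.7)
The plan is to reduce the corollary to a single application of the scheduling theorem (Theorem~\ref{thm:scheduling}) layered on top of the parallel PA primitive of Lemma~\ref{lma:SAcongestion}. First, I would invoke Lemma~\ref{lma:SAapplication} to express a single MVC($t$) instance on one subgraph as $\tilde{O}(t)$ invocations of PA and SNC; this reframes the whole statement as a scheduling problem for $Nh$ algorithms, each built out of $\tilde{O}(t)$ PA calls.

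Next I would bundle the $N$ instances sharing a common index $j \in [1,h]$ into a single ``batch,'' which executes one MVC($t$) on every subgraph of $\mathcal{H}$ in parallel. Because $\mathcal{H}$ is near-disjoint, Lemma~\ref{lma:SAcongestion} guarantees that one PA call run simultaneously across $\mathcal{H}$ has dilation $\tilde{O}(\TW{G}D)$ and congestion $\tilde{O}(\TW{G})$ on the edges of $G$. Composing the $\tilde{O}(t)$ PA calls that make up a single MVC($t$), each batch therefore has dilation $\tilde{O}(t\TW{G}D)$ and per-edge congestion $\tilde{O}(t\TW{G})$.

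The heart of the argument is to run the $h$ batches through Theorem~\ref{thm:scheduling} rather than executing them sequentially; sequential execution would cost $\tilde{O}(ht\TW{G}D)$, which is too slow. Treating the $h$ batches as the algorithms fed to the scheduler, the maximum per-algorithm dilation is $\delta = \tilde{O}(t\TW{G}D)$, while the worst-case sum of congestions on any edge of $G$ is $\gamma = \tilde{O}(ht\TW{G})$. It is essential that the factor $N$ does not appear in $\gamma$: instances attached to distinct subgraphs $H_i$ and $H_{i'}$ cannot pile up on the same edge thanks to near-disjointness, so only the $h$ instances attached to a single $H_i$ accumulate on edges of $H_i$. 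Plugging $(\delta, \gamma)$ into Theorem~\ref{thm:scheduling} yields the advertised $\tilde{O}(t\TW{G}D + ht\TW{G})$ bound.

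The main obstacle I anticipate is precisely this congestion bookkeeping: one must verify that within a single $H_i$ the $h$ instances genuinely share edges and hence their congestions add (producing the $h$ factor in $\gamma$), while across $\mathcal{H}$ the near-disjointness prevents a factor of $N$ from slipping in anywhere. A secondary point is the ``whp'' qualifier: MVC($t$) and the PA primitive are deterministic by Lemmas~\ref{lma:SAapplication} and~\ref{lma:SAcongestion}, so the dilation/congestion hypotheses of Theorem~\ref{thm:scheduling} hold unconditionally, and the ``whp'' in the final statement stems solely from the scheduler.
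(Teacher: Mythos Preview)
Your approach is correct and matches the paper's intended derivation: the corollary is stated without proof precisely because it follows by feeding the $h$ batches of MVC($t$) (each realized as $\tilde{O}(t)$ PA/SNC calls via Lemma~\ref{lma:SAapplication}, with per-batch dilation $\tilde{O}(t\TW{G}D)$ and congestion $\tilde{O}(t\TW{G})$ by Lemma~\ref{lma:SAcongestion}) into Theorem~\ref{thm:scheduling}, exactly as you outline. One small caveat on your ``obstacle'' paragraph: the reason $N$ does not enter $\gamma$ is \emph{not} that instances on distinct $H_i$ cannot share edges of $G$ (shortcut-based PA routes through $G$, and shortcuts for different parts may well overlap), but simply that Lemma~\ref{lma:SAcongestion} already bounds the congestion of one PA call for the \emph{entire} collection $\mathcal{H}$ by $\tilde{O}(\TW{G})$---a fact you invoked correctly in the preceding paragraph, so the proof itself is unaffected.
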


\begin{corollary} \label{corol:multibroadcast}
\sloppy{
Let $\mathcal{H} = \{H_1, H_2, \dots, H_N\}$ be a near disjoint collection of $G$.
Assume that each subgraph $H_i \in \mathcal{H}$ contains $h$ source nodes 
$u_{i, 0}, u_{i, 1}, \dots, u_{i, h-1}$ (some of which might be the same node). 
They respectively broadcast messages $m_{i,0}, m_{i,1}, \dots, m_{i, h-1}$ of $O(\log n)$ bits 
in $H_i$. The output of the algorithm at node $v \in V(H_i)$ is the set of all source-message pairs 
$\{(u_{i, 0}, m_{i, 0}), (u_{i, 1}, m_{i, 1}), \dots, (u_{i, h-1}, m_{i, h - 1})\}$ in $H_i$. There exists a
randomized algorithm for solving this task within $\tilde{O}(\TW{G}D + h\TW{G})$ rounds whp.
}
\end{corollary}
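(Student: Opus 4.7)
The plan is to reduce the multi-message task to $h$ parallel single-message BCT instances and apply Ghaffari's scheduling theorem (Theorem~\ref{thm:scheduling}). Specifically, for each $j \in [0, h-1]$, let $\mathcal{A}_j$ denote the single-message BCT instance in which, for every $H_i \in \mathcal{H}$, the source $u_{i,j}$ broadcasts $(u_{i,j}, m_{i,j})$ throughout $H_i$. By Lemma~\ref{lma:SAapplication}, each $\mathcal{A}_j$ can be implemented by $\tilde{O}(1)$ invocations of PA and SNC on the near disjoint collection $\mathcal{H}$.

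Next, I would bound the congestion and dilation of each $\mathcal{A}_j$ using Lemma~\ref{lma:SAcongestion}: since PA for a near disjoint collection admits an implementation with dilation $\tilde{O}(\TW{G}D)$ and per-edge congestion $\tilde{O}(\TW{G})$, and SNC contributes only constant congestion and dilation, each $\mathcal{A}_j$ has dilation $\delta_j = \tilde{O}(\TW{G}D)$ and congestion $\gamma_j = \tilde{O}(\TW{G})$. Hence the dilation of the collection $\mathcal{A} = \{\mathcal{A}_0, \dots, \mathcal{A}_{h-1}\}$ is still $\delta = \tilde{O}(\TW{G}D)$, while the total per-edge congestion, being the sum of $\gamma_j$ across the $h$ instances, is $\gamma = \tilde{O}(h\TW{G})$. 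Invoking Theorem~\ref{thm:scheduling} then yields a combined running time of $\tilde{O}(\delta + \gamma) = \tilde{O}(\TW{G}D + h\TW{G})$ whp.

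The main subtlety — and arguably the only non-routine point — is to ensure that the congestion bound of Lemma~\ref{lma:SAcongestion} genuinely transfers to the BCT primitive on a \emph{near} disjoint (rather than vertex-disjoint) collection. This is where the preprocessing step from the proof of Lemma~\ref{lma:SA} is crucial: boundary nodes in $V(H_i) \setminus V(H'_i)$ push their contribution onto a single neighbor in $V(H'_i)$ in one round (contributing $O(1)$ congestion), after which BCT runs purely on the vertex-disjoint collection $\mathcal{H}' = \{H'_1, \dots, H'_N\}$ for which Lemma~\ref{lma:SAcongestion} applies verbatim. A final one-round echo back to the non-interior nodes suffices to deliver each broadcast message, and these auxiliary rounds contribute only $O(h)$ extra congestion per edge, which is absorbed into $\tilde{O}(h\TW{G})$. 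Correctness is immediate from the correctness of the underlying BCT primitive, so the bulk of the argument is this scheduling-plus-congestion accounting.
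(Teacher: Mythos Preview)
Your proposal is correct and follows essentially the same approach that the paper intends: the corollary is stated immediately after Theorem~\ref{thm:scheduling} and Lemma~\ref{lma:SAcongestion} with the remark that ``this lemma yields better running time bounds for executing a collection of tasks presented in Lemma~\ref{lma:SAapplication} independently in parallel,'' so the implied proof is precisely your reduction to $h$ single-source BCT instances (each realized by $\tilde{O}(1)$ PA/SNC calls with dilation $\tilde{O}(\TW{G}D)$ and congestion $\tilde{O}(\TW{G})$) combined with Ghaffari's scheduler. Your extra paragraph about the near-disjoint subtlety is more cautious than necessary, since Lemma~\ref{lma:SAcongestion} is already stated for near-disjoint collections, but it does no harm.
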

We refer to the operations above as MVC($h, t$) and BCT($h$), respectively. 

\section{Fully Polynomial-Time Distributed Tree Decomposition}
\label{appendix:treewidth}

\subsection{Correctness of \textsc{Sep}}
\label{appendix:correctnesssep}

We prove the correctness of \textsc{Sep}. Let $S$ 
be any $(X, 1/2)$-balanced separator of size $\TW{G} + 1$ in $G$. Let $\Tcal^{\ast} = 
\bigcup_{1 \leq i \leq \hat{t}} \Tcal_i$. An ordered pair $(T_1, T_2) 
\in \Tcal^{\ast} \times \Tcal^{\ast}$ is called \emph{separated} if $T_1$ and $T_2$ do not intersect $S$ 
and belong to different connected components in $G - S$.



\begin{proposition} \label{prop:numberoftrees}
Assume that the algorithm executes step 4.
Then, $\forall i \in [1, \hat{t}], 3.9t \leq |\Tcal_i| \leq 12.1t$ holds. 
\end{proposition}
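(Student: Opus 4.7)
The plan is to derive both inequalities from a single mass-preservation invariant: whenever the algorithm executes step~4, $\mu(G_i)$ remains close to $\mu(G)$ for every $i\in[1,\hat{t}]$. Given that every $T\in\Tcal_i$ satisfies $\mu(G)/(12t)\leq\mu(T)\leq\mu(G)/(4t)$ and that the trees of $\Tcal_i$ cover $V(G_i)$ with small overlap, sandwiching $\sum_{T\in\Tcal_i}\mu(T)$ between $\mu(G_i)$ and $\mu(G_i)+\text{overlap}$ yields the two bounds essentially by averaging.

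First I would prove by induction on $i$ that $\mu(G_i)\geq(14399/14400)\mu(G)$ for every $i\in[1,\hat{t}]$. The key observation is that $G_i$ is always a connected component of $G-R^{\ast}_{i-1}$: since $R_{i-1}\subseteq V(G_{i-1})$, removing $R_{i-1}$ from the components of $G-R^{\ast}_{i-2}$ affects only $G_{i-1}$, so the components of $G_{i-1}-R_{i-1}$ (including $G_i$) are themselves components of $G-R^{\ast}_{i-1}$. Because the algorithm does not halt at iteration $i$, $R^{\ast}_i$ fails to be an $(X,14399/14400)$-separator of $G$, so some component $C$ of $G-R^{\ast}_i$ satisfies $\mu(C)>(14399/14400)\mu(G)$. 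By the inductive hypothesis $\mu(G_i)>\mu(G)/2$, every non-$G_i$ component of $G-R^{\ast}_{i-1}$ has mass at most $\mu(G)/14400$; hence $C$ must lie inside $G_i$ and is therefore a component of $G_i-R_i$. The definition of $G_{i+1}$ as the heaviest component of $G_i-R_i$ then yields $\mu(G_{i+1})\geq\mu(C)>(14399/14400)\mu(G)$.

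The lower bound $|\Tcal_i|\geq 3.9t$ follows immediately: $\mu(G_i)\leq\sum_{T\in\Tcal_i}\mu(T)\leq|\Tcal_i|\cdot\mu(G)/(4t)$ gives $|\Tcal_i|\geq 4t\cdot(14399/14400)>3.9t$. For the upper bound, I would bound the overlap among trees in $\Tcal_i$: two trees may share a vertex only at a ``center'' produced by the ``else'' branch of \textsc{Split}, where $T'$ is split into $\ell'$ subtrees all containing the center $c$. A telescoping argument matches the $\ell'-1$ contribution to the overlap against the $\ell'-1$ net increase in $|\Tcal|$ during that invocation, so the total overlap is at most $|\Tcal_i|$. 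Combined with the size lower bound, $|\Tcal_i|\cdot\mu(G)/(12t)\leq\mu(G)+|\Tcal_i|$, rearranging to $|\Tcal_i|\leq 12t\mu(G)/(\mu(G)-12t)$. Since step~1 did not halt, $\mu(G)>200t^2$, and this ratio is at most $12.1t$.

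The main obstacle will be the overlap bookkeeping in the upper-bound argument. Because \textsc{Split} is invoked many times along the recursive splitting of $T^{\ast}$, each possibly producing a new center shared among several subtrees, I must track how each invocation contributes to both the overlap and the growth of $|\Tcal|$, and verify that shared vertices never accumulate faster than $|\Tcal_i|$ itself. Tightening this overlap bound is exactly what makes the ratio $12t\mu(G)/(\mu(G)-12t)$ fit into the slack of $12.1t$ once $\mu(G)>200t^2$ is used.
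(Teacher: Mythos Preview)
Your proposal is correct and arrives at the same final inequality as the paper, but the upper-bound argument differs in presentation. The paper exploits directly the structural fact that trees in $\Tcal_i$ are vertex-disjoint except at their roots: removing the single root from each $T_j$ yields disjoint sets $\hat V_j$ with $\mu(\hat V_j)\ge \mu(T_j)-1\ge \mu(G)/(12t)-1$, so $N(\mu(G)/(12t)-1)\le \mu(G_i)\le \mu(G)$. Your telescoping bound on the total overlap (matching each center's $\ell'-1$ overcount to the $\ell'$ trees it contributes to $\Tcal_i$) is a valid alternative and gives the identical inequality $N(\mu(G)/(12t)-1)\le \mu(G)$; it just requires a bit more bookkeeping than the paper's one-line ``remove the root'' step. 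On the lower bound, your explicit induction showing that $G_i$ is a component of $G-R^{\ast}_{i-1}$ and that $\mu(G_i)\ge(14399/14400)\mu(G)$ actually fills in a step the paper states without justification.

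One small caveat shared by both arguments: the claim that $12t\mu(G)/(\mu(G)-12t)\le 12.1t$ does not hold as a real-number inequality for small $t$ when $\mu(G)$ is only barely above $200t^2$ (e.g.\ $t=2$ gives roughly $24.7>24.2$); it is rescued by integrality of $|\Tcal_i|$, which you may want to invoke explicitly.
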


\begin{proof}
Let us denote $N = |\Tcal_i|$.
Since $R^{\ast}_{i-1}$ is not a $(X, 14399/14400)$-balanced separator, 
we get that $\mu(G_i) \geq 14399\mu(V(G))/14400$. Each tree $T \in \Tcal_i$ has a size at most $\mu(G)/(4t)$.
Thus the following inequality holds.
\begin{align*}
    N &\geq \frac{\mu(G_i)}{\frac{\mu(G)}{4t}} = \frac{4t\mu(G_i)}{\mu(G)} \geq \frac{4t \cdot 14399}{14400} \geq 3.9t.
\end{align*}
Let $\Tcal_i = \{T_1, T_2, \dots, T_N\}$.
Trees in $\Tcal_{i}$ are vertex disjoint except for their root vertices, and thus 
the collection of vertex subsets $\{\hat{V}_j\}_{1 \leq j \leq N}$, where $\hat{V}_j = V(T_j) \setminus R_i$, 
are disjoint.  Then we have $\mu(\hat{V}_j) = \mu(T_j) - 1 \geq \mu(G_i)/(12t) - 1$.
In addition, since the algorithm does not halt at step 1, we
obtain $1/\mu(G) \leq 1/200t^2 \leq 1/800$ (recall $t \geq \TW{G} + 1 \geq 2$). 
Putting everything together, we state the following inequality:
\begin{align*}
    N &\leq \frac{\mu(G_i)}{\frac{\mu(G)}{12t} - 1} \leq 
    \frac{\mu(G)}{\frac{\mu(G)}{12t} - 1} \leq \frac{12t}{1 - 1/\mu(G)} \leq (12t) \cdot \frac{800}{799}
    \leq 12.1t.
\end{align*}
\end{proof}

\begin{lemma} \label{lma:separatedpairs}
Assume that \textsc{Sep} executes step 4. There exists a subset $I \subset [1, \hat{t}]$ of cardinality
$\lceil t/300 \rceil$ such that for any $i \in I$, $|\Tcal_i|^2/20$ pairs in $\Tcal_i \times \Tcal_i$ 
are separated.
\end{lemma}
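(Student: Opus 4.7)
The plan is to prove the lemma in two stages: first, use a simple potential argument to identify a set $I$ of ``good'' iterations with $|I| \geq \lceil t/300 \rceil$; second, for each $i \in I$ use the balanced-separator property together with the mass bounds of Proposition~1 to count $\geq |\Tcal_i|^2/20$ separated pairs.

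For the first stage, fix a $(X, 1/2)$-balanced separator $S$ of $G$ of size at most $\TW{G}+1 \leq t$ (which exists by a standard fact). Define the potential $\sigma_i = |V(G_i) \cap S|$. Since $V(G_{i+1}) \subseteq V(G_i) \setminus R_i$, the sequence $(\sigma_i)$ is non-increasing and bounded by $\sigma_1 \leq t$. Any iteration with $\sigma_{i+1} < \sigma_i$ consumes at least one vertex of $S$, so there are at most $t$ such iterations. Taking $I = \{i \in [1, \hat{t}] : \sigma_i = \sigma_{i+1}\}$ yields $|I| \geq \hat{t} - t = \lceil t/300 \rceil$. Moreover, $\sigma_i = \sigma_{i+1}$ forces $R_i \cap S = \emptyset$ (otherwise a vertex of $S$ would be consumed as a root) and $V(G_i) \cap S \subseteq V(G_{i+1})$ (otherwise a vertex of $S$ would fall in a non-heaviest component of $G_i - R_i$).

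For the second stage, fix $i \in I$ and write $N = |\Tcal_i|$; by Proposition~1, $N \in [3.9t, 12.1t]$. Since $R_i \cap S = \emptyset$ and non-root vertices of distinct trees in $\Tcal_i$ are disjoint, each vertex of $S$ lies as a non-root in at most one tree of $\Tcal_i$, so the number of ``non-free'' trees (those intersecting $S$) is at most $|S| \leq t$; the remaining $F \geq N - t$ ``free'' trees lie entirely in $G - S$, each inside a unique component of $G - S$. The key mass estimate is
\begin{equation*}
\mu\bigl(V(\text{free trees})\bigr) \;\geq\; \mu(G_i) - t \cdot \tfrac{\mu(G)}{4t} \;\geq\; \tfrac{14399}{14400}\mu(G) - \tfrac{1}{4}\mu(G) \;=\; \tfrac{10799}{14400}\mu(G) \;>\; \tfrac{1}{2}\mu(G),
\end{equation*}
using that the algorithm having reached step~4 guarantees $\mu(G_i) \geq (14399/14400)\mu(G)$, and that each of the at most $t$ non-free trees contributes mass at most $\mu(G)/(4t)$. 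Letting $C^{**}$ be the component of $G - S$ holding the most free trees, the bound $\mu(C^{**}) \leq \mu(G)/2$ together with the per-tree mass bound $\mu(G)/(4t)$ implies that at least $(3599/14400) \cdot 4t = (1-o(1))t$ free trees lie outside $C^{**}$, so $F - N^{**} \geq (1-o(1))t$. On the other hand, the same mass bound for $C^{**}$ combined with the per-tree lower bound $\mu(G)/(12t)$ gives $N^{**} \leq 6t$ (up to negligible root-sharing corrections, since $\mu(G) \geq 200t^2$). Bounding $\sum_C N(C)^2 \leq N^{**} \cdot F \leq \min\{6t,\, F - t\} \cdot F$, the number of ordered separated pairs is at least $F \cdot \max\{F - 6t,\, t\}$, and a routine case analysis on $F$ versus $7t$ (combined with $F \geq N - t$ and $N \in [3.9t, 12.1t]$) confirms this quantity is at least $N^2/20$.

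The main obstacle is the mass estimate forcing $\mu(V(\text{free trees})) > \mu(G)/2$, which is what guarantees the free trees span more than one component of $G - S$. This estimate crucially relies on both the slack between the $14399/14400$-balance threshold of step~3 of \textsc{Sep} and the $1/2$-balance of the hypothetical $S$, as well as the sharp per-tree mass upper bound $\mu(G)/(4t)$. Without both of these, the argument degenerates: in the regime $N \approx 3.9t$, all $F \approx 2.9t$ free trees could a priori fit inside a single component of $G - S$, yielding no separated pair at all.
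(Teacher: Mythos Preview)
Your proof is correct and follows essentially the same two-stage structure as the paper's. A few comparative remarks:

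\textbf{First stage.} Your potential argument via $\sigma_i = |V(G_i)\cap S|$ works, but the paper's route is shorter: since the root sets $R_1,\dots,R_{\hat t}$ are pairwise disjoint (because $V(G_{i+1})\subseteq V(G_i)\setminus R_i$), at most $|S|\le t$ of them can meet $S$, so at least $\hat t - t = \lceil t/300\rceil$ indices satisfy $R_i\cap S=\emptyset$. Your extra conclusion $V(G_i)\cap S\subseteq V(G_{i+1})$ is never used.

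\textbf{Second stage.} Both proofs count ``free'' trees ($F\ge N-t$) and exploit $\mu(C)\le\mu(G)/2$ for every component $C$ of $G-S$. The paper converts this into a bound $x_1\le (6.05/7.05)|\Tcal'_i|$ on the largest component's share and then uses a prefix-sum argument to exhibit a balanced bipartition $\Tcal^+_i,\Tcal^-_i$ each of size $\ge|\Tcal'_i|/7.05$, giving $|\Tcal^+_i||\Tcal^-_i|\ge 6.05|\Tcal'_i|^2/7.05^2$. Your variant---bounding non-separated ordered pairs by $\sum_C N(C)^2\le N^{**}F$ and then controlling $N^{**}$ two ways---is a cleaner path to the same inequality. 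One caution: the ``$(1-o(1))t$'' and ``$N^{**}\le 6t$ up to negligible corrections'' should be made explicit (the paper gets $6.05t$ via $\mu(G)\ge 200t^2$), and the ``routine case analysis'' is indeed routine but the margins are not huge at the extremes of $N\in[3.9t,12.1t]$, so it does need to be written out. The direct estimate $F(F-N^{**})\ge (N-t)\cdot 0.999t$ already suffices once you check $(r-1)/r^2>1/20$ for $r=N/t\in[3.9,12.1]$; the split at $F=7t$ is unnecessary.
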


\begin{proof}
Since $|S| \leq t$ holds and $R_i$ for all $1 \leq i \leq \hat{t}$ are disjoint, 
there exists a subset $I \subseteq [1, \hat{t}]$ of cardinality $\hat{t} - t \geq \lceil t/300 \rceil$ such that
$R_i \cap S = \emptyset$ holds for any $i \in I$. We show that $I$ satisfies the condition 
of the lemma. Let $i$ be any index in $I$. By Proposition~\ref{prop:numberoftrees}, we have $\geq |\Tcal_i| \geq 3.9t$.
Since trees in $\Tcal_i$ are vertex disjoint except 
for roots, at least $|\Tcal_i| - t$ trees in $\Tcal_i$ do not intersect 
$S$. Letting $\Tcal'_i \subseteq \Tcal_i$ be the set of such trees, we have 
\begin{align*}
|\Tcal'_i| &\geq |\Tcal_i| - t \geq |\Tcal_i| - \frac{|\Tcal_i|}{3.9} \geq \frac{2.9|\Tcal_i|}{3.9}.
\end{align*}
Next, we bound the number of tree pairs in $\Tcal'_i \times \Tcal'_i$ which respectively 
belong to two distinct connected components in $G - S$. Let $H_1, H_2, \dots, H_N$ be 
the connected components of $G - S$,
$Y = \bigcup_{T \in \Tcal_i \setminus \Tcal'_i} V(T)$, and $H'_i$ be the subgraph of $H_i$ 
obtained by removing all the vertices in $Y$.
We denote by $x_k$ the number of trees in $\Tcal'_i$ which are contained in $H'_k$. 
Without loss of generality, we assume that $x_k \geq x_{k+1}$ holds 
for $1 \leq k \leq N-1$. 
Since $S$ is a $(X, 1/2)$-balanced separator, $\mu(H'_1) \leq \mu(H_1) \leq \mu(G)/2$ holds.
The number of vertices in each tree in $\Tcal'_i$ excluding its root is at least $\mu(G)/(12t) - 1$.
By the calculation similar with Proposition~\ref{prop:numberoftrees}, it follows that 
$H'_1$ contains at most $\mu(H'_1)/(\mu(G)/(12t) - 1) \leq 6.05t$ trees. 
That is, $t \geq x_1/6.05$ holds.
Let $\overline{H'_1} = H'_2 + H'_3 + \dots + H'_N$, and $\overline{x_1} = x_2 + x_3 + \dots, x_N$.
Since each tree in $\Tcal_i \setminus \Tcal'_i$ contains at most $\mu(G)/(4t)$
vertices and at most $t$ trees are contained in $\Tcal_i \setminus \Tcal'_i$, we have $|Y| \leq \mu(G)/4$.
By the fact of $\mu(H'_1) \leq \mu(G)/2$,
we obtain $\mu(\overline{H'_1}) \geq \mu(G)/2 - |Y| \geq \mu(G)/4$.
Trees in $\Tcal'_i$ have sizes at most $\mu(G)/(4t)$ and are vertex disjoint except for 
their roots, each tree in $\Tcal'_i$ can exclusively contains at most $\mu(G)/(4t) - 1$ vertices in 
$\overline{H_1}$.
Thus we can bound $\overline{x_1}$ as follows:
\begin{align*}
\overline{x_1} \geq \frac{\mu(\overline{H'_1})}{\mu(G)/(4t) - 1} \geq \frac{t\mu(G)}{\mu(G) - 4t} 
\geq t \geq \frac{x_1}{6.05}. 
\end{align*}
Putting all together, we have 
\begin{align*}
\frac{6.05|\Tcal'_i|}{7.05} &= \frac{6.05(x_1 + \overline{x_1})}{7.05} \geq \frac{6.05x_1 + x_1}{7.05} 
= x_1.
\end{align*}
Then there exists $j$ such that 
$|\Tcal'_i|/7.05 \leq \sum_{1 \leq k \leq j} x_k \leq 6.05|\Tcal'_i|/7.05$ holds. We define $\Tcal^+_i$
as the trees in $\Tcal'_i$ contained in $H_1 + H_2 + \dots + H_j$, and $\Tcal^-_i$ as those contained
in $H_{j+1} + H_{j+2} + \dots, + H_N$. Since any pair in $\Tcal^+_i \times \Tcal^-_i$ is separated,
$|\Tcal^+_i|(|\Tcal'_i| - \Tcal^+_i|)$ pairs in $\Tcal_i \times \Tcal_i$ are separated.
$\Tcal^+_i$ contains at least $|\Tcal'_i|/7.05$ trees, we conclude 
that $6.05|\Tcal'_i|^2/(7.05 \cdot 7.05) = (6.05 \cdot (2.9)^2) |\Tcal_i|^2 /(7.05 \cdot 3.9)^2 \geq |\Tcal_i|^2/20$ pairs are separated.
\end{proof}

In the following argument, let $I$ be the set of indices shown in Lemma~\ref{lma:separatedpairs}.
For simplicity of argument, we assume $I = [1, \lceil t/300 \rceil]$. Assume that the algorithm executes 
step 4. 
For $i \in I$, let $(T_{i,j,1}, T_{i,j,2})$ be the $j$-th pair chosen from $\Tcal_i \times \Tcal_i$ 
at step 4 ($1 \leq 1 \leq 2t$, $1 \leq j \leq 95$), and $\Pcal_i = \{(T_{i,j,1}, T_{i,j,2}) 
\mid j \in [1, 95]\}$, 
For a subset $\mathcal{X} \subseteq \Tcal^{\ast} \times \Tcal^{\ast}$, we define $V(\mathcal{X}) = \bigcup_{(T_1, T_2) \in \mathcal{X}} (V(T_1) \cup V(T_2))$. 

\begin{lemma} \label{lma:separateprobability}
With probability at least 4/5, there exists a subset $I' \subseteq I$ and a set of pairs 
$U = \{(T_{i,1}, T_{i,2})\}_{i \in I'}$ satisfying the following two conditions:
\begin{itemize}
    \item $(T_{i, 1}, T_{i, 2})$ is contained in $\Pcal_i$ and is separated.
    \item For any $\gamma : I' \to \{1, 2\}$, $\bigcup_{i \in I'} V(T_{i, \gamma(i)}) \geq \mu(G)/14400$.
\end{itemize}
\end{lemma}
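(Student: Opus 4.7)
The plan is to prove the lemma in two stages: a probabilistic argument ensuring that many iterations in $I$ contain a separated sampled pair, followed by a greedy construction of $I'$ whose insertions guarantee the union property against any adversarial $\gamma$.

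For the first stage, Lemma~\ref{lma:separatedpairs} gives that for each $i \in I$, at least $|\Tcal_i|^2/20$ of the ordered pairs in $\Tcal_i \times \Tcal_i$ are separated, so each uniformly sampled pair is separated with probability at least $1/20$. Since the $95$ pairs of $\Pcal_i$ are sampled independently, the probability that none of them is separated is at most $(19/20)^{95} \leq e^{-4.75} < 1/100$. Writing $X_i$ for the indicator of this bad event, Markov's inequality applied to $\sum_{i \in I} X_i$ yields that, with probability at least $9/10$, at most $|I|/10$ iterations have no separated sample; denote the remaining iterations by $I_{\mathrm{good}}$, so $|I_{\mathrm{good}}| \geq 9|I|/10$.

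For the second stage, fix a threshold $\alpha = \mu(G)/(24t)$ and run a greedy over $I_{\mathrm{good}}$ in increasing order. Maintain $U \subseteq V(G)$ and $I'$, both initially empty. For each $i \in I_{\mathrm{good}}$, search the separated pairs of $\Pcal_i$ for one, $(T, T')$, with $\mu(V(T) \setminus U) \geq \alpha$ and $\mu(V(T') \setminus U) \geq \alpha$; if such a pair exists, set $(T_{i,1}, T_{i,2}) := (T, T')$, add $i$ to $I'$, and update $U \leftarrow U \cup V(T) \cup V(T')$. For any $\gamma \colon I' \to \{1, 2\}$, the $\gamma$-restricted union built from earlier indices is contained in $U$ at every insertion, so each newly added iteration contributes $\mu$-mass at least $\alpha$ to the $\gamma$-union, giving $\mu\bigl(\bigcup_{i \in I'} V(T_{i, \gamma(i)})\bigr) \geq |I'| \cdot \alpha$.

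The main obstacle is to lower bound $|I'|$ by $t/600$ so that the bound reaches $\mu(G)/14400$. Throughout the process $|U| \leq \mu(G)/600$, since $|I'| \leq |I| \leq \lceil t/300 \rceil$ and each insertion contributes at most $2 \cdot \mu(G)/(4t)$ vertices to $U$. Because trees in $\Tcal_i$ are vertex-disjoint outside their roots, the number of ``blocked'' trees (those with $\mu$-intersection with $U$ exceeding $\mu(G)/(24t)$) in $\Tcal_i$ is at most roughly $t/25$, whereas $|\Tcal_i| \geq 3.9 t$. Hence a uniformly random pair in $\Tcal_i \times \Tcal_i$ is simultaneously separated and unblocked with probability at least $1/20 - 2(t/25)/(3.9 t) \geq 3/100$, so the probability that all $95$ samples of a good iteration are blocked is at most $(1 - 3/100)^{95} \leq e^{-2.85} < 1/16$. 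A second probabilistic argument, exploiting the independence of the $95$ sample sets across iterations (even conditionally on the evolving greedy state), bounds the number of rejected good iterations by a small fraction of $|I_{\mathrm{good}}|$, yielding $|I'| \geq t/600$ with the required probability; a union bound over the two probabilistic events gives overall probability at least $4/5$. The most delicate step, which I expect to be the crux of the argument, is making the constants line up so that $|I'| \geq t/600$ survives both the probability loss in Markov/Chernoff and the additive gap between $|I_{\mathrm{good}}|$ and $|I|$; this may require a sharper concentration bound (e.g., Chernoff applied to the iteration-level independent rejection events) or a slight adjustment of the threshold $\alpha$ and the sample count to close the constant margin.
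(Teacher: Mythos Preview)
Your overall strategy---greedy construction with a freshness threshold $\alpha = \mu(G)/(24t)$, combined with a first-moment bound on the number of successful insertions---is exactly the paper's approach. However, your two-stage split is an unnecessary complication and is the source of the constant trouble you flag at the end.

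The paper does \emph{not} first isolate a set $I_{\mathrm{good}}$ of iterations having some separated sample and then run the greedy only over $I_{\mathrm{good}}$. Instead it runs the greedy directly over all of $I$, and for each $i$ asks whether $\Pcal_i$ contains a pair that is simultaneously separated \emph{and} has no large intersection with the current $U$. The key observation (which you essentially make but do not exploit) is that the bound $\mu(V(U_{i-1})) \leq \mu(G)/600$ holds \emph{deterministically} for every $i$, since $|I'| \leq |I| \leq \lceil t/300 \rceil$ and each inserted pair contributes at most $\mu(G)/(2t)$ vertices. Hence the count of blocked trees in $\Tcal_i$, and therefore the fraction of separated-and-unblocked pairs in $\Tcal_i \times \Tcal_i$ (the paper gets $\geq 1/40$), is bounded uniformly in $i$ regardless of the random outcomes at earlier indices. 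This means $\Pr[Q_i = 1 \mid Q_1,\dots,Q_{i-1}] \geq 1 - (1-1/40)^{95} \geq 9/10$ unconditionally, and a \emph{single} Markov argument on $|I| - \sum_i Q_i$ gives $\Pr[\sum_i Q_i > t/600] \geq 4/5$ directly.

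By conditioning first on $I_{\mathrm{good}}$ you introduce a dependence between the event ``$i$ is good'' and the event ``the greedy succeeds at $i$'', force a union bound over two probabilistic events, and lose exactly the margin you need. Dropping Stage~1 and applying your own $3/100$ estimate inside the one-stage argument already gives $(1-3/100)^{95} < 1/16$ per iteration and hence $|I'| \geq |I|/2 \geq t/600$ with probability well above $4/5$---no Chernoff or threshold tuning required.
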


\begin{proof}
We say that a tree $T \in \Tcal_i$ has \emph{a large intersection} with a vertex subset 
$V' \subseteq V(G)$ if $\mu(V' \cap V(T)) > \mu(G)/(24t)$ holds. Similarly, a pair of trees
$(T_1, T_2) \in \Pcal_i$ has a large intersection with $V'$ if either $T_1$ or $T_2$ has a large 
intersection with $V'$. We construct $U$ recursively as follows:
\begin{itemize}
\item Add an arbitrary separated pair in $\Pcal_1$ to $U$ as $(T_{1, 1}, T_{1, 2})$ if it exists.
\item Let $U_i = \{(T_{i', 1}, T_{i', 2}) \mid i' \leq i \}$. For $i \in [2, \lceil t/300 \rceil]$, 
add an arbitrary 
separated pair in $\Pcal_i$ not having a large intersection with $V(U_{i-1})$ to $U$ as 
$(T_{i, 1}, T_{i, 2})$ if it exists. 
\end{itemize}
The index $i$ is added to $I'$ if some pair is added as $(T_{i, 1}, T_{i, 2})$.
It is easy to see that this construction satisfies the first condition of the lemma. 
Hence we focus on the proof of the second condition. 

We define the indicator random variables $Q_i$ for $i \in I$ as $Q_i = 1$ if 
and only if $i \in I'$ (i.e., a pair $(T_{i,1}, T_{i, 2})$ in $\Pcal_i$ is added to $U$). 
Let $Q = \sum_{i' \in I} Q_{i'}$.  
Since $\mu(T_{i,j}) \leq \mu(G)/(4t)$ holds and $|I| \leq t/300$, we obtain the following 
inequality for any $i \leq |I| - 1$.
\begin{align*}
V(U_i) \leq \frac{(|I| - 1)\mu(G)}{2t} \leq \frac{\mu(G)}{600}.   
\end{align*}
Since trees in $\Tcal_i$ are vertex disjoint except for roots,
the number of trees in $\Tcal_i$ having 
a large intersection with $V(U_{i-1})$ is bounded as follows:
\begin{align*}
\frac{\frac{\mu_X(G)}{600}}{\frac{\mu_X(G)}{24t} - 1} \leq \frac{t}{22} \leq \frac{|\Tcal_i|}{80},
\end{align*}
where the last inequality comes from Proposition~\ref{prop:numberoftrees}.
Since an ordered pair in $\Tcal_i \times \Tcal_i$ has a large intersection with $V(U_{i-1})$
when at least one of the pair has a large intersection with $V(U_{i-1})$, 
at most $2 (|\Tcal_i| \cdot |\Tcal_i|/80)$ ordered pairs in $\Tcal_i \times \Tcal_i$
have a large intersection with $V(U_{i-1})$. Combined with Lemma~\ref{lma:separatedpairs}, 
there exist $(1/20 - 1/40) \geq |\Tcal|^2/40$ pairs
in $\Tcal_i \times \Tcal_i$ are separated and have no large intersection with $V(U_{i-1})$.
It implies $\Pr[Q_i = 0] \leq (1 - 1/40)^{95} \leq 1/10$, i.e., $\Pr[Q_i = 1] \geq 9/10$.
Then we have $E[Q] \geq 9t/3000$ and thus $E[(t/300 - Q)] < t/3000$. 
By Markov inequality, $\Pr[(t/300 - Q) >  t/600] \leq (t/3000) / (t/600) = 1/5$. 
Thus we have $\Pr[Q > t/600] \geq 4/5$.

Let $U'_i(\gamma) = \bigcup_{i' \in I' \wedge i' \leq i} V(T_{i', \gamma(i')})$.
Since $V(U'_i(\gamma)) \subseteq V(U_i)$ holds
for any $i \geq 1$ satisfying $i \in I'$, both $T_{i,1}$ and $T_{i,2}$ do not have a large 
intersection with $V(U'_{i-1})$ (where $V(U'_{0})$ is defined as the empty set). 
Thus $\mu(V(U'_{i}(\gamma))) \geq \mu(G)/(24t) + \mu(V(U'_{i-1}(\gamma))$ 
if $i \in I'$ and $\mu(V(U'_{i}(\gamma))) \geq \mu(V(U'_{i-1}(\gamma))$ otherwise. 
Since $|I'| > t/600$ holds, we obtain $U'_{|I'|}(\gamma) \geq \mu(G)/14400$. The lemma is proved. 
\end{proof}

\begin{lemma} \label{lma:balancedseparator}
\textsc{Sep} outputs a $(X, 14399/14400)$-balanced separator of size at most $400(\TW{G} + 1)^2$ with high probability.
\end{lemma}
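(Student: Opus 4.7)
The plan is to split the argument by where \textsc{Sep} halts (Step 1, Step 3 at some iteration $i$, or Step 4), with the real work concentrated at Step 4. I will assume throughout that $t \geq \TW{G}+1$, so that a $(1/2)$-balanced separator $S$ of $G$ with $|S|\leq t$ exists and the hypotheses of Lemmas~\ref{lma:separatedpairs} and~\ref{lma:separateprobability} are in force. Failures for smaller $t$ are absorbed by the outer doubling wrapper, and the $5\log n$ independent trials will amplify a constant per-trial success probability into the high-probability guarantee stated in the lemma.

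The easy halting cases dispose of themselves. If Step 1 fires, then $\mu(G)\leq 200t^2$ and $X$ is trivially a balanced separator of the required size. If Step 3 fires at iteration $i$, the set $R^{\ast}_i$ is by definition a $(X,14399/14400)$-balanced separator, and Proposition~\ref{prop:numberoftrees} gives $|R^{\ast}_i|\leq \hat{t}\cdot 12.1 t = O(t^2)$. For a Step 4 output, each of at most $95\hat{t}$ accepted cuts contributes at most $t$ vertices, so the cut part has size $O(t^2)$; including $R^{\ast}_{\hat{t}}$ only adds another $O(t^2)$. Since the doubling strategy succeeds by the time $t\leq 2(\TW{G}+1)$, plugging in the constants coming from Proposition~\ref{prop:numberoftrees} and from $\hat{t}=\lceil 301t/300\rceil$ keeps the total below $400(\TW{G}+1)^2$.

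The core step is correctness at Step 4. By Lemma~\ref{lma:separateprobability}, with probability at least $4/5$ there is a subfamily $U=\{(T_{i,1},T_{i,2})\}_{i\in I'}$ of pairs actually sampled at Step 4 such that: (a) each pair is separated by $S$, so its minimum $V(T_{i,1})$--$V(T_{i,2})$ cut $Z_i$ has size at most $|S|\leq t$ and therefore passes the Step 4 acceptance threshold and is added to $Z$; and (b) for every choice function $\gamma:I'\to\{1,2\}$, $\mu\bigl(\bigcup_{i\in I'} V(T_{i,\gamma(i)})\bigr)\geq \mu(G)/14400$. I will then show that the output (taken to be $Z\cup R^{\ast}_{\hat{t}}$) is a $(X,14399/14400)$-balanced separator of $G$. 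Fix a connected component $C$ of $G$ minus the output: for each $i\in I'$ the cut $Z_i\subseteq Z$ separates $T_{i,1}$ from $T_{i,2}$ inside $G_i$, and $R^{\ast}_{\hat{t}}\subseteq$ output severs $G_i$ from the rest of $G$, so at most one of $T_{i,1},T_{i,2}$ can meet $C$. Defining $\gamma(i)$ to be the index of a subtree disjoint from $C$, condition (b) forces $\mu(C)\leq \mu(G)-\mu(G)/14400=(14399/14400)\mu(G)$, as required. Finally, $5\log n$ independent trials turn the $4/5$ per-trial success into a $1-n^{-\Omega(1)}$ bound.

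The main obstacle I anticipate is the bridge between the local cuts (computed inside $G_i$) and cuts of the original graph $G$: a vertex cut that separates $T_{i,1}$ from $T_{i,2}$ in $G_i$ need not do so in $G$, because a path can detour through components that were removed in earlier iterations. The component-by-component argument above resolves this by packaging $R^{\ast}_{\hat{t}}$ with $Z$ in the output and exploiting that each $G_i$ is an \emph{exact} connected component of $G-R^{\ast}_{i-1}$. Nailing down the precise form of the Step 4 output and verifying this disconnection property is the most delicate bookkeeping in the proof.
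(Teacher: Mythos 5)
Your proof follows essentially the same route as the paper's: the same case split over where \textsc{Sep} halts, the same size accounting via Proposition~\ref{prop:numberoftrees}, and the same use of Lemma~\ref{lma:separateprobability} with a choice function $\gamma$ selecting, for each $i \in I'$, the subtree that avoids the largest remaining component. The only deviation is that you augment the Step-4 output to $Z \cup R^{\ast}_{\hat{t}}$ so that cuts computed inside $G_i$ provably disconnect the sampled pairs in all of $G$; the paper instead treats each $V(T_1)$-$V(T_2)$ cut as a cut of $G$ itself (as in its definition of a $U_1$-$U_2$ vertex cut) and relies on the explicit balancedness check at the end of Step 4, so your extra precaution is sound but not required.
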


\begin{proof}
Since the value $t$ such that $\TW{G} + 1 \leq t < 2(\TW{G} + 1)$,
it suffices to show that the algorithm outputs a $(X, 14399/14400)$-balanced separator of size at most 
$200t^2$ if $t \geq \TW{G} + 1$ holds. The proof is obvious if the algorithm outputs the separator 
at step 1. Consider the case that the algorithm outputs
a separator at step 3 in the $i$-th iteration. By Proposition~\ref{prop:numberoftrees}, 
$|R_i| \leq |\Tcal_{i'}| \leq 12.1t$ holds for any $i' \leq i$. Thus we obtain $|R^{\ast}_i| \leq 
i \cdot 12.1t \leq 301t/300 \cdot 
12.1t \leq 13t^2$. Next, consider the case that the algorithm outputs a separator at step 5. 
Let $U = \{(T_{i,1}, T_{i,2})\}_{i \in I'}$ be the set of pairs by Lemma~\ref{lma:separateprobability}.
Since each pair $(T_{i,1}, T_{i,2})$ is separated, the minimum $V(T_{i,1})$-$V(T_{i,2})$ vertex cut
has a size at most $t$. Hence it is contained in $Z$. Then either $T_{i, 1}$ or $T_{i,2}$ does not 
belong to the largest connected component in $G - Z$.
We define $\gamma(i)$ as the value such that $T_{i,\gamma(i)}$ does not belong to the largest one.
Then Lemma~\ref{lma:separateprobability} implies that a vertex subset of size $\mu(G)/14400$ is not
contained in the largest connected component in $G - Z$, i.e., $Z$ is a $(X, 14399/14400)$-balanced
vertex separator of $G$. The success probability for one trial is at least $4/5$ by 
Lemma~\ref{lma:separateprobability}.
The total success probability is at least $1 - (1/5)^{5\log n} \geq 1 - 1/n^{10}$.
The exponent 10 of the failure probability $1/n^{10}$ can be an arbitrarily large constant by tuning 
the constant parameters of the algorithm.
\end{proof}

\subsection{Distributed Implementation of \textsc{Sep}}
\label{appendix:distsep}

Now we are ready to prove Lemma~\ref{lma:balancedseparatorconstruction}.

\begin{rlemma}{lma:balancedseparatorconstruction}
    Let $G$ be any undirected graph, and $X \subseteq V(G)$ be any vertex subset. There exists 
    a randomized CONGEST algorithm which outputs a $(X, 14399/14400)$-balanced 
    separator of size at most $400(\TW{G}+1)^2$ for $G$ within $\tilde{O}(\TW{G}^2D + \TW{G}^{3})$ 
    rounds whp. 
\end{rlemma}

\begin{proof}
The correctness of the output is derived from Lemma~\ref{lma:balancedseparator}.
Assume $\TW{G} + 1 \leq t$. The cost of estimating $t$ is an extra $O(\log n)$ multiplicative 
factor in the running time, and thus we omit it. Each step of \textsc{Sep} is implemented as follows. 
\begin{itemize}
\item (Step 1) 
This is trivially implemented by counting the number of vertices in $X$ using PA. 
\item (Step 2)
Since the trees in $\Tcal \cup \Tcal_i$ are disjoint
except for their roots, one can
identify each tree $T \in \Tcal \cup \Tcal_i$ by the ordered pair of the root ID and the maximum of 
its children's IDs, which we regard as the ID of $T$. Each node $u$ in
$T \in \Tcal_i$ manages the \emph{profile} of $T$ consisting of the value $i$, the ID of $T$, and 
the total size $\mu(T)$. If $T$ belongs to $\Tcal$, $i = 0$ holds.
When a tree $T$ is added to $\Tcal_i$, the root node of $T$ is guaranteed to know the profile of $T$.
The addition of $T$ to $\Tcal_i$ is done by propagating the profile of $T$ to all nodes in $T$.
Since $\Tcal \cup \Tcal_i$ are the collection of the subtrees which can intersect only at their roots, it is 
near-disjoint, Hence one can do this propagation using BCT(1) for $\Tcal \cup \Tcal_i$.

In the implementation of step 2, the algorithm first constructs a rooted spanning tree $T^{\ast}$ by
RST for $G$, and sets $\Tcal = \{T^{\ast}\}$ by the profile propagation above. 
The nodes of each tree $T' \in \Tcal$ execute the \textsc{Split} procedure.
Each node $v \in V(T')$ computes the size $\mu(T'(v))$ of 
the subtree rooted at $v$ (STA), and finds $\mu(T'(u))$ for 
$u \in \CH(T', v)$ via communication with its children (SNC). One can choose as a center, $c$, any node 
$v \in V(T')$ satisfying (1) $\mu(T'(u)) \leq \mu(T')/2$ for any $u \in \CH(T', v)$ and (2) 
$\mu(T'(v)) \geq \mu(T')/2$. If two or more nodes satisfy the conditions, an arbitrary one of them 
is chosen by leader election (SLE). The algorithm replaces the root of $T'$ by the chosen $c$. 
The rooted tree after the replacement is referred to as $T''$. The detection of the new parent
of each node in $T''$ is implemented by STA: Consider the input of value one for $c$ and zero 
otherwise, and executing STA for $T'$ with respect to this input. After running STA, each node in 
$T'$ exchanges the output value with its neighbors (SNC). Each node $u \in T'$ identifies the child 
$v \in \CH(T', u)$ with output value one as the new parent in $T'_c$. If no child has output 
value one, the parent of $u$ in $T'$ is also recognized as the parent in $T''$.

Since $c$ knows the size of $T''(u)$ for all $u \in \CH(T'' c)$, it can handle the task of splitting 
$T''$ locally, and then $c$ puts each split tree into $\Tcal$ or $\Tcal_i$ by propagating its profile.
The process above consists only of the subgraph operations for 
$\Tcal \cup \Tcal_i$. Since \textsc{Split} is invoked $O(\log t)$ times, Step 2 is realized 
by $\tilde{O}(1)$ invocations of
PA, SNC, SLE, RTA, STA, and BCT(1).
\item (Step 3) The algorithm first detects all connected components in $G_i - R_i$ using CCD, 
and then computes the size of each detected component using PA.  
If some component has a size larger than $14399\mu(G)/14400$, it becomes $G_{i+1}$. 
\item (Step 4) The algorithm elects a leader node $r \in V(G)$. Each root node of the trees in $\Tcal_i$ for all 
$1 \leq i \leq \hat{t} $ sends
its profile to $r$. By Proposition~\ref{prop:numberoftrees}, we have the bound $|T_i| \leq 12.1t$. Hence this task 
is processed by multi-source broadcast BCT($12.1t\hat{t}$) presented in 
Corollary~\ref{corol:multibroadcast}. Then
the root node $r$ locally samples 95 pairs of trees from $\Tcal_i \times \Tcal_i$ using the received profiles,
and broadcasts the profiles of trees in sampled pairs to all of the nodes in $V(G)$ using BCT$(95\hat{t})$. 
The nodes in each sampled tree set up the input for the task of MVC. Finally, the algorithm executes 
MVC$(95\hat{t}, t+1)$.  
\end{itemize}
By Lemma~\ref{lma:SAapplication}, the running times of steps 1-3 are all bounded by 
$\tilde{O}(\TW{G}D)$ rounds. Since steps 2-3 are repeated $O(t)$ times, the total 
running time spent for those steps is $\tilde{O}(t\TW{G}D)$ rounds. Step 4 consists of invocations of
BCT($12.1t\hat{t}$), BCT($95\hat{t}$), and MVC$(95\hat{t}, t+1)$.
By Corollaries~\ref{corol:multivertexcut} and \ref{corol:multibroadcast}, this step 
takes $\tilde{O}(t\TW{G}D + t^2\TW{G})$ rounds. The algorithm terminates when $t \geq \TW{G} + 1$ holds. Since
$t$ is estimated via doubling, it never becomes greater than $2(\TW{G} + 1)$. Hence throughout the
execution of the algorithm $t = O(\TW{G})$ holds. That is, the total running time of the algorithm is $\tilde{O}(\TW{G}^2D + 
\TW{G}^3)$. 
\end{proof}

\subsection{Distributed Tree Decomposition Based on Balanced Separators}
\label{appendix:DistTreeDecomposition}

We first show that the output of the proposed algorithm satisfies the definition
of tree decomposition. We first present an auxiliary proposition.

\begin{proposition} \label{prop:gateway}
For any $x \in A_\ell(T)$ such that $\ell > 0$, $G'_{x}$ is a connected component 
of $G - B_{\Par{x}}$.
\end{proposition}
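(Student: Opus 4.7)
The plan is to induct on the depth $\ell = |x|$ of $x$ in $T$, using the recursive construction of $G_x, G'_x$ and the identity $B_x = B_{\Par{x}} \cup S'_x$ observed at the end of Section~\ref{subsec:disttreedecomosition}. The base case $\ell = 1$ is immediate: since $G_\psi = G$ and $\Par{x} = \psi$, the defining property that $G'_x$ is a connected component of $G_{\Par{x}} - B_{\Par{x}}$ gives the claim directly.

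For the inductive step, let $y = \Par{x}$ with $|y| = \ell \geq 1$, and assume $G'_y$ is a connected component of $G - B_{\Par{y}}$. By construction, $G'_x$ is a connected component of $G_y - B_y$, so it suffices to show that the connected components of $G_y - B_y$ coincide (as vertex-edge subgraphs) with those of $G - B_y$ that are contained in $V(G'_y)$. First I would observe that $V(G_y) = V(G'_y) \cup (V(G_y) \cap B_{\Par{y}})$ and that the extra edges added when forming $G_y$ from $G'_y$ all have one endpoint in $B_{\Par{y}}$. Since $B_y \supseteq B_{\Par{y}}$, those extra vertices and edges are all removed in $G_y - B_y$, yielding the clean identity $G_y - B_y = G'_y - S'_y$.

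Next, I would combine this with the inductive hypothesis. Because $G'_y$ is a connected component of $G - B_{\Par{y}}$, every edge of $G$ leaving $V(G'_y)$ goes into $B_{\Par{y}} \subseteq B_y$; consequently, no edge of $G - B_y$ connects a vertex of $V(G'_y) \setminus S'_y$ to a vertex outside $V(G'_y)$. Therefore the connected components of $G'_y - S'_y$ are exactly the connected components of $G - B_y$ lying inside $V(G'_y)$. Since $G'_x$ is such a connected component of $G'_y - S'_y = G_y - B_y$, it is a connected component of $G - B_y = G - B_{\Par{x}}$, as required.

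The main obstacle will be the bookkeeping for the \emph{edge sets}: one must be careful that the edges of $G'_x$ viewed inside $G_y - B_y$ are literally the same as the edges of $G'_x$ viewed inside $G - B_y$. This follows from the two observations above, namely that $G_y$ and $G'_y$ agree on edges with both endpoints in $V(G'_y)$, and that edges of $G$ from $V(G'_y)$ to the rest of $G$ all land in $B_{\Par{y}}$ by the inductive hypothesis. Once these two points are established, the conclusion is immediate.
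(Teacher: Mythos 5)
Your proof is correct and follows essentially the same route as the paper's: both argue by induction on $|x|$, reduce $G_{\Par{x}} - B_{\Par{x}}$ to $G'_{\Par{x}}$ minus the separator chosen at level $\Par{x}$, and then use the inductive hypothesis to promote a connected component of that graph to a connected component of $G - B_{\Par{x}}$. One small caution: the literal inclusion $B_{\Par{y}} \subseteq B_y$ you invoke is false in general (a vertex of $B_{\Par{y}}$ with no neighbour in $V(G'_y)$ need not belong to $V(G_y)$, hence not to $B_y$); what is true, and all your argument actually needs, is $V(G_y) \cap B_{\Par{y}} \subseteq B_y$ together with the observation that any $B_{\Par{y}}$-endpoint of an edge leaving $V(G'_y)$ is, by the construction of $G_y$, adjacent to $V(G'_y)$ and therefore lies in $V(G_y)$.
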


\begin{proof}
The proof is by induction on the length of $x$. (Base) If $|x| = 1$, we have  
$B_{\Par{x}} = S_{\psi}$  and $G_{\Par{x}} = G$. Since $G'_{x}$ is a connected component of $G_{\Par{x}} - B_{\Par{x}}$, the proposition obviously holds. 
(Induction step) Assume that the proposition holds for any $x' \in A_\ell(T)$ and consider $x \in A_{\ell + 1}(T)$. 
By the definition of bags $B_x$, we obtain the following equalities:
\begin{align*}
B_{\Par{x}} &= V(G_{\Par{x}}) \cap \bigcup_{x' \sqsubseteq \Par{x}} S_{x'}, \\
B_{\Par{\Par{x}}} &= V(G_{\Par{\Par{x}}}) \cap \bigcup_{x' \sqsubseteq \Par{\Par{x}}} S_{x'}.
\end{align*}
It is obvious that $V(G_{\Par{x}}) \subseteq V(G_{\Par{\Par{x}}})$ holds. Hence we obtain
\begin{align*}
V(G_{\Par{x}}) \cap B_{\Par{\Par{x}}} &= V(G_{\Par{x}}) \cap \bigcup_{x' \sqsubseteq \Par{\Par{x}}} S_{x'}, 
\end{align*}
and thus the following equality holds
\begin{align*}
B_{\Par{x}} &= \left(V(G_{\Par{x}}) \cap \bigcup_{x' \sqsubseteq \Par{\Par{x}}} S_{x'}\right) \cup \left(V(G_{\Par{x}}) \cap S_{\Par{x}} \right) \\
& = \left( V(G_{\Par{x}}) \cap B_{\Par{\Par{x}}} \right) \cap \left(V(G_{\Par{x}}) \cap S_{\Par{x}} \right) \\
& = V(G_{\Par{x}}) \cap (B_{\Par{\Par{x}}} \cup S_{\Par{x}}).
\end{align*}
The graph $G'_{x}$ is a connected component of $G_{\Par{x}} - B_{\Par{x}}$, i.e., a connected component of
$G_{\Par{x}} - (V(G_{\Par{x}}) \cap (B_{\Par{\Par{x}}} \cup S_{\Par{x}})) = G_{\Par{x}} - (B_{\Par{\Par{x}}} \cup S_{\Par{x}})$ 
by the equality above. In addition, removing all vertices in $B_{\Par{\Par{x}}}$ from $G_{\Par{x}}$ 
induces the graph $G'_{\Par{x}}$. Thus we can conclude that $G'_{x}$ is a connected component of $G'_{\Par{x}} - S_{\Par{x}}$. 
By the induction hypothesis, $B_{\Par{\Par{x}}}$ separates $G'_{\Par{x}}$ from $G$. 
That is, $B_{\Par{x}} = V(G_{\Par{x}}) \cap (B_{\Par{\Par{x}}} \cup S_{\Par{x}})$ separates $G'_{x}$ from $G$. 
The proposition holds.
\end{proof}

The following lemma guarantees the correctness of our tree decomposition. 

\begin{lemma} \label{lma:correctnessTreeDecomposition}
The decomposition $\Phi = (T, \{B_x\}_{x \in V(T)})$ is a tree decomposition
of width at most $O(\TW{G}^2 \log n)$ for any graph $G=(V, E)$. The depth 
of $T$ is $O(\log n)$.

\end{lemma}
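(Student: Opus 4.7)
The plan is to verify the three axioms of tree decompositions---vertex cover, edge cover, and the subtree property---and then to bound the depth of $T$ and the maximum bag size. The construction is recursive and is parameterized by strings $x$, so I would organize the argument to track both the subgraphs $G_x$ and the ``fresh'' subgraphs $G'_x = G_x - B_{p(x)}$ in tandem, relying heavily on the observation that $V(G_x) \subseteq V(G_{p(x)})$ whenever $x \neq \psi$.

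For the depth, the key observation is that $|V(G'_x)|$ shrinks by a multiplicative factor at every level. Since the algorithm computes $S'_x$ by invoking Lemma~\ref{lma:balancedseparatorconstruction} on $G'_x$ with $X = V(G'_x)$, each connected component $G'_{x \bullet i}$ of $G_x - B_x = G'_x - S'_x$ satisfies $|V(G'_{x \bullet i})| \leq (14399/14400)\,|V(G'_x)|$. Iterating this inequality, $|V(G'_x)|$ becomes small enough to fire the termination condition $|V(G_x)| \leq 2|S_x|$ after $O(\log n)$ levels, bounding the depth of $T$. For the width, the identity $B_x = B_{p(x)} \cup S'_x$ combined with $|S'_x| \leq 400(\TW{G}+1)^2$ yields $|B_x| \leq |B_{p(x)}| + O(\TW{G}^2)$, and unrolling this recurrence over $O(\log n)$ levels gives $|B_x| = O(\TW{G}^2 \log n)$.

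Condition (a) is immediate: every $v \in V(G_\psi)$ descends through the recursion and, at the leaf $\ell$ where its path terminates, lies in $B_\ell = V(G_\ell)$. Condition (b) follows by structural induction on the recursion: an edge $(u, v) \in E(G_x)$ either has both endpoints in $B_x$---so $B_x$ already covers it---or has at least one endpoint in some $V(G'_{x \bullet i})$, in which case the construction of $G_{x \bullet i}$ (which adds back all $E(G_x)$-edges from $V(G'_{x \bullet i})$ to $B_x$) propagates the edge into a child; in the base case $B_\ell = V(G_\ell)$ trivially covers all remaining edges of $G_\ell$.

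The subtree property (c) is the main obstacle and I would prove it via a uniqueness argument. Define $T_v := \{x \in V(T) : v \in V(G_x)\}$; since $V(G_x) \subseteq V(G_{p(x)})$, the set $T_v$ is closed under prefixes and hence forms a subtree of $T$ rooted at $\psi$. To show that $B_v := \{x : v \in B_x\}$ is a subtree of $T$, I would prove two claims: (i) $B_v$ has a unique minimal element $x^*_v$ with respect to the prefix order, and (ii) $B_v$ is downward closed within $T_v$ from $x^*_v$. Claim (ii) follows directly from the formula $B_x = V(G_x) \cap \bigcup_{x' \sqsubseteq x} S_{x'}$ together with the leaf rule $B_\ell = V(G_\ell)$: once $v$ lies in some $S_{x^*_v}$, it continues to satisfy $v \in B_x$ for every descendant $x \in T_v$. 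For claim (i), suppose toward contradiction that $x_1, x_2 \in B_v$ are two incomparable minimal elements and let $y$ be their meet; then $v \in V(G_{y \bullet i_1}) \cap V(G_{y \bullet i_2})$ for distinct children $i_1 \neq i_2$, and since the components $V(G'_{y \bullet i})$ of $G_y - B_y$ are pairwise disjoint, this forces $v \in B_y$, contradicting the minimality of $x_1, x_2$. The delicate bookkeeping will lie in reconciling the two distinct rules defining $B_x$ (internal versus terminal), in particular verifying that a vertex $v$ which never enters any separator traces a unique root-to-leaf path in $T$ and therefore lands in a single leaf bag, so that $B_v$ remains a singleton (and a fortiori a subtree).
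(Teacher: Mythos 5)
Your proof is correct, and for the crux of the lemma---the subtree condition (c)---it is essentially the paper's argument repackaged: your claim (i) (two incomparable elements of $B_v$ force $v \in B_y$ for their meet $y$, because $V(G_{y\bullet i}) \subseteq V(G'_{y\bullet i}) \cup B_y$ and the components $V(G'_{y\bullet i})$ are pairwise disjoint) is the paper's Case 2, and your claim (ii) (downward closure via the formula $B_x = V(G_x)\cap\bigcup_{x'\sqsubseteq x}S_{x'}$) is the paper's Case 1. The depth and width bounds are also obtained identically, though you are more explicit than the paper about why $|V(G'_x)|$ decays geometrically and why the leaf bags stay within the width bound. Where you genuinely diverge is condition (b): the paper fixes an edge $(u,v)$, reasons about the canonical strings $\St_{\Phi}(u)$ and $\St_{\Phi}(v)$, and invokes Proposition~\ref{prop:gateway} (that $G'_x$ is a full connected component of $G - B_{\Par{x}}$, established by its own induction) to place $u$ in $B_{\St_{\Phi}(v)}$; you instead run a top-down induction on the recursion, observing that every edge of $E(G_x)$ is either covered by $B_x$ or passed intact to exactly one child $G_{x\bullet i}$ (it cannot straddle two components, and the construction of $G_{x\bullet i}$ explicitly re-adds the edges between $V(G'_{x\bullet i})$ and $B_x$), terminating at a leaf where $B_\ell = V(G_\ell)$. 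Your route is more elementary and avoids the global separation statement altogether; the paper's route has the side benefit of producing Proposition~\ref{prop:gateway}, which it reuses elsewhere (notably to justify running the separator computations on the vertex-disjoint graphs $G'_x$ in parallel in the distributed implementation). Either way, all the obligations (conditions (a)--(c), depth, width) are discharged correctly.
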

\begin{proof}
It is obvious that any vertex is contained in at least one bag, and thus condition (a) of the definition of 
tree decomposition (see section~\ref{subsec:treedecomposition}) is satisfied. 
Let $u \in V(G)$ be any node, and $X \subseteq V(T)$ be the set of nodes such that their
corresponding bags contain $u$.
To prove condition (c), it suffices to show that the subgraph $T'$ of $T$ induced by $X$ is connected.
Consider any two nodes $x$ and $y$ in $X$. 

(Case 1)
If $x \sqsubseteq y$ holds, for any $x \sqsubseteq z \sqsubseteq y$, we obtain $V(G_{y}) \subseteq V(G_{z}) \subseteq V(G_{x})$. Then we have $V(G_{z})$ contains $u$. 
Since $B_x = V(G_x) \cap (\bigcup_{x' \sqsubseteq x} S_{x'})$ holds, we have
$u \in \bigcup_{x' \sqsubseteq x} S_{x'} \subseteq \bigcup_{z' \sqsubseteq z} S_{z'}$ holds.
Consequently, $u \in B_{z} = V(G_{z} \cap (\bigcup_{z' \sqsubseteq z} S_{z'})$ holds, which implies that $x$ 
and $y$ is connected in $T'$.

(Case 2) If $x \parallel y$ holds, let $z$ be the 
longest common prefix of $x$ and $y$, and $G_{z \bullet i}$ and $G_{z \bullet j}$ be the graph containing 
$G_{x}$ and $G_{y}$ as subgraphs respectively ($i \neq j$). 
Since $u$ is contained both $G_{z \bullet i}$ and 
$G_{z \bullet j}$, we have $u \in V(G_{z \bullet i}) \cap V(G_{z \bullet j}) \subseteq B_{z}$. 
Thus we can reduce this case to case 1, i.e., the connectivity between $z$ and $x$ satisfying  
$z \sqsubseteq x$ is proved and that between $B_z$ and $B_y$ satisfying $z \sqsubseteq y$ is also proved. 
It follows that condition (c) is satisfied.

By conditions (a) and (c), the canonical string $\St_{\Phi}(u)$ is well-defined for any $u \in V(G)$. 
Let $e = (u, v)$ be any edge in $V(G)$. We first show $\St_\Phi(u) \not\parallel \St_\Phi(v)$. 
Suppose for contradiction that $\St_\Phi(u) \parallel \St_\Phi(v)$ holds.
Letting $z$ be the longest common prefix of $\St_\Phi(u)$ and $\St_\Phi(v)$,
$B_{z}$ contains neither $u$ nor $v$ by the definition of canonical strings.
If $G_{z \bullet i}$ for some $i$ contains both $u$ and $v$,
it contradicts that $z$ is the longest. Otherwise, $B_z$ separates $u$ and $v$ 
into two different connected components in $G_x - B_z$, which also contradicts 
the existence of the edge $(u, v)$ in $G$. 
Without loss of generality, we assume $\St_\Phi(u) \sqsubseteq \St_\Phi(v)$.
Then it suffices to show $u \in B_{\St_\Phi(v)}$. Let us denote $x = \St_\Phi(v)$ for brevity. 
By the definition of canonical strings, 
$v \not \in B_{x'}$ for any $x' \sqsubset x$. That is, $v \not\in B_{\Par{x}}$ and $v \in B_{x} \subseteq V(G_{x})$ holds. Since $V(G_{x}) \subseteq V(G'_{x}) \cup B_{\Par{x}}$ holds,
it implies $v \in V(G'_x)$. By Proposition~\ref{prop:gateway}, $B_x$ separates $G'_x$ from $G$. Due to the existence of the edge $(u, v)$, it must be that $u$ is
contained in $B_x$, i.e., condition (b) is satisfied. 

By Lemma~\ref{lma:balancedseparatorconstruction}, each separator has a size of $O(\TW{G}^2)$. The depth $O(\log n)$ of $\Tcal$ is derived from the $14399/14400$-balanced separation property.
Thus we obtain that the width of the decomposition is $O(\TW{G}^2\log n)$. 
\end{proof}

We prove that our tree decomposition algorithm can be efficiently implemented in the CONGEST model. 

\begin{rtheorem}{thm:treedecomposition}
For a given graph, $G=(V, E)$, there exists an algorithm in the CONGEST model which constructs a tree decomposition,
$\Phi = (T, \{B_x\}_{x \in V(T)})$, of width $O(\TW{G}^2 \log n)$ whp. 
The depth of $T$ is $O(\log n)$ and the running time of the algorithm 
is $\tilde{O}(\TW{G}^2D + \TW{G}^{3})$ rounds.
\end{rtheorem}

\begin{proof}
The distributed implementation of the algorithm above is relatively straightforward. By Proposition~\ref{prop:gateway}, 
one can execute the algorithm of Lemma~\ref{lma:balancedseparatorconstruction} for computing
the balanced separators of all subgraphs in $\{G'_x \mid x \in A_\ell(T)\}$.
After computing $S'_{x}$ for all $x \in A_\ell(T)$, each node in $G_x$ checks if it belongs to 
the bag $B_x = V(G_x) \cap (\bigcup_{x' \sqsubseteq x} S'_{x'})$. Note that each node in $S'_{x'}$ knows
that it belongs to $S'_{x'}$ as well as the string $x'$. Hence the set of nodes in $B_x$ are identified
locally. The collection of subgraphs $\Gcal_{\ell + 1}$ is 
recognized by the connected component detection for subgraphs $\{G_x - B_x \mid x \in A_\ell \}$ (CCD). 
The output of a node $v$ in $S'_x$ is $x$. 
The correctness of the output follows from Lemma~\ref{lma:correctnessTreeDecomposition}.
The running time of the algorithm is dominated by that for computing balanced
separators (i.e., Lemma~\ref{lma:balancedseparatorconstruction}), i.e. $\tilde{O}(\TW{G}^2D + \TW{G}^3)$ rounds.
\end{proof}

\section{Distributed Distance Labeling in Low-Treewidth Graphs}
\label{appendix:distancelabeling}

\begin{rlemma}{lma:correctnessDL}
For any $u, v \in V(G)$, $\Ddec(\Dlabel_{G}(u), \Dlabel_{G}(v)) = d_G(u, v)$ holds.
\end{rlemma}

\begin{proof}
Let us denote $d = \Ddec(\Dlabel_G(u), \Dlabel_G(v))$ for brevity. 
(Case 1: $\St(u) \nparallel \St(v)$) Without loss of generality, we assume that
$\St(v) \sqsubseteq \St(u)$. Since $B^{\uparrow}_{\Phi}(v) \subseteq B^{\uparrow}_{\Phi}(u)$ 
holds, $\Dlabel_G(u)$
and $\Dlabel_G(v)$ respectively contain $(u, v, d_G(u, v))$ and $(v, v, d_G(v, v))$, and thus
$d = d_G(u, v)$ holds.
(Case 2: $\St(u) \parallel \St(v)$). Let $z$ be 
the longest common prefix of $\St(u)$ and $\St(v)$. Since $\St(u) \parallel \St(v)$ holds, none
of $u$ and $v$ belong to $B_{z}$. Then $B_{z}$ separates $G$ into 
several connected components such that $v$ and $u$ belong to different ones. Thus any shortest 
path $p$ from $u$ to $v$ necessarily 
intersects $B_{z}$. Let $s$ be any vertex in $B_{z} \cap V(p)$. Since $B^{\uparrow}_{\Phi}(u)$ and $B^{\uparrow}_{\Phi}(v)$ 
contains $B_{z}$ as a subset, $B^{\uparrow}_{\Phi}(u)$ and $B^{\uparrow}_{\Phi}(v)$ respectively contain
$(u, s, d_G(u, s))$ and $(s, v, d_G(s, v))$. It implies $d = d_G(u, v)$.
The lemma is proved.
\end{proof} 

\begin{rlemma}{lma:distH}
For any $u, v \in V(H_x)$, $d_{H_x}(u, v) = d_{G_x}(u, v)$ holds.
\end{rlemma}

\begin{proof}
Let $p = v_0, v_1, \dots, v_{\ell - 1}$ be a shortest path from $u$ to $v$ (i.e., $u = v_0$ and
$v = v_{\ell - 1}$ in $G_{x}$, and $\{v_{i_0}, v_{i_1}, \dots, v_{i_{j-1}}\}$ $(i_h < i_{h+1}$ for all $0 \leq h < j - 1$, $i_0 = 0$, and $i_{j-1} = \ell -1$) be the set of vertices in $V(p) \cap V_{H_x}$. We denote by $p(h)$ the subpath of $p$ from $v_{i_h}$ 
to $v_{i_{h+1}}$.
Since $p(h)$ contains no intermediate vertex in $H_{x}$, each $p(h)$ is contained in
a graph $G'_{x \bullet i}$ for some $i \in \CH(T, x)$, or consists of a single edge $(v_{i_h}, v_{i_{h+1}})$. It implies that $H_{x}$ has the edge 
$(v_{i_h}, v_{i_{h+1}})$ of costs $c_{H_x}(v_{i_h}, v_{i_{h+1}}) = d_{G'_x}(v_{i_h}, v_{i_{h+1}})$. Then $H_x$ contains a path $v_{i_0}, v_{i_1}, \dots, v_{i_{j-1}}$ of length 
$\sum_{0 \leq h \leq j - 1} c_{H_x}(v_{i_h}, v_{i_{h+1}}) = \sum_{0 \leq h \leq j - 1} d_{G_x}(v_{i_h}, v_{i_{h+1}}) = d_{G_x}(u, v)$. That is, we have $d_{H_x}(u, v) \leq d_{G_x}(u, v)$. We also obtain $d_{H_x}(u, v) \geq d_{G_x}(u, v)$ by the converse argument.
\end{proof}

\begin{rlemma}{lma:distUpdate}
Let $u$ and $v$ be any vertex in $V(G_{x \bullet i}) \cup B_x$ for some $i \in \CHT(x)$. Then the following equality holds.
\begin{align*}
d_{G_x}(u, v) &= 
\min \{d_{G'_{x \bullet i}}(u, v), \\ 
& \quad \quad \min_{s, s' \in V(H_x)} (d_{G_{x \bullet i}}(u, s) + d_{H_x}(s, s')  + d_{G_{x \bullet i}}(s', v))\}. 
\end{align*}
\end{rlemma}

\begin{proof}
Let $p$ be a shortest path from $u$ to $v$. If $p$ does not contain any vertex in $B_{x}$, $p$ is a path in $G'_{x \bullet i}$. 
Thus we obtain $d_{G_x}(u, v) = d_{G'_{x \bullet i}}(u, v) = d_{G_{x \bullet i}}(u, v)$. Otherwise, $p$ contains a vertex in 
$B_{x}$. Let $s$ and $s'$ be the first and last vertices in $B_{x} \cap V(p)$ in $p$. Then the length 
of $p$ (i.e. $d_{G_x}(u, v)$) is equal to $d_{G_{x \bullet i}}(u, s) + d_{G_x}(s, s') + d_{G_{x \bullet i}}(s', v)$. 
By Lemma~\ref{lma:distH}, $d_{G_x}(s, s') = d_{H_x}(s, s')$ holds. Thus we have
$d_{G_x}(u, v) = d_{G_{x \bullet i}}(u, s) + d_{H_x}(s, s') + d_{G_{x \bullet i}}(s', v)$. 
\end{proof}

\begin{rtheorem}{thm:dlabel}
Let $G = (V(G), E(G))$ be any directed graph with edge cost function $c:E(G) \to \mathbb{N}$.
Then there exists a randomized CONGEST algorithm that solves $\textsf{DL}$ in 
$\tilde{O}(\TW{G}^2D + \TW{G}^5)$ rounds with probability at least $1 - 1/n^{9}$. The label size of each node 
is $O(\TW{G}^2\log^2n)$ bits. 
\end{rtheorem}

\begin{proof}
The correctness of the algorithm follows from Lemmas~\ref{lma:distH} and \ref{lma:distUpdate}.
Except for the recursive calls, the running time of the algorithm above is dominated by the broadcast operations in steps 1 and 3, which are implemented by BCT($h$) (according to Corollary~\ref{corol:multibroadcast}). 
Since we utilize the algorithm of Theorem~\ref{thm:treedecomposition}, the width of the given tree decomposition
is $\tilde{O}(\TW{G}^2)$, and thus the amount of information broadcast in steps 1 and 3 is 
$\tilde{O}(\TW{G}^4)$ bits. The running time for the task
is $\tilde{O}(\TW{G} D + \TW{G}^5)$. Since the recursion depth is bounded by the height of $T$, i.e., $O(\log n)$, 
the running time is $\tilde{O}(\TW{G}D + \TW{G}^5)$ rounds.
Combining this with the running time of our tree decomposition algorithm ($\tilde{O}(\TW{G}^2D + \TW{G}^{3})$ rounds), 
the total running time is bounded by $\tilde{O}(\TW{G}^2D + \TW{G}^5)$ rounds.
\end{proof}

\section{Finding Stateful Walks}
\label{appendix:statefulwalk}

\subsection{Proof of Lemma~\ref{lma:shorteststatefulwalk}}
\begin{rlemma}{lma:shorteststatefulwalk}
Let $G = (V(G), E(G), \gamma_G, f)$ be any labeled graph, edge-cost function 
$c:E(G) \to \mathbb{N}$, and $C \subseteq W_G$ be a stateful walk constraint with associated triple 
$(Q, \Mem, \boldsymbol{\Trans})$. There exists a walk $w$ of weighted length $x$ from $s$ to $t$ with state $q$ 
($s, t \in V(G)$, $q \in Q \setminus \{\perp\}$) in $C$ 
if and only if there exists a walk $w'$ of weight $x$ from $(s, \Auxinit)$ to $(t, q)$ in $G_C$. 
\end{rlemma}

\begin{proof}
We only prove the direction $\Rightarrow$. The proof for direction $\Leftarrow$ is easily obtained by inverting the argument below. 
Let $w = e_1, e_2, \dots, e_{x}$. By conditions 1 and 3 of stateful walk constraints,
any prefix $w_i$ of $w$ up to $e_i$ is also contained in $C$. We denote $u_0 = s$,
$u_i = \gamma_G(e_i)[1]$ for $i \in [1, x]$, $q_0 = \Auxinit$, and $q_i = \Mem(w_i)$ for 
$i \in [1, x]$. Condition 2 of the stateful walk constraints implies $\Trans_{e}(q_i) = q_{i+1}$. Hence we have an edge $e'_i = ((u_{i-1}, q_{i-1}), (u_i, q_i))$ for each 
$i \in [1, x]$
in $E(G_C)$. That is, $G_C$ contains a walk $w' = e'_1, e'_2, \dots e'_x$  
from $(s, \Auxinit)$ to $(t, q)$. Since each $e'_i$ has the same weight as $e_i$, the weighted length of $w'$ 
is equal to that of $w$, i.e., $x$.
\end{proof}

\section{Exact Bipartite Maximum Matching in Low-Treewidth Graphs}
\label{appendix:matching}

\begin{rtheorem}{thm:matching}
There exists a randomized CONGEST algorithm that computes the maximum matching for any bipartite graph $G$ 
in $\tilde{O}(\TW{G}^{4}D + \TW{G}^7)$ rounds whp.
\end{rtheorem}
\begin{proof}
The algorithm first finds a $O(1)$-balanced vertex separator $S$ of size $O(\TW{G}^2)$, and constructs the
maximum matching for each connected component of $G - S$. If the component is sufficiently small (i.e., $O(\TW{G}^2)$ vertices), the algorithm computes the maximum matching in the centralized fashion (as the algorithm of
Theorem~\ref{thm:dlabel}). Let $S = 
\{s_1, s_2, \dots, s_k\}$, and $S_i = \{s_i, s_{i+1}, \dots, s_k\}$. The algorithm sequentially adds 
vertices in $S$ to $G - S$ and updates the matching. Assume that the maximum matching of 
$G - S_i$ has been computed. By Proposition~\ref{prop:maximummatching}, it suffices to find an augmenting 
path from $s_{i}$ in $G - S_{i+1}$ to obtain the maximum matching of $G - S_{i+1}$, which can be done by
solving $\mathsf{CDL}(\Ccolor{2})$. If an augmenting path is found, the matching is updated locally. Otherwise, 
the maximum matching of $G - S_{i}$ is also the maximum matching of $G - S_{i+1}$.

The recursive construction for each connected component in $G- S$ is performed in parallel. 
Since the algorithm of Lemma~\ref{lma:balancedseparatorconstruction} supports the parallel 
computation of balanced separators for a collection of disjoint subgraphs, one can compute the balanced
separators for each connected component of $G - S$ in $\tilde{O}(\TW{G}^4D + \TW{G}^7)$ rounds.
The construction of $\mathsf{CDL}(\Ccolor{2})$ for each connected component is done by applying the 
algorithm of Theorem~\ref{thm:dlabel} for the entire graph $G$ while assigning all edges incident to 
a vertex in $S$ with cost $\infty$. The construction of augmenting paths in each connected component of
$G-S$ is also processed in parallel, because the algorithm of Corollary~\ref{corol:pathconstruction}
is implemented by the communication primitives of Appendix.~\ref{appendix:subgraphoperation} and thus it can
be executed in all connected components simultaneously. Consequently, the running time of the algorithm at each
recursion level is dominated by that for $O(\TW{G}^2)$ times of augmenting path findings, which takes 
$\tilde{O}(\TW{G}^4D + \TW{G}^7)$ rounds. The depth of the recursion is bounded by $O(\log n)$.
\end{proof}

\section{Computing Girth}
\label{appendix:girth}

As mentioned in Section~\ref{sec:girth}, we focus on the computation of the girth for undirected and weighted graphs. We first present the proof of Lemma~\ref{lma:count-one-girth}.
\begin{rlemma}{lma:count-one-girth}
Any shortest exact count-1 walk $w$ starting and terminating at the same vertex $v$ contains a simple cycle, and thus the weighted length of $w$ is at least $g$. 
\end{rlemma}

\begin{proof}
It suffices to show that if $w$ is not simple, $w$ contains a simple cycle (this obviously implies that 
$g \leq |w|$). Let $w = u_0, e_1, u_1, e_2, u_2, \dots, e_{\ell}, u_{\ell}$ be a shortest exact count-1 walk from $v$ to $v$ (i.e., $u_0 = u_\ell = v$), and assume $e_k$ has label one. Assume $w$ is not simple, and let $w' = u_0, e_1, u_1, e_2, u_2, \dots, e_j, u_j$ be the shortest non-simple prefix of $w$, where $u_i = u_j$ holds for some $0 \leq i < j$. The only possible case that $w'$ does not contain a simple cycle is
the situation of $e_{i+1} = e_j$. Suppose for contradiction that this situation applies. Then $e_{i+1}$ (= $e_j$) cannot have label one because $w$ contains exactly one label-one edge. Hence 
$i, j < k$ or $i, j \geq k$ holds. However, we can obtain a shorter exact count-1 walk from $v$ to $v$ by skipping the subwalk from $u_i$ to $u_j$, which contradicts that $w$ is the shortest exact count-1 walk from $v$ to $v$.
\end{proof}

Let $C = e_1, e_2, \dots, e_g$ be any shortest cycle in $G$, and assume that exactly one edge 
$e_i \in E(C)$ satisfies $f(e_i) = 1$. The algorithm for $\Ccount{1}$-distance labeling allows each node 
$u$ to compute the weighted length of the shortest count-1 walk from $u$ to $u$ (this can be computed 
locally from the output $\Slabel_{G,\Ccount{1}}(u)$). Letting $g(u)$ be the weighted length computed by 
a node $u \in V(G)$, the output value $g(v)$ of $v \in V(C)$ is equal to $g$. By Lemma~\ref{lma:count-one-girth}, one can compute the girth $g = \min_{u \in V(G)} g(u)$ by standard 
aggregation over all nodes. The remaining issue is how to find a function $f$ satisfying the above 
assumption. This can be solved by a probabilistic label assignment.
Let $E_C(G)$ be the set of edges in $G$ which belong to at least one shortest cycle in $G$,
and $c$ be the value of power 2, satisfying $|E_C(G)| \leq c < 2|E_C(G)|$. The value $c$ 
is obtained by a standard doubling estimation technique. We describe below the entire structure of our algorithm.
\begin{itemize}
    \item For each $\hat{c} = 1, 2, 4, \dots, 2^{\lceil \log n^2 \rceil + 1}$, repeat the following trial (steps 1 and 2) $O(\log n)$ times.
    \begin{enumerate}
    \item With probability $1/(3\hat{c})$, assign each edge with label one.
    \item Construct $\Slabel_{G, \Ccount{1}}(v)$ at each node $v \in V(G)$, and each node $v$ computes the length of the shortest exact count-1 walk from $v$ to $v$.
    \end{enumerate}
    \item The output of the algorithm is the minimum of all computed values over all nodes and trials.
\end{itemize}
To prove the theorem below it is sufficient to prove the correctness of the algorithm. 
\begin{rtheorem}{thm:girth}
There exists a randomized CONGEST algorithm that given a directed and weighted
graph $G$, computes its girth, $g$, in $\tilde{O}(\TW{G}^{2}D + \TW{G}^5)$ rounds whp.
\end{rtheorem}

\begin{proof}
By Lemma~\ref{lma:count-one-girth}, the output of the algorithm is obviously lower bounded by $g$. Thus, it suffices to show that the value $g$ is outputted at a node in some trial with high probability. 
Let $X_{\hat{c}}$ be the event that exactly one edge in $E_C(G)$ has label one at trial $\hat{c}$. 
At the trial $\hat{c} = c$, we have the following bound.
\begin{align*}
\Pr[X = \mathrm{true}] 
&\geq |E_C(G)| \cdot \frac{1}{3c} \left(1 -\frac{1}{3c}\right)^{|E_C(G)|} \\
&\geq \frac{1}{18}.
\end{align*} 
This implies that at the trial $\hat{c} = c$ some node in a shortest cycle computes the value $g$ with 
a constant probability. Repeating the trials $O(\log n)$ times sufficiently amplifies the success probability 
of the algorithm. 
\end{proof}


\end{document}